\documentclass[10pt,a4paper]{article}
\usepackage[utf8]{inputenc}
\usepackage{amsmath, amsthm, amssymb}
\usepackage{enumerate}
\usepackage{cite}
\bibliographystyle{JHEP}  

\pagestyle{plain}
\setlength{\textwidth}{150mm}
\setlength{\hoffset}{-1.2cm}
\setlength{\parskip}{1ex plus 0.5ex minus 0.2ex}

\newtheorem{thm}{Theorem}[section]

\newtheorem{tvrzx}[thm]{Proposition}
\newenvironment{tvrz}{\begin{tvrzx}}{\smallskip\end{tvrzx}}

\newtheorem{lemmax}[thm]{Lemma}
\newenvironment{lemma}{\begin{lemmax}}{\smallskip\end{lemmax}}

\newtheorem{theoremx}[thm]{Theorem}

\theoremstyle{definition}
\newtheorem{definicex}[thm]{Definition}
\newenvironment{definice}{\begin{definicex}}{\medskip\end{definicex}}

\theoremstyle{remark}
\newtheorem{remx}[thm]{Remark}
\newenvironment{rem}{\begin{remx}}{\medskip\end{remx}}

\theoremstyle{definition}
\newtheorem{examplex}[thm]{Example}

\def\R{\mathbb{R}}

\def\T{\mathbb{T}}
\def\<{\langle}
\def\>{\rangle}
\def\~{\widetilde}
\def\^{\wedge}

\def\g{\mathfrak{g}}

\def\io{\mathit{i}}
\def\G{\mathcal{G}}
\def\H{\mathcal{H}}
\def\B{\mathcal{B}}
\def\P{\mathcal{P}}

\def\T{\mathcal{T}}
\def\nc{\bullet}
\def\c{\circ}
\def\Fi{{\text{\bfseries{\scriptsize{I}}}}}
\def\Fii{\text{\bfseries{\scriptsize{II}}}}
\def\NB{N}
\newcommand{\bi}[1]{ {\mathbf{ #1 }} }
\newcommand{\TM}[1]{\Lambda^{#1}TM}
\newcommand{\cTM}[1]{\Lambda^{#1}T^{\ast}M}
\newcommand{\vf}[1]{\mathfrak{X}^{#1}(M)}
\newcommand{\df}[1]{\Omega^{#1}(M)}
\newcommand{\ppx}[2]{\frac{\partial{#1}}{\partial{x^{#2}}}}
\newcommand{\ppy}[2]{\frac{\partial{#1}}{\partial{y^{#2}}}}

\newcommand{\bm}[4]{\begin{pmatrix}#1 & #2 \\ #3 & #4 \end{pmatrix}}
\newcommand{\bv}[2]{\begin{pmatrix}#1\\#2\end{pmatrix}}

\newcommand{\Fud}[2]{ {\widehat{F}{}^{#1}}_{#2} }
\newcommand{\Fdu}[2]{ {\widehat{F}_{#1}}{}^{#2} }
\newcommand{\cv}[2]{ \begin{pmatrix}#1\\#2\end{pmatrix} }

\newcommand{\px}[1]{{\~{\partial X}}^{#1}}
\DeclareMathOperator{\diag}{diag}
\DeclareMathOperator{\tr}{tr}
\DeclareMathOperator{\Img}{Im}
\DeclareMathOperator{\Diff}{Diff}
\begin{document}
\begin{flushright}
\today
\end{flushright}
\vspace{0.7cm}
\begin{center}

\baselineskip=13pt {\Large \bf{Extended generalized geometry and a DBI-type effective action for branes ending on branes}\\}
 \vskip1.3cm
 Branislav Jur\v co$^1$, Peter Schupp$^2$, Jan Vysoký$^{2,3}$\\
 \vskip0.6cm
$^{1}$\textit{Charles University in Prague, Faculty of Mathematics and Physics, Mathematical Institute 
\\ Prague 186 75, Czech Republic, jurco@karlin.mff.cuni.cz}\\
\vskip0.3cm
$^{2}$\textit{Jacobs University Bremen\\ 28759 Bremen, Germany, p.schupp@jacobs-university.de}\\
\vskip0.3cm
$^{3}$\textit{Czech Technical University in Prague\\ Faculty of Nuclear Sciences
and Physical Engineering\\ Prague 115 19, Czech Republic, vysokjan@fjfi.cvut.cz}\\
\vskip0.5cm
\end{center}
\vspace{0.4cm}

\begin{abstract}
Starting from the Nambu-Goto bosonic membrane action, we develop a geometric description suitable for $p$-brane backgrounds. With tools of generalized geometry we derive the pertinent generalization of the string open-closed relations  to the $p$-brane case. Nambu-Poisson structures are used in this context to generalize the concept of semiclassical noncommutativity of $D$-branes governed by a Poisson tensor. We find a natural description of the correspondence of recently proposed commutative and noncommutative versions of an effective action for $p$-branes ending on a $p'$-brane. We calculate the power series expansion of the action in background independent gauge. Leading terms in the double scaling limit are given by a generalization of a (semi-classical) matrix model.
\end{abstract}

{\textit{Keywords:}} Sigma Models, p-Branes, M-Theory, Bosonic Strings, Nambu-Poisson Structures, Courant-Dorfman Brackets, Generalized Geometry, Noncomutative Gauge Theory.

\begin{flushright}
\textit{Dedicated to the memory of Julius Wess and Bruno Zumino}
\end{flushright}
\section{Introduction}

Among the most intriguing features of fundamental theories of extended objects are novel types of symmetries  and concomitant generalized notions of geometry. Particularly interesting examples of these symmetries are T-duality in closed string theory and the equivalence of commutative/noncommutative descriptions in open string theory. These symmetries have their natural settings in generalized geometry and noncommutative geometry. Low energy effective theories link the fundamental theories to potentially observable phenomena in (target) spacetime. Interestingly, the spacetime remnants of the stringy symmetries can fix these effective theories essentially uniquely without the need of actual string computations: ``string theory with no strings attached.''

The main objective of this paper is to study this interplay of symmetry and geometry in the case of higher dimensional extended objects (branes). 
More precisely, we intended to extend, clarify and further develop the construction outlined in \cite{Jurco:2012yv} that tackles the quest to find an all-order effective action for a system of multiple $p$-branes ending on a $p'$-brane.
The result for the case of open strings ending on a single D-brane is well known: The Dirac-Born-Infeld action provides an effective description to all orders in $\alpha'$ \cite{Fradkin:1985ys,Leigh:1989jq,Tseytlin:1999dj}. The way that this effective action has originally been derived from first principles in string theory is rather indirect: The effective action is determined by requiring that its equations of motion double as consistency conditions for an anomaly free world sheet quantization of the fundamental string. A more direct target space approach can be based on T-duality arguments.
Moreover, there is are equivalent commutative and non-commutative descriptions \cite{Seiberg:1999vs}, where the equivalency condition fixes the action essentially uniquely \cite{Cornalba:2000ua,Jurco:2001my}. This ``commutative-noncommutative duality" has been used also to study the non-abelian DBI action \cite{Terashima:2000ej,Cornalba:2000ua}. In the context of the M2/M5 brane system a generalization has been proposed in\cite{Chen:2010br}. 

In this paper, we focus only on the bosonic part of the action. The main idea of \cite{Jurco:2012yv}, inspired by \cite{Chen:2010br}, was to introduce open-closed membrane relations, and a Nambu-Poisson map which can be used to relate ordinary higher gauge theory to a new Nambu gauge theory \cite{Ho:2013iia,Jurco:2014aza,Ho:2013opa,Ho:2013paa}. See also the work of P.-M. Ho et al. \cite{Ho:2007vk,Ho:2008nn,Ho:2008ve,Ho:2009zt} and K. Furuuchi et al. \cite{Furuuchi:2009zx,Furuuchi:2010sp} on relation of M2/M5 to Nambu-Poisson structures.
It turns out that the requirement of ``commutative-noncommutative duality" determines the bosonic part of the effective action essentially uniquely. Interesting open problems are to determine, in the case of a M5-brane, the form of the full supersymmetric action and to check consistency with $\kappa$-symmetry and (nonlinear) selfduality.

Nambu-Poisson structures were first considered by Y. Nambu already in 1973 \cite{1973PhRvD...7.2405N}, and generalized and axiomatized more then 20 years later by L. Takhtajan \cite{Takhtajan:1993vr}. The axioms of Nambu-Poisson structures, although they seem to be a direct generalization of Poisson structures, are in fact very restrictive. This was already conjectured in the pioneering paper \cite{Takhtajan:1993vr} and proved three years later in \cite{decomposability,Gautheron}. For a modern treatment of Nambu-Poisson structures see \cite{hagiwara,2011ScChA..54..437B,2010arXiv1003.1004Z}.

Matrix-model like actions using Nambu-Poisson structures are a current focus of research (see e.g. \cite{Park:2008qe,Sato:2010ca,DeBellis:2010sy, Chu:2011yd}) motivated by the works of \cite{Basu:2004ed,SheikhJabbari:2005mf,PhysRevD.75.045020,Bagger:2007jr,Gustavsson:2007vu} and others. See also \cite{Maldacena:2002rb, SheikhJabbari:2004ik} for further reference. 
Among the early approaches, the one closest to ours is the one of \cite{Cederwall:1997gg,Bao:2006ef}, which uses $\kappa$-symmetry as a guiding principle and features a non-linear self-duality condition. It avoids the use of an auxiliary chiral scalar \cite{Pasti:1997gx} with its covariance problems following a suggestion of \cite{Witten:1996hc}. For these and alternative formulations, e.g., those of \cite{Howe:1996yn}, based on superspace embedding and $\kappa$-symmetry, we refer to the reviews \cite{Sorokin:1999jx,Simon:2011rw}.

Generalized geometry was introduced by N. Hitchin in \cite{Hitchin:2004ut,2005math......8618H,2006CMaPh.265..131H}. It was further elaborated in \cite{Gualtieri:2003dx}. Although Hitchin certainly recognized the possible importance for string backgrounds, and commented on it in \cite{Hitchin:2004ut}, this direction is not pursued there.
Recently, a focus of applications of generalized geometry, is superstring theory and supergravity. Here we mention closely related work \cite{Coimbra:2011nw,Coimbra:2012af}. The role of generalized geometries in M-theory was previously examined by C.M. Hull in \cite{Hull:2007zu}.
A further focus is the construction of the field theories based on objects of generalized geometry. This is mainly pursued in \cite{Kotov:2004wz,2005JGP....54..400B} and in \cite{Kotov:2010wr}, see also \cite{Zucchini:2005rh}. Generalized geometry (mostly Courant algebroid brackets) was also used in relation to worldsheet algebras and non-geometric backgrounds. See, for example, \cite{Alekseev:2004np,Bonelli:2005ti,2011JHEP...03..074E} and \cite{Halmagyi:2009te,Halmagyi:2008dr}. One should also mention the use of generalized geometry in the description of T-duality, see\cite{2011arXiv1106.1747C}, or the lecture notes \cite{Bouwknegt:2010zz}. An outline of the relation of T-duality with generalized geometry can be found in \cite{Grana:2008yw}. Finally, there is an interesting interpretation of D-branes in string theory as Dirac structures of generalized geometry in \cite{Asakawa:2012px,Asakawa:2014eva}.
Finally, in \cite{Jurco:2013upa}, we have used generalized geometry to describe the relation between string theory and non-commutative geometry.

This paper is organized as follows:

In section \ref{sec_membrane}, we review  basic facts concerning classical membrane actions. In particular, we recall how gauge fixing  can be used to find a convenient form of the action.  We show that the corresponding Hamiltonian density is a fiberwise metric on a certain vector bundle. We present background field redefinitions, generalizing the well-known open-closed relations of Seiberg and Witten.

In section \ref{sec_nambu}, we describe the sigma model dual to the membrane action. It is a straightforward generalization of the non-topological Poisson sigma model of the $p=1$ case.

Section \ref{sec_geometry} sets up the geometrical framework for the field redefinitions of the previous sections. An extension  of  generalized geometry is used to describe open-closed relations as an orthogonal transformation of the generalized metric on the vector bundle $TM \oplus \TM{p} \oplus T^{\ast}M \oplus \cTM{p}$. Compared to the $p=1$ string case, we find the need for a second ``doubling'' of the geometry. The split in $TM$ and $\TM{p}$ has its origin in  gauge fixing of the auxiliary metric on the \mbox{$p+1$}-dimensional brane world volume and the two parts are related to the temporal and spatial worldvolume directions. To the best of our knowledge, this particular structure $W \oplus W^*$ with $W = TM \oplus \TM{p}$ has not been considered in the context of M-theory before.

In section \ref{sec_gaugeF}, we introduce the $(p+1)$-form gauge field $F$ as a fluctuation of the original membrane background. We show that this can be viewed as an orthogonal transformation of the generalized metric describing the membrane backgrounds. On the other hand, the original background can equivalently be described in terms of open variables and this description can be extended to include fluctuations. Algebraic manipulations are used to identify the pertinent background fields. The construction requires the introduction of a target manifold diffeomorphism, which generalizes the (semi-classical) Seiberg-Witten map from the string to the $p>1$ brane case.

This map is explicitly constructed in section \ref{sec_SWmap} using a generalization of Moser's lemma. The key ingredient is the fact that $\Pi$, which appears in the open-closed relations, can be chosen to be a Nambu-Poisson tensor. 
Attention is paid to a correct mathematical formulation of the analogue of a symplectic volume form for Nambu-Poisson structures.

Based on the results of the previous sections, we prove in  section \ref{DBI} the equivalence of a commutative and semiclassically noncommutative DBI action. We present various forms of the same action using determinant identities of block matrices.  Finally, we compare our action to existing proposals for the M5-brane action.

In section \ref{sec_BIG}, we show that the Nambu-Poisson structure $\Pi$ can be chosen to be the pseudoinverse of the $(p+1)$-form background field $C$. In analogy with the $p=1$ case, we call this choice ``background independent gauge''. However, for $p>1$ we have to consider both algebraic and geometric properties of $C$ in order to obtain a well defined Nambu-Poisson tensor $\Pi$. The generalized geometry formalism developed in section \ref{sec_geometry} is used to derive the results in a way that looks formally  identical to the much easier $p=1$ case. (This is a nice example of the power of generalized geometry.)

In section \ref{sec_NCdirections}, we introduce a convenient splitting of the tangent bundle and rewrite all membrane backgrounds in coordinates adapted to this splitting using a block matrix formalism. We introduce an appropriate generalization of the double scaling limit of \cite{Seiberg:1999vs} to cut off the series expansion of the effective action.

In the final section \ref{sec_matrix} of the paper, we use background independent gauge, double scaling limit, and coordinates adapted to the non-commutative directions to expand the DBI action up to first order in the scaling parameter. It turns out that this double scaling limit cuts off the infinite series in a physically meaningful way. We identify a possible candidate for the generalization of a matrix model.
For a discussion of the underlying Nambu-Poisson gauge theory we refer to \cite{Jurco:2014aza}.

\section{Conventions}

Thorough the paper, $p>0$ is a fixed positive integer. Furthermore, we assume that we are given a  $(p+1)$-dimensional compact orientable
worldvolume $\Sigma$ with local coordinates
$(\sigma^{0}, \dots, \sigma^{p})$. We may interpret $\sigma^{0}$ as a time parameter. Integration over all coordinates is
indicated by $\int d^{p+1}\sigma$, whereas the integration over
space coordinates $(\sigma^{1}, \dots, \sigma^{p})$ is indicted
as $\int d^{p}\sigma$. Indices corresponding to the worldvolume
coordinates are denoted by Greek characters $\alpha, \beta, \dots$, etc.
As usual, $\partial_{\alpha} \equiv \frac{\partial}{\partial
\sigma^{\alpha}}$.
We assume that the $n$-dimensional target manifold $M$ is equipped
with a set of local coordinates $(y^{1},\dots,y^{n})$. We denote the
corresponding indices by lower case Latin characters $i,j,k, \dots$, etc.
Upper case Latin characters $I,J,K, \dots$, etc. will denote strictly
ordered $p$-tuples of indices corresponding to $(y)$ coordinates,
e.g., $I = (i_{1}, \dots, i_{p})$ with $1\leq i_{1} < \dots <
i_{p}\leq n$. We use the shorthand notation
$\partial_{J} \equiv \ppy{}{j_{1}} \^ \dots \^ \ppy{}{j_{p}}$ and
$dy^{J} = dy^{j_{1}} \^ \dots \^dy^{j_{p}}$. The degree $q$-parts of the exterior
algebras of vector fields $\vf{}$ and forms  $\Omega(M)$ are denoted by $\vf{q}$ and $\df{q}$, respectively.

Where-ever a metric $g$ on $M$ is introduced, we assume that it is positive definite, i.e.,  $(M,g)$ is a Riemannian manifold. With this choice we will find a natural interpretation of membrane backgrounds in terms of generalized geometry. For any metric tensor $g_{ij}$, we denote, as usually, by $g^{ij}$
the components of the inverse contravariant tensor.

We use the following convention to handle $(p+1)$-tensors on $M$.
Let $B \in \df{p+1}$ be a $(p+1)$-form on $M$. We define the
corresponding vector bundle map $B_{\flat}: \TM{p} \rightarrow
T^{\ast}M$ as $B_{\flat}(Q) = B_{iJ} Q^{J} dy^{i}$, where $Q = Q^{J}
\partial_{J}$. We do not distinguish between vector bundle morphisms
and the induced $C^{\infty}(M)$-linear maps of smooth sections. We
will usually use the letter $B$ also for the $\binom{n}{p} \times n$
matrix of $B_{\flat}$ in the local basis $\partial_{J}$ of $\vf{p}$
and $dy^{i}$ of $\df{1}$, that is $(B)_{i,J} = \< \partial_{i},
B_{\flat}(\partial_{J}) \>$. Similarly, let $\Pi \in \vf{p+1}$; the
induced map $\Pi^{\sharp}: \cTM{p} \rightarrow TM$ is defined as
$\Pi^{\sharp}(\xi) = \Pi^{iJ} \xi_{J} \partial_{i}$ for $\xi =
\xi_{J} dy^{J}$. We use the letter $\Pi$ also for the $\binom{n}{p}
\times n$ matrix of $\Pi^{\sharp}$, that is $(\Pi)^{i,J} = \<dy^{i},
\Pi^{\sharp}(dy^{J})\>$. Clearly, with these conventions $(B)_{i,J} =
B_{iJ}$ and $(\Pi)^{i,J} = \Pi^{iJ}$.

Let $X: \Sigma \rightarrow M$ be a smooth map. We use the notation
$X^{i} = y^{i} \circ X$, and correspondingly $dX^{i} = d(X^{i}) =
X^{\ast}(dy^{i})$. Similarly, $dX^{J} = X^{\ast}(dy^{J})$. We reserve
the symbol $\px{J}$ for spatial components of the $p$-form $dX^{J}$,
that is, $\px{J} = (dX^{J})_{1 \dots p}$. We define the generalized
Kronecker delta $\delta_{i_{1} \dots i_{p}}^{j_{1}
\dots j_{p}}$ to be  $+1$ whenever the top $p$-index constitutes an
even permutation of the bottom one, $-1$ if for the odd permutation,
and $0$ otherwise. In other words, $\delta_{i_{1} \dots i_{p}}^{j_{1} \dots j_{p}} = p! \cdot \delta_{[i_{1}}^{[j_{1}} \dots \delta_{i_{p}]}^{j_{p}]}$. We use the convention $\epsilon_{i_{1} \dots
i_{p}} \equiv \epsilon^{i_{1} \dots i_{p}} \equiv \delta_{i_{1}
\dots i_{p}}^{1 \dots p} \equiv \delta_{1 \dots p}^{i_{1} \dots
i_{p}}$. Thus, in this notation we  have $\px{I} =
\partial_{l_{1}}X^{i_{1}} \cdots
\partial_{l_{p}}X^{i_{p}} \epsilon^{l_{1} \dots l_{p}}$.

\section{Membrane actions} \label{sec_membrane}

The most straightforward generalization of the relativistic string
action to higher dimensional world volumes is the Nambu-Goto $p$-brane action, simply measuring the
volume of the $p$-brane:
\begin{equation} \label{def_ngaction}
S_{NG}[X] = T_{p} \int d^{p+1}\sigma \sqrt {\det{( \partial_{\alpha} X^{i} \partial_{\beta} X^{j} g_{ij} )}},
\end{equation}
where $g_{ij}$ are components of the positive definite target  space metric $g$, and
$X: \Sigma \rightarrow M$ is the $n$-tuple of scalar fields
describing the $p$-brane. In a similar manner as for the string action,
one can introduce an auxiliary Riemannian metric $h$ on $\Sigma$ and
find the classically equivalent Polyakov action of the $p$-brane:

\begin{equation} \label{def_polyakov}
 S_{P}[X,h] = \frac{T'_{p}}{2} \int d^{p+1}\sigma \sqrt{h} \Big( h^{\alpha \beta} \partial_{\alpha}X^{i} \partial_{\beta}X^{j} g_{ij} - (p-1)\lambda \Big),
 \end{equation}
where $\lambda > 0$ can be chosen arbitrarily (but fixed), and $T'_{p}
= \lambda^{\frac{p-1}{2}} T_{p}$. Using the equations of motion for
$h^{\alpha \beta}$'s:
\begin{equation} \label{eq_eqmhab}
\frac{1}{2} h_{\alpha \beta} \big( h^{\gamma \delta} g_{\gamma \delta} - (p-1)\lambda \big) = g_{\alpha \beta},
\end{equation}
where $g_{\alpha \beta} = [X^{\ast}(g)]_{\alpha \beta} \equiv
\partial_{\alpha}X^{i} \partial_{\beta}X^{j} g_{ij}$, in $S_{P}$,
one gets back to (\ref{def_ngaction}).  In the rest of the paper, we will
choose $T_{p} \equiv 1$. Using reparametrization invariance, one
can always (at least locally) choose coordinates $(\sigma^{0},
\dots, \sigma^{p})$ such that $h_{00} = \lambda^{p-1} \det
h_{ab}$, $h_{0a} = 0$, where $h_{ab}$ denotes the space-like
components of the metric. 
In this gauge, the first term in
action (\ref{def_polyakov}) splits into two parts, one of them
containing only the spatial derivatives of $X^{i}$ and the spatial
components of the metric $h$. Using now the equations of motion for
$h_{ab}$, one gets the gauge fixed Polyakov action\footnote{The gauge constraints on $h_{a0}$, $h_{0b}$ and $h_{00}$ imply an energy-momentum tensor with vanishing components $T_{a0} = T_{0a}$ and $T_{00}$. These constraints must be considered along with the equations of motion of the action (\ref{def_polyakovgf}), to ensure equivalence with the actions (\ref{def_ngaction}) and (\ref{def_polyakov}). As discussed in \cite{Bars}, the subgroup of the diffeomorphism symmetries that remains after gauge fixing is a symmetry of the gauge-fixed p-brane action (\ref{def_polyakovgf}) and also transforms the pertinent components of the energy-momentum tensor into one another (even if they are not set equal to zero). The constraints can thus be consistently imposed at the level of
states.}
\begin{equation} \label{def_polyakovgf}
S_{P}^{gf}[X] = \frac{1}{2} \int d^{p+1}\sigma
\big\{\partial_{0}X^{i} \partial_{0}X^{j} g_{ij} +
\det{(\partial_{a}X^{i} \partial_{b}X^{j} g_{ij})} \big\}.
\end{equation}
The second term can be rewritten in a more convenient form once we
define
\begin{equation} \label{def_tensorofg}
\~g_{IJ} = \sum_{\pi \in \Sigma_{p}} sgn(\pi) g_{i_{\pi(1)} j_{1}}
\dots g_{i_{\pi(p)} j_{p}} \equiv  \delta^{k_{1} \dots k_{p}}_{I}
g_{k_1 j_1} \dots g_{k_p j_p}.
\end{equation}
Using this notation, one can write
\begin{equation} \label{def_polyakovgf2}
S_{P}^{gf}[X] = \frac{1}{2} \int d^{p+1}\sigma \big\{ \partial_{0}X^{i} \partial_{0}X^{j} g_{ij} +
\px{I} \px{J} \~g_{IJ} \big\}.
\end{equation}
From now on, assume that $g$ is a positive definite metric on $M$.
Note that from the symmetry of $g$ it follows that $\~g_{IJ} =
\~g_{JI}$.  We can view $\~g$ as a fibrewise bilinear form on the vector
bundle $\TM{p}$. Moreover, at any $m \in M$, one can define the
basis $(E_{I})$ of $\Lambda^{p} T_{m}M$ as $E_{I} = e_{i_{1}} \^
\dots \^ e_{i_{p}}$, where $(e_{1}, \dots, e_{n})$ is the
orthonormal basis for the quadratic form $g(m)$ at $m \in M$. In this
basis one has $\~g(m)(E_{I},E_{J}) = \delta_{I,J}$, which shows that $\~g$ is a positive definite fibrewise metric on $\TM{p}$.

For any $C \in \df{p+1}$, we can add the following coupling term to the action:
\begin{equation}
S_{C}[X] = - i \int_{\Sigma} X^{\ast}(C) = -i \int d^{p+1}\sigma \partial_{0}X^{i} \px{J} C_{iJ}.
\end{equation}
The resulting gauge fixed Polyakov action $S_{P}^{tot}[X] = S_{P}^{gf}[X] + S_{C}[X]$ has the form
\begin{equation} \label{def_actiontot}
S_{P}^{tot}[X] = \frac{1}{2} \int d^{p+1}\sigma \big\{ \partial_{0}X^{i}
\partial_{0}X^{j} g_{ij} +
\px{I} \px{J} \~g_{IJ} - 2i \partial_{0}X^{i} \px{J} C_{iJ} \big\}.
\end{equation}
This can be written in the compact matrix form by defining an ($n +
\binom{n}{p}$)-row vector 
\[
\Psi = \bv{i\partial_{0}X^{i}}{\px{J}} .
\]
The action then has the block matrix form
\begin{equation} \label{def_actiontotmatrix}
S_{P}^{tot}[X] = \frac{1}{2} \int d^{p+1}\sigma \{ \Psi^{\dagger} \bm{g}{C}{-C^{T}}{\~g} \Psi \}.
\end{equation}

From now on, unless explicitly mentioned, we may assume that $\~g$
is not necessarily of the form (\ref{def_tensorofg}), i.e., $\~g$
can be any positive definite fibrewise metric on $\TM{p}$. Any
further discussions will, of course, be valid also for the special
case (\ref{def_tensorofg}). Since $g$ is non-degenerate, we can pass
from the Lagrangian to the Hamiltonian formalism and vice versa. The
corresponding Hamiltonian has the form
\begin{equation} \label{def_polyakovham}
H_{P}^{tot}[X,P] = -\frac{1}{2} \int d^{p}\sigma \bv{iP}{\px{}}^{T}
\bm{g^{-1}}{-g^{-1}C}{-C^{T}g^{-1}}{\~g + C^{T}g^{-1}C}
\bv{iP}{\px{}}.
\end{equation}
The expression  $\~g + C^{T}g^{-1}C$ in the Hamiltonian and a similar expression $g + C\~g^{-1}C^{T}$ play the role of ``open membrane metrics'' and
first appeared in the work of Duff and Lu \cite{dufflu} already in~1990. Hamilton densities for membranes have also been discussed around that time, see e.g. \cite{Bars}.\footnote{We believe that the Hamiltonian (\ref{def_polyakovham}) has been known, in this or a similar form, to experts for a long time but we were not able to trace it in even older literature, cf. \cite{Duff:1989tf} for the string case. More recently, the Hamiltonian as well as the open membrane metrics appeared, e.g., in \cite{Berman:2010is}. We thank D. Berman for bringing this paper to our attention.} The block matrix in the Hamiltonian can be viewed as  positive
definite fibrewise metric $\mathbf{G}$ on $T^{\ast}M \oplus \TM{p}$
defined on sections as

\begin{equation}\label{GenMet}
\mathbf{G}(\alpha + \bi{Q}, \beta + \bi{R}) = \bv{\alpha}{\bi{Q}}^{T}
\bm{g^{-1}}{-g^{-1}C}{-C^{T}g^{-1}}{\~g + C^{T}g^{-1}C}
\bv{\beta}{\bi{R}},
\end{equation}
for all $\alpha, \beta \in \df{1}$ and $\bi{Q},\bi{R} \in \vf{p}$. For $p=1$
and $\~g = g$, one gets exactly the inverse of the generalized metric corresponding
to a Riemannian metric $g$ and a $2$-form $C$. Note that, analogously to the $p=1$ case,
$\mathbf{G}$ can be written as a product of  block lower triangular,
diagonal and upper triangular matrices:
\begin{equation} \label{eq_gCdecomposition}
\mathbf{G} = \bm{1}{0}{-C^{T}}{1} \bm{g^{-1}}{0}{0}{\~g} \bm{1}{-C}{0}{1}.
\end{equation}

Before we proceed with our discussion of the corresponding Nambu
sigma models, let us introduce another parametrization of the
background fields $g$ and $C$. In analogy with the $p=1$ case, we
shall refer to $g$ and $C$ as to the closed background fields. Let
$\mathbf{A}$ denote the matrix in the action
(\ref{def_actiontotmatrix}), that is,
\begin{equation}\label{A}
\mathbf{A} = \bm{g}{C}{-C^{T}}{\~g}.
\end{equation}
This matrix is always invertible, explicitly:
\begin{equation} \label{eq_Ainverse}
\mathbf{A}^{-1} = \bm{(g + C\~g^{-1}C^{T})^{-1}}{-(g +
C\~g^{-1}C^{T})^{-1}C\~g^{-1}}{\~g^{-1}C^{T}(g +
C\~g^{-1}C^{T})^{-1}}{(\~g + C^{T}g^{-1}C)^{-1}}.
\end{equation}
Further, let us assume an arbitrary but fixed $(p+1)$-vector $\Pi
\in \mathfrak{X}^{p+1}(M)$ and consider a matrix $\mathbf{B}$ of the form

\begin{equation}
\begin{split} \label{def_Bmatrix}
\mathbf{B}   & = \bm{G}{\Phi}{-\Phi^{T}}{\~G}^{-1} + \bm{0}{\Pi}{-\Pi^{T}}{0} \\
             & = \bm{(G+\Phi \~G^{-1} \Phi^{T})^{-1}}{-(G+\Phi \~G \Phi^{T})^{-1}
             \Phi \~G^{-1} + \Pi}{\~G^{-1}\Phi^{T} (G + \Phi \~G^{-1} \Phi^{T})^{-1} -
             \Pi^{T}}{(\~G + \Phi^{T}G^{-1}\Phi)^{-1}}
\end{split}
\end{equation}
such that the equality $\mathbf{A}^{-1}=\mathbf{B}$, i.e.,
\begin{equation} \label{eq_occorrespondece0}
\bm{g}{C}{-C^{T}}{\~g}^{-1}=\bm{G}{\Phi}{-\Phi^{T}}{\~G}^{-1} +
\bm{0}{\Pi}{-\Pi^{T}}{0}
\end{equation}
holds. This generalization was introduced and used in \cite{Jurco:2012yv}. Again, in analogy with the case $p=1$, we will refer to $G$
and $\Phi$ as to the open backgrounds. More explicitly, we have the
following set of open-closed relations:
\begin{equation} \label{eq_occorrespondence1}
g + C\~g^{-1}C^{T} = G + \Phi \~G^{-1} \Phi^{T},
\end{equation}
\begin{equation} \label{eq_occorrespondence2}
\~g + C^{T} g^{-1} C = \~G + \Phi^{T} G^{-1} \Phi,
\end{equation}
\begin{equation} \label{eq_occorrespondence3}
g^{-1}C = G^{-1}\Phi - \Pi(\~G + \Phi^{T}G^{-1}\Phi),
\end{equation}
\begin{equation} \label{eq_occorrespondence4}
\Phi \~G^{-1} = C\~g^{-1} + (g + C\~g^{-1}C^{T})\Pi.
\end{equation}

For fixed $\Pi$, given $(g,\~g,C)$ there exist unique
$(G,\~G,\Phi)$ such that the above relations are fulfilled, and vice
versa. The explicit expressions are most directly seen from the
equality $\mathbf{A}= \mathbf{B}^{-1}$, again using the formula for
the inverse of the block matrix $\mathbf{B}$. In particular,

\begin{equation}
g^{-1} = (1 - \Phi \Pi^{T})^{T} G^{-1} (1 - \Phi \Pi^{T}) +
\Pi \~G \Pi^{T},
\end{equation}
\begin{equation}
\~g^{-1} = (1 - \Phi^{T} \Pi)^{T} \~G^{-1} (1 - \Phi^{T} \Pi)
+ \Pi^{T} G \Pi,
\end{equation}
and the explicit expression for $C$ can be found straightforwardly.
Obviously, the inverse relations are obtained simply by
interchanging $g \leftrightarrow G$,  $\~g \leftrightarrow \~G$, $C
\leftrightarrow \Phi$, and $\Pi \leftrightarrow -\Pi$.
Using these relations, we can write the action
(\ref{def_actiontotmatrix}) equivalently in terms of the open
backgrounds $G$, $\Phi$ and the (so far auxiliary) $(p+1)$-vector
$\Pi$.

In terms of the corresponding Hamiltonian (\ref{def_polyakovham}),
the above open-closed relations give just another factorization of
the matrix $\mathbf{G}$. This time we have
\begin{equation} \label{def_Gprime}
\mathbf{G} = \bm{1}{\Pi}{0}{1} \bm{1}{0}{-\Phi^{T}}{1}
\bm{G^{-1}}{0}{0}{\~G} \bm{1}{-\Phi}{0}{1} \bm{1}{0}{\Pi^{T}}{1}.
\end{equation}

In the sequel it will be convenient to distinguish the respective
expressions of  above introduced matrices $\mathbf{A}$ and
$\mathbf{G}$ in the closed and open variables. For the former we we
shall use $\mathbf{A_c}$ and $\mathbf{G_c}$ and for the latter we
introduce $\mathbf{A}_o$ and $\mathbf{G}_o$, respectively. Hence the
open-closed relations can be expressed either way:
$\mathbf{A} \equiv \mathbf{A}_c = \mathbf{A}_o \equiv \mathbf{B}^{-1}$ or
$\mathbf{G}_c=\mathbf{G}_o$. Note, that the latter form is just
equivalent to the statement about the decomposability of a 2x2 block
matrix with the invertible upper left block as a product of lower
triangular, diagonal, and upper triangular block matrices, the
triangular ones having unit matrices on the diagonal. Note that for
$p=1$ and $\~g = g$, the open-closed relations (see \cite{Seiberg:1999vs}) are usually written
simply as
\begin{equation} \label{eq_p1occorrespondence}
\frac{1}{g+C} = \frac{1}{G+\Phi} + \Pi.
\end{equation}
To conclude this section, note that taking the determinant of the matrix $\mathbf{A}_{c}$ , we may prove the useful identity:
\begin{equation} \label{eq_detidentity}
\det{(\~g + C^{T} g^{-1} C)} = \frac{\det{\~g}}{\det{g}} \det{(g + C\~g^{-1}C^{T})}.
\end{equation}
To show this, just note that $\mathbf{A}_{c}$ can be decomposed in two different ways, either
\[ \mathbf{A}_{c} = \bm{1}{0}{-C^{T}g^{-1}}{1} \bm{g}{0}{0}{(\~g + C^{T}g^{-1}C)} \bm{1}{g^{-1}C}{0}{1}, \]
or as
\[ \mathbf{A}_{c} = \bm{1}{C\~g^{-1}}{0}{1} \bm{(g+C\~g^{-1}C^{T})}{0}{0}{\~g} \bm{1}{0}{-\~g^{-1}C^{T}}{1}. \]
Taking the determinant of both expressions and comparing them yields (\ref{eq_detidentity}).
\section{Nambu sigma model} \label{sec_nambu}
In analogy with the $p=1$ case, we may ask whether there is a Nambu
sigma model classically equivalent to the action
(\ref{def_actiontotmatrix}). To see this, introduce new auxiliary
fields $\eta_{i}$ and $\~\eta_{J}$, which transform according to
their index structure under a change of coordinates on $M$. Define an
$(n + \binom{n}{p})$-row vector $\Upsilon =
\bv{i \eta_{i}}{\~\eta_{J}}$. The corresponding (non-topological)
Nambu sigma model then has the form:
\begin{equation} \label{def_actionnsm}
S_{NSM}[X,\eta,\~\eta] = -\int d^{p+1}\sigma \big\{ \frac{1}{2}
\Upsilon^{\dagger} \mathbf{A}^{-1} \Upsilon + \Upsilon^{\dagger} \Psi \big\},
\end{equation}
where $\mathbf{A}$ can be either of $\mathbf{A_o}$ and
$\mathbf{A_c}$, supposing that the open-closed relations
$\mathbf{A_o}=\mathbf{A_c}$ hold. Using the equations of motion for
$\Upsilon$, one gets back the Polyakov action
(\ref{def_actiontotmatrix}). For the detailed treatment of Nambu sigma models see \cite{Jurco:2012gc}.

Yet another parametrization of $\mathbf{A}^{-1}$ -- using new
background fields $G_{\NB},\~G_{\NB},\Pi_{\NB}$, which we refer to as Nambu background
fields\footnote{Here, instead of fixing $\Pi$ and finding open
variables in terms of closed ones, we fix $\Phi$ to be zero and
find, again using the open-closed relations, unique $G_{N},\~G_{N},\Pi_{N}$ as
functions of $\g, \~g$ and $C$, or vice versa.} -- can be introduced
\begin{equation} \label{def_Ainvnewfield}
\mathbf{A}^{-1} = \bm{G_{\NB}^{-1}}{\Pi_{\NB}}{-\Pi_{\NB}^{T}}{\~G_{\NB}^{-1}}.
\end{equation}
We will denote as $\mathbf{A}_N$ the matrix $\mathbf{A}$ expressed
with help of Nambu background fields $G_{\NB},\~G_{\NB},\Pi_{\NB}$. Using
(\ref{eq_Ainverse}), one gets the correspondence between closed and
Nambu sigma background fields:

\begin{equation} \label{eq_correspondence1}
G_{\NB} = g + C\~g^{-1}C^{T},
\end{equation}
\begin{equation} \label{eq_correspondence2}
\~G_{\NB} = \~g + C^{T}g^{-1}C,
\end{equation}
\begin{equation} \label{eq_correspondence3}
\Pi_{\NB} = -(g + C\~g^{-1}C^{T})^{-1}C\~g^{-1} = -g^{-1}C(\~g + C^{T} g^{-1}C)^{-1}.
\end{equation}

Clearly, $G_{\NB}$ is a Riemannian metric on $M$ and $\~G_{\NB}$ is a fibrewise
positive  definite metric on $\TM{p}$. It is important to note that
 in general, for $p>1$, $\Pi_{\NB}: \cTM{p} \rightarrow TM$ is not
necessarily induced by a $(p+1)$-vector on $M$. This also means that it is not in general a Nambu-Poisson tensor. However; for $p=1$, it is easy to show that $\Pi_{N}$ is a bivector.

Also note that even if $\~g$ is a skew-symmetrized tensor product of
$g$'s  (\ref{def_tensorofg}), $\~G_{\NB}$ is not in general the
skew-symmetrized tensor product of $G_N$'s.

The converse relations are:
\begin{equation}
g = (G_{\NB}^{-1} + \Pi_{\NB} \~G_{\NB} \Pi_{\NB}^{T})^{-1},
\end{equation}
\begin{equation}
\~g = (\~G_{\NB}^{-1} + \Pi_{\NB}^{T} G_{\NB} \Pi_{\NB})^{-1},
\end{equation}
\begin{equation}
C = -(G_{\NB}^{-1} + \Pi_{\NB} \~G_{\NB} \Pi_{\NB}^{T})^{-1} \Pi_{\NB} \~G_{\NB} = -G_{\NB} \Pi_{\NB} (\~G_{\NB}^{-1} +
\Pi_{\NB}^{T} G_{\NB} \Pi_{\NB})^{-1}.
\end{equation}

Again, it is instructive to pass to the corresponding Hamiltonians.
First, find the canonical Hamiltonian to (\ref{def_actionnsm}), that
is
\[ H^{c}_{NSM}[X,P,\~\eta] = \int d^{p}\sigma P_{i} \partial_{0}X^{i} - \mathcal{L}[X,P,\~\eta]. \]
Second, use the equations of motion to get rid of $\~\eta$. In
analogy with the $p=1$ case, one expects that resulting  Hamiltonian
$H_{NSM}$ coincides with (\ref{def_polyakovham}), that is
\[ H_{NSM}[X,P] = H_{P}^{tot}[X,P].\] Indeed, we get
\begin{equation} \label{eq_nsmham}
H_{NSM}[X,P] = -\frac{1}{2} \int d^{p}\sigma \bv{iP}{\px{}}^{T} \bm{G_{\NB}^{-1} + \Pi_{\NB} \~G_{\NB} \Pi_{\NB}^{T}}{\Pi_{\NB} \~G_{\NB}}{\~G_{\NB} \Pi_{\NB}^{T}}{	\~G_{\NB}} \bv{iP}{\px{}}.
\end{equation}
If one plugs (\ref{eq_correspondence1} - \ref{eq_correspondence2})
to (\ref{eq_nsmham}), one obtains exactly the Hamiltonian
(\ref{def_polyakovham}). The matrix $\mathbf{G}$ can be thus written
as
\begin{equation} \label{def_Gdualfields}
\mathbf{G} = \bm{1}{\Pi_{\NB}}{0}{1} \bm{G_{\NB}^{-1}}{0}{0}{\~G_{\NB}}
\bm{1}{0}{\Pi_{\NB}^{T}}{1}
\end{equation}
when using the Nambu background fields, in which case we shall
introduce the notation $\mathbf{G}_\NB$ for it. This shows that to any $g,\~g,C$ one can uniquely find $G_{\NB},\~G_{\NB},\Pi_{\NB}$ and
vice versa, since they both come from the respective unique
decompositions of the matrix $\mathbf{G}$.

Note that for $p=1$ and $\~g = g$, relations
(\ref{eq_correspondence1} - \ref{eq_correspondence3})  are usually
written simply as
\begin{equation} \label{eq_p1correspondence}
\frac{1}{g+C} = \frac{1}{G_{\NB}} + \Pi_{\NB}.
\end{equation}



We will refer to the Poisson sigma model, when expressed -- using
$\Pi$ -- in open variables $(G,\~G,\Phi)$ as to augmented Poisson sigma model.
\section{Geometry of the open-closed brane relations} \label{sec_geometry}
For $p=1$, the open-closed relations (\ref{eq_p1occorrespondence}) can naturally be explained using the language of generalized geometry. We have developed this point of view in \cite{Jurco:2013upa}. One expects that similar observations apply also for $p>1$ case. In the previous section we have already mentioned the possibility to define the generalized metric on the vector bundle $TM \oplus \cTM{p}$ by the inverse of the matrix \eqref{eq_gCdecomposition}. Here we discuss an another approach to a generalization of the generalized geometry starting from equation (\ref{eq_occorrespondece0}). Denote $W = TM \oplus \TM{p}$.

The main goal of this section is to show that we can without any additional labor adapt the whole formalism of \cite{Jurco:2013upa} to the vector bundle $W \oplus W^{\ast}$.

Define the maps $\G$, $\B: W \rightarrow W^{\ast}$ using block matrices as
\begin{equation} \label{eq_bigGbigB}
 \G \cv{V}{\bi{P}} = \bm{g}{0}{0}{\~g} \cv{V}{\bi{P}}, \ \ \B \cv{V}{\bi{P}} = \bm{0}{C}{-C^{T}}{0} \cv{V}{\bi{P}},
\end{equation}
for all $V + \bi{P} \in \Gamma(W)$. Next, define the map $\Theta: W^{\ast} \rightarrow W$ as
\begin{equation} \label{eq_Pbigdef}
 \Theta \cv{\alpha}{\Sigma} = \bm{0}{\Pi}{-\Pi^{T}}{0} \cv{\alpha}{\Sigma},
\end{equation}
for all $\alpha + \Sigma \in \Gamma(W^{\ast})$. Then define $\H,\Xi: W \rightarrow W^{\ast}$ as in (\ref{eq_bigGbigB}) using the fields $G,\~G,\Phi$ instead of $g,\~g,C$. The open-closed relations (\ref{eq_occorrespondece0}) can be then written as simply as
\begin{equation} \label{eq_ocalaSW}
 \frac{1}{\G + \B} = \frac{1}{\H + \Xi} + \Theta.
\end{equation}
We see that they have exactly the same form as (\ref{eq_p1occorrespondence}) for $p=1$. The purpose of this section is to obtain these relations from the geometry of the vector bundle $W \oplus W^{\ast}$.

We define an inner product $\<\cdot,\cdot\>: \Gamma(W \oplus W^{\ast}) \times \Gamma(W \oplus W^{\ast}) \rightarrow C^{\infty}(M)$ on $W \oplus W^{\ast}$ to be the natural pairing between $W$ and $W^{\ast}$, that is:
\[ \< V + \bi{P} + \alpha + \Sigma, W + \bi{Q} + \beta + \Psi \> = \beta(V) + \alpha(W) + \Psi(\bi{P}) + \Sigma(\bi{Q}), \]
for all $V,W \in \vf{}$, $\alpha,\beta \in \df{1}$, $\bi{P},\bi{Q} \in \vf{p}$, and $\Sigma,\Psi \in \df{p}$. Note that this pairing has the signature $(n+\binom{n}{p},n+\binom{n}{p})$.

Now, let $\T: W \oplus W^{\ast} \rightarrow W \oplus W^{\ast}$ be a vector bundle endomorphism squaring to identity, that is, $\T^{2} = 1$. We say that $\T$ is a generalized metric on $W \oplus W^{\ast}$, if the fibrewise bilinear form
\[ (E_{1},E_{2})_{\T} \equiv \< E_{1}, \T(E_{2})\>, \]
defined for all $E_{1},E_{2} \in \Gamma(W \oplus W^{\ast})$, is a positive definite fibrewise metric on $W \oplus W^{\ast}$. It follows from definition that $\T$ is orthogonal and symmetric with respect to the inner product $\<\cdot,\cdot\>$. Moreover, it defines two eigenbundles $V_{\pm} \subset W \oplus W^{\ast}$, corresponding to eigenvalues $\pm 1$ of $\T$. It follows immediately from the properties of $\T$, that they are both of rank $n + \binom{n}{p}$, orthogonal to each other, and thus
\[ W \oplus W^{\ast} = V_{+} \oplus V_{-}. \]
Moreover, $V_{+}$ and $V_{-}$ form the positive definite and negative definite subbundles of $\<\cdot,\cdot\>$, respectively. From the positive definiteness of $V_{+}$ it follows that $V_{+}$ has zero intersection both with $W$ and $W^{\ast}$, and is thus a graph of a unique vector bundle isomorphism $\mathcal{A}: W \rightarrow W^{\ast}$. The map $\mathcal{A}$ can be written as a sum of a symmetric and a skew-symmetric part with respect to $\<\cdot,\cdot\>$: $\mathcal{A} = \G + \B$. From the positive definiteness of $V_{+}$, it follows that $\G$ is a positive definite fibrewise metric on $W$. From the orthogonality of $V_{+}$ and $V_{-}$ we finally obtain that:
\[ V_{\pm} = \{ (V + \bi{P}) + ( \pm \G + \B )(V + \bi{P}) \ | V+ \bi{P} \in W \}. \]
The map $\T$, or equivalently the fibrewise metric $(\cdot,\cdot)_{\T}$ can be reconstructed using the data $\G$ and $\B$ to get
\[ ( V + \bi{P} + \alpha + \Sigma, W + \bi{Q} + \beta + \Psi)_{\T} = \cv{V + \bi{P}}{\alpha + \Sigma}^{T} \bm{ \G - \B \G^{-1} \B }{\B \G^{-1}}{-\G^{-1} \B}{\G^{-1}} \cv{W+\bi{Q}}{\beta + \Psi}. \]
Note that the above block matrix can be decomposed as a product
\[ \bm{ \G - \B \G^{-1} \B }{\B \G^{-1}}{-\G^{-1} \B}{\G^{-1}} = \bm{1}{\B}{0}{1} \bm{\G}{0}{0}{\G^{-1}} \bm{1}{0}{-\B}{1}. \]
The maps $\G,\B$ can be parametrized as
\[ \G \cv{V}{\bi{Q}} = \bm{g}{D}{D^{T}}{\~g} \cv{V}{\bi{Q}}, \]
\[ \B \cv{V}{\bi{Q}} = \bm{B}{C}{-C^{T}}{\~B} \cv{V}{\bi{Q}}, \]
where $g$ is a symmetric covariant $2$-tensor on $M$, $C,D: \TM{p} \rightarrow T^{\ast}M$ are vector bundle morphisms, $B \in \Omega^{2}(M)$, and $\~g$ and $\~B$ are symmetric and skew-symmetric fibrewise bilinear forms on $\TM{p}$, respectively. The fields $g,\~g,D$ are not arbitrary, since $\G$ has to be a positive definite fibrewise metric on $W$. One immediately gets that $g,\~g$ have to be positive definite. The conditions imposed on $D$ can be seen from the equalities
\[
\begin{split}
\bm{g}{D}{D^{T}}{\~g} & = \bm{1}{0}{D^{T}g^{-1}}{1} \bm{g}{0}{0}{\~g - D^{T}g^{-1}D} \bm{1}{g^{-1}D}{0}{1} \\
& = \bm{1}{D\~g^{-1}}{0}{1} \bm{g - D\~g^{-1}D^{T}}{0}{0}{\~g} \bm{1}{0}{\~g^{-1}D^{T}}{1}.
\end{split}
\]
We see that there are two equivalent conditions on $D$: the fibrewise bilinear form $\~g - D^{T}g^{-1}D$, or $2$-tensor $g - D\~g^{-1}D^{T}$ have to be positive definite. Inspecting the action (\ref{def_actiontotmatrix}), we see that only the case when $B = \~B = D = 0$ is relevant for our purpose.

Now, let us turn our attention to the explanation of the open-closed relations. For this, consider the vector bundle automorphism $\mathcal{O}: W \oplus W^{\ast} \rightarrow W \oplus W^{\ast}$, orthogonal with respect to the inner product $\<\cdot,\cdot\>$, that is,
\[ \< \mathcal{O}(E_{1}), \mathcal{O}(E_{2}) \> = \< E_{1}, E_{2} \>, \]
for all $E_{1},E_{2} \in \Gamma(W \oplus W^{\ast})$. Given a generalized metric $\T$, we can define a new map $\T' = \mathcal{O}^{-1} \T \mathcal{O}$. It can be easily checked that $\T'$ is again a generalized metric. Obviously, the respective eigenbundles $V_{+}$ are related using $\mathcal{O}$, namely:
\begin{equation} \label{eq_relgraphs}
 V_{+}^{\T'} = \mathcal{O}^{-1} ( V_{+}^{\T} ).
\end{equation}
We have also proved that every generalized metric $\T$ corresponds to two unique fields $\G$ and $\B$. This means that to given $\G$ and $\B$, and an orthogonal vector bundle isomorphism $\mathcal{O}$, there exists a unique pair $\H$, $\Xi$ corresponding to $\T' = \mathcal{O}^{-1} \T \mathcal{O}$. We will show that open-closed relations are a special case of this correspondence. Also, note that $(\cdot,\cdot)_{\T}$ and $(\cdot,\cdot)_{\T'}$ are related as
\begin{equation} \label{eq_GMogtransform}
 (\cdot,\cdot)_{\T'} = (\mathcal{O}(\cdot),\mathcal{O}(\cdot))_{\T}.
\end{equation}
Now, consider an arbitrary skew-symmetric morphism $\Theta: W^{\ast} \rightarrow W$, that is
\[ \< \alpha + \Sigma, \Theta(\beta + \Psi) \> = - \< \Theta(\alpha + \Sigma), \beta + \Psi \>, \]
for all $\alpha, \beta \in \df{1}$, and $\Sigma,\Psi \in \df{p}$. It can easily be seen that the vector bundle isomorphism $e^{\Theta}: W \oplus W^{\ast} \rightarrow W \oplus W^{\ast}$, defined as
\[ e^{\Theta} \cv{ V+\bi{Q}}{\alpha + \Sigma} = \bm{1}{\Theta}{0}{1} \cv{V + \bi{Q}}{\alpha + \Sigma}, \]
for all $V + \bi{Q} + \alpha + \Sigma \in \Gamma(W \oplus W^{\ast})$, is orthogonal with respect to the inner product $\<\cdot,\cdot\>$. Its inverse is simply $e^{-\Theta}$. Let $\T$ be the generalized metric corresponding to $\G + \B$. Note that $V_{+}^{\T}$ can be expressed as
\[ V_{+}^{\T} = \{ (\G + \B)^{-1}(\alpha + \Sigma) + (\alpha + \Sigma) \ | \ (\alpha + \Sigma) \in W^{\ast} \}. \]
Using the relation (\ref{eq_relgraphs}), we obtain that
\[ V_{+}^{\T'} = e^{-\Theta} V_{+}^{\T} = \{ \big( (\G + \B)^{-1} - \Theta \big)(\alpha + \Sigma) + (\alpha + \Sigma) \ | \ (\alpha + \Sigma) \in W^{\ast} \}. \]
We see that the vector bundle morphism $\H + \Xi$ corresponding to $\T'$ satisfies
\[ (\H + \Xi)^{-1} = (\G + \B)^{-1}  - \Theta. \]
But this is precisely the relation (\ref{eq_ocalaSW}). We also know how to handle this relation on the level of the positive definite fibrewise metrics $(\cdot,\cdot)_{\tau}$ and $(\cdot,\cdot)_{\tau'}$. From (\ref{eq_GMogtransform}) we get the relation
\[ \bm{\H - \Xi \H^{-1} \Xi}{\B \H^{-1}}{-\H^{-1} \Xi}{\H^{-1}} = \bm{1}{0}{-\Theta}{1} \bm{\G - \B \G^{-1} \B}{\B \G^{-1}}{-\G^{-1} \B}{\G^{-1}} \bm{1}{\Theta}{0}{1}. \]
Using the decomposition of the matrices, we can write this also as
\[ \bm{1}{\Xi}{0}{1} \bm{\H}{0}{0}{\H^{-1}} \bm{1}{0}{-\Xi}{0} = \bm{1}{0}{-\Theta}{1}  \bm{1}{B}{0}{1} \bm{\G}{0}{0}{\G^{-1}} \bm{1}{0}{-B}{1} \bm{1}{\Theta}{0}{1}. \]
Comparing both expressions, we get the explicit form of open-closed relations:
\begin{equation} \label{eq_ocrelationsbig1}
\H - \Xi \H^{-1} \Xi = \G - \B \G^{-1} \B,
\end{equation}
\begin{equation} \label{eq_ocrelationsbig2}
\Xi \H^{-1} = (\G - \B \G^{-1}\B)\Theta + \B\G^{-1},
\end{equation}
\begin{equation} \label{eq_ocrelationsbig3}
\H^{-1} = (1 + \Theta \B)\G^{-1}(1 - \B \Theta) - \Theta \G \Theta.
\end{equation}
We have proved that for given $\G,\B$ and any $\Theta$, $\H$ and $\Xi$ can be found uniquely. Inverse relations can be obtained by interchanging $\G \leftrightarrow \H$, $\B \leftrightarrow \Xi$ and $\Theta \leftrightarrow -\Theta$. Note that, actually, the last equation follows from the first two. Now let us turn our attention to the case of $\G + \B$ in the form (\ref{eq_bigGbigB}). One has
\[ \G - \B \G^{-1} \B = \bm{g + C\~g^{-1}C^{T}}{0}{0}{\~g+C^{T}g^{-1}C}, \]
\[ \B \G^{-1} = \bm{0}{C\~g^{-1}}{-C^{T}g^{-1}}{0}, \ \ \G^{-1} = \bm{g^{-1}}{0}{0}{\~g^{-1}}. \]
Parametrize $\Theta$ as
\[ \Theta = \bm{\pi}{\Pi}{-\Pi^{T}}{\~\pi}, \]
where $\pi \in \vf{2}$, $\Pi: \cTM{p} \rightarrow TM$, and $\~\pi$ is skew-symmetric fibrewise bilinear form on $\cTM{p}$. Right-hand side of (\ref{eq_ocrelationsbig2}) is then
\[
\begin{split}
\bm{g+C\~g^{-1}C^{T}}{0}{0}{\~g+C^{T}g^{-1}C} \bm{\pi}{\Pi}{-\Pi^{T}}{\~\pi} + \bm{0}{C\~g^{-1}}{-C^{T}g^{-1}}{0} = \\
= \bm{ (g + C\~g^{-1}C^{T})\pi }{(g+C\~g^{-1}C^{T})\Pi + C\~g^{-1}}{-(\~g + C^{T}g^{-1}C)\Pi^{T} - C^{T}g^{-1}}{(\~g + C^{T}g^{-1}C)\~\pi  }.
\end{split}
\]
We see that to obtain a generalized metric where $\mathcal{H}$ is block diagonal, and $\Xi$ is block off-diagonal, we have to choose $\pi = \~\pi = 0$. This means that we choose $\Theta$ to be of the form (\ref{eq_Pbigdef}). Defining
\[ \H = \bm{G}{0}{0}{\~G}, \ \Xi = \bm{0}{\Phi}{-\Phi^{T}}{0}, \]
it is now straightforward to see that the set of equations (\ref{eq_ocrelationsbig1} - \ref{eq_ocrelationsbig3}) gives exactly the open-closed relations (\ref{eq_occorrespondence1} - \ref{eq_occorrespondence4}). The relations between the open membrane variables and Nambu fields $G_{\NB},\~G_{\NB},\Pi_{\NB}$ can be explained in a similar fashion. Indeed, note that the map $\G + \B$ is invertible, and its inverse, the vector bundle morphism from $W^{\ast}$ to $W$, can be split into symmetric and skew-symmetric part:
\begin{equation} \label{eq_openNambubig}
 (\G + \B)^{-1} = \mathcal{H}_{\NB}^{-1} + \Theta_{\NB},
\end{equation}
where $\mathcal{H}_{\NB}$ is a fibrewise positive definite metric on $W$, and $\Theta_{\NB}$ is a skew-symmetric fibrewise bilinear form on $W^{\ast}$. Parametrizing them as
\[ \mathcal{H}_{\NB} = \bm{G_{\NB}}{0}{0}{\~G_{\NB}}, \ \ \Theta_{\NB} = \bm{0}{\Pi_{\NB}}{-\Pi_{\NB}^{T}}{0}, \]
and expanding (\ref{eq_openNambubig}), we obtain exactly the set of equations (\ref{eq_correspondence1} - \ref{eq_correspondence3}).
\section{Gauge field $F$ as transformation of the fibrewise metric} \label{sec_gaugeF}
In this section, we would like to develop the equalities required in the discussion of DBI actions. In the previous sections we have shown how the closed and open membrane actions are related using the generalized geometry point of view. One expects that it is also true for their versions taking into account the fluctuations. The following paragraphs show that it is true ``up to an isomorphism", fluctuated backgrounds cannot be related simply by open-closed relations in the form (\ref{eq_occorrespondence1} - \ref{eq_occorrespondence4}).

We also show that corresponding open backgrounds are essentially uniquely fixed, there is no ambiguity at all. For $p=1$, we have already used this observation in \cite{Jurco:2013upa}.

The idea is the following: Suppose that we would like to add a fluctuation $F$ to the
$(p+1)$-form $C$. At this point we consider  $F$ to be
defined globally on the entire manifold $M$, although everything works also in the case when
$F$ is defined only on a some submanifold of $M$.\footnote{Later,
this submanifold will correspond to a $p'$-brane, $p'\geq p$, where
$p$-branes can end.}

Going from $C$ to $C+F$ corresponds to
replacing $\mathbf{G}_c$ in the Hamiltonian (\ref{def_polyakovham})
with $\mathbf{G}_c^{F}$, defined as
\begin{equation} \label{def_GFmatrix}
\mathbf{G}_c^{F} = \bm{1}{0}{-F^{T}}{1} \mathbf{G}_c
\bm{1}{-F}{0}{1} \equiv \bm{1}{0}{-(C+F)^{T}}{1}
\bm{g^{-1}}{0}{0}{\~g} \bm{1}{-(C+F)}{0}{1}.
\end{equation}
The matrix $\bm{1}{-F}{0}{1}$ corresponds to an endomorphism of $T^{\ast}M
\oplus \TM{p}$, which we denote as $e^{-F}$. Note that unlike in the
$p=1$ case, $e^{-F}$ is not orthogonal with respect to the canonical
pairing (valued in $\vf{p-1}$) on $T^{\ast}M \oplus \TM{p}$, defined
as:
\[ \<\alpha+\bi{Q},\beta+ \bi{R}\> = \io_{\alpha}\bi{R} + \io_{\beta}\bi{Q}, \]
for all $\alpha,\beta \in \df{1}$ and $\bi{Q},\bi{R} \in \vf{p}$. It can be
shown that any orthogonal $F$ has to be identically $0$. On the other
hand, its transpose map, $(e^{-F})^{T} \equiv e_{-F}$, which is an
endomorphism of $TM \oplus \cTM{p}$, is orthogonal with respect to
the canonical pairing (valued in $\df{p-1}$) on $TM \oplus \cTM{p}$
iff $F$ is a $(p+1)$-form in $M$. This pairing is defined as
\[ \<V + \Sigma, W + \Xi\> = \io_{V}\Sigma + \io_{W}\Xi,  \]
for all $V,W \in \vf{}$ and $\Sigma,\Xi \in \df{p}$. In this notation, the transformation
(\ref{def_GFmatrix}) can be written   as
\begin{equation}  \label{eq_Gffactorization}
\mathbf{G}_c^{F} = e_{-F} \mathbf{G}_c e^{-F} \equiv (e^{-F})^{T}
\mathbf{G}_c e^{-F}.
\end{equation}
We know that $\mathbf{G}$ can be rewritten as $\mathbf{G}_o$ in the open
variables $(G,\~G,\Phi)$,  corresponding to augmented Nambu sigma
model. If we define the automorphism $e^{\Pi}$ of $T^{\ast}M
\oplus \TM{p}$ as
\[ e^{\Pi} \bv{\alpha}{\bi{Q}} = \bm{1}{0}{\Pi^{T}}{1} \bv{\alpha}{\bi{Q}}, \]
we can express $\mathbf{G}_o$ as
\begin{equation}
\mathbf{G}_o = e_{\Pi} \bm{1}{0}{-\Phi^{T}}{1}
\bm{G^{-1}}{0}{0}{\~G}  \bm{1}{-\Phi}{0}{1} e^{\Pi},
\end{equation}
where $e_{\Pi} = (e^{\Pi})^{T}$. Dually to the previous
discussion, $e^{\Pi}$ is an orthogonal transformation of
$T^{\ast}M \oplus \TM{p}$; although $e_{\Pi}$, for non-zero
$\Pi$, is never orthogonal on $TM \oplus \cTM{p}$.

Now, it is natural to ask whether to the gauged closed variables
$(g,\~g,C+F)$ there correspond some open variables and hence an
augmented Nambu sigma model, described by  some $\Pi'$ and
$(G,\~G,\Phi + F')$, where $F'$ describes a fluctuation of the background
$\Phi$. More precisely, we ask whether one can write
$\mathbf{G}_o^{F}$ in the form
\begin{equation}
\mathbf{G}_o^{F} \stackrel{?}{=} e_{\Pi'} \bm{1}{0}{-(\Phi +
F')^{T}}{1} \bm{G^{-1}}{0}{0}{\~G} \bm{1}{-(\Phi + F')}{0}{1}
e^{\Pi'}.
\end{equation}
Translated into the language of the corresponding automorphisms of $T^{\ast}M \oplus
\TM{p}$, this boils down to the question
\begin{equation} \label{eq_ortequality}
e^{\Pi} e^{-F}\stackrel{?}{=} e^{-F'} e^{\Pi'},
\end{equation}
for some $\Pi'$ and $F'$. In general, this is not possible. Explicitly the
equation (\ref{eq_ortequality}) reads
\[ \bm{1}{-F}{\Pi^{T}}{1 - \Pi^{T}F} \stackrel{?}{=}
\bm{1 - F'\Pi'}{-F'}{\Pi'^{T}}{1}. \] This implies $\Pi^{T}
F = 0$, which, of course, in general is not satisfied. The
decomposition on the right-hand side therefore has to contain a
block-diagonal term. Note that $e^{-F'}$ is upper triangular,
whereas $e^{\Pi'}$ is lower triangular. For a matrix to have a
decomposition into a product of a block upper triangular, diagonal
and lower triangular matrix, it has to have an invertible bottom
right block, that is $1 - \Pi^{T}F$. Hence, we assume that $1 -
\Pi^{T} F$ is an invertible $\binom{n}{p} \times \binom{n}{p}$
matrix. We are now looking for a solution of the equation
\begin{equation} \label{eq_ortequality2}
 e^{\Pi}e^{-F} = e^{-F'} \bm{M}{0}{0}{N} e^{\Pi'},
\end{equation}
where $M: T^{\ast}M \rightarrow T^{\ast}M$ and $N: \TM{p}
\rightarrow \TM{p}$ are (necessarily) invertible vector bundle
morphisms.

We can decompose $e^{\Pi} e^{-F}$ as
\begin{equation}
\bm{1}{-F(1-\Pi^{T}F)^{-1}}{0}{1}
\bm{1 + F(1 - \Pi^{T}F)^{-1}\Pi^{T}}{0}{0}{1 - \Pi^{T}F}
\bm{1}{0}{(1 - \Pi^{T}F)^{-1}\Pi^{T}}{1}.
\end{equation}
From this we see that $F' = F(1 - \Pi^{T} F)^{-1}$, $\Pi' =
\Pi (1 - F^{T}\Pi)^{-1} $ and $N = 1 - \Pi^{T}F$. To find
an alternative description of $F'$, $\Pi'$ and $M$, examine the
inverse of the equation (\ref{eq_ortequality2}):
\begin{equation}
e^{F} e^{-\Pi} = e^{-\Pi'}  \bm{M^{-1}}{0}{0}{N^{-1}} e^{F'}.
\end{equation}
The left hand side of this equation is
\[ e^{F} e^{-\Pi} = \bm{1 - F\Pi^{T}}{F}{-\Pi^{T}}{1}, \]
which shows that $1 - \Pi^{T}F$ is invertible iff $1 -
F\Pi^{T}$ is invertible. The decomposition of $e^{F}e^{-\Pi}$
reads
\begin{equation}
\bm{1}{0}{-\Pi^{T}(1 - F\Pi^{T})^{-1}}{1}
\bm{1 - F\Pi^{T}}{0}{0}{1 + \Pi^{T}(1 - F \Pi^{T})^{-1}F}
\bm{1}{(1-F\Pi^{T})^{-1}F}{0}{1}.
\end{equation}
We thus get that $F' = (1-F\Pi^{T})^{-1}F$, $\Pi' = (1 -
\Pi F^{T})^{-1} \Pi$ and $M = (1 - F\Pi^{T})^{-1}$.

We can conclude that the fields $F'$, $\Pi'$, and vector bundle
morphisms $M,N$ in the decomposition (\ref{eq_ortequality2}) have
one of the following equivalent forms:

\begin{equation} \label{eq_primed1}
F' = F(1-\Pi^{T}F)^{-1} = (1-F\Pi^{T})^{-1}F,
\end{equation}
\begin{equation} \label{eq_primed2}
\Pi' = \Pi(1 - F^{T}\Pi)^{-1} = (1 - \Pi F^{T})^{-1} \Pi,
\end{equation}
\begin{equation} \label{eq_primed3}
M = 1 + F(1 - \Pi^{T}F)^{-1}\Pi^{T} = 1 + F'\Pi^{T} =
(1 - F\Pi^{T})^{-1},
\end{equation}
\begin{equation} \label{eq_primed4}
N = 1 - \Pi^{T}F = \big( 1 + \Pi^{T}(1 - F\Pi^{T})^{-1}F
\big)^{-1} = (1 + \Pi'^{T}F)^{-1}.
\end{equation}

Thus, we have found a factorization of $\mathbf{G}_o^{F}$ in the
form
\begin{equation}
\mathbf{G}_o^{F} = e_{\Pi'} \bm{M^{T}}{0}{0}{N^{T}} e_{-(\Phi +
F')} \bm{G^{-1}}{0}{0}{\~G} e^{-(\Phi + F')} \bm{M}{0}{0}{N}
e^{\Pi'}.
\end{equation}
Comparing this to  $\mathbf{G}_c^{F}$, in particular comparing
the respective bottom right blocks, we get the important identity
\begin{equation}
\~g + (C+F)^{T}g^{-1}(C+F) = N^{T}\big(\~G + (\Phi + F')^{T}G^{-1}
(\Phi + F')\big)N.
\end{equation}
Similarly, comparing the top left blocks of the inverses, one gets
\begin{equation}\label{openclosedF}
g + (C+F)\~g^{-1}(C+F)^{T} = M^{-1}\big(G + (\Phi + F')\~G^{-1}
(\Phi + F'\big)^{T}M^{-T}.
\end{equation}

Equivalently, one can gauge the matrix $\mathbf{A}_c$, i.e., set
\begin{equation}\label{ACF}
{\mathbf{A}}_c^F=\begin{pmatrix} g & (C+F )\\
-(C+F)^T& \~g\end{pmatrix}.
\end{equation} To express this matrix in
open variables we introduce the following notation: $\bar G^{-1}
:=M^TG^{-1}M$, $\bar{\~G}=N^T\~G N$, $\bar \Phi :=M^{-1}\Phi N$ and
$\bar F' :=M^{-1}F'N$. If we now put
\begin{equation}
{\mathbf{A}}_o^F=\begin{pmatrix} \bar G &  (\bar\Phi + \bar F') \\
-(\bar \Phi +\bar F')^T & \bar{\~G}\end{pmatrix}^{-1}
 +
\begin{pmatrix} 0 & \Pi' \\
-\Pi'^T & 0\end{pmatrix}, \end{equation} the (gauged) open-closed
relations are equivalent to ${\mathbf{A}}_c^F={\mathbf{A}}_o^F$. As in the previous sections, using the matrices ${\mathbf{A}}_c^F$,
${\mathbf{A}}_o^F$, ${\mathbf{G}}_c^F$ and ${\mathbf{G}}_o^F$, one
can write down the corresponding Polyakov or (augmented) Nambu sigma
models, i.e.,
\begin{equation} \label{def_actiontotmatrixF}
S_{P}^{tot,F}[X] = \frac{1}{2} \int d^{p+1}\sigma \{ \Psi^{\dagger} \mathbf{A}^F_c \Psi \}=\frac{1}{2} \int d^{p+1}\sigma \{ \Psi^{\dagger} \mathbf{A}^F_o \Psi \},
\end{equation}
\begin{equation} \label{def_actionNambumatrixF}
S_{NSM}^{F}[X,\eta,\eta'] = - \int d^{p+1}\sigma \{ \Upsilon^{\dagger} {\mathbf{A}^F_c}^{-1} \Upsilon  + \Upsilon^{\dagger} \Psi \}=- \int d^{p+1}\sigma \{ \Upsilon^{\dagger} {\mathbf{A}^F_c}^{-1} \Upsilon + \Upsilon^{\dagger} \Psi \},
\end{equation}
\begin{equation} \label{def_polyakovhamF}
\begin{split}
H_{P}^{tot,F}[X,P] = H_{NSM}^F[X,P] & = -\frac{1}{2} \int d^{p}\sigma \bv{iP}{\px{}}^{T} \mathbf{G_c^F}
\bv{iP}{\px{}} \\
& =  -\frac{1}{2} \int d^{p}\sigma \bv{iP}{\px{}}^{T} \mathbf{G_o^F}
\bv{iP}{\px{}}.
\end{split}
\end{equation}

\section{Seiberg-Witten map} \label{sec_SWmap}
In the previous section, we have developed the correspondence between closed and open fields, including their respective fluctuations. However, they are not related simply by open-closed relations. Instead, the discussion brings new vector bundle isomorphisms $M$ and $N$, defined by (\ref{eq_primed3}, \ref{eq_primed4}), respectively, into the picture. The determinant of the left-hand side of (\ref{openclosedF}) seems to be a likely candidate to appear in the  ``commutative" membrane DBI action, whereas the determinant on the right-hand side of (\ref{openclosedF}) seems to contain as a factor a likely candidate to appear in its ``noncommutative" counterpart.

This observation suggests that we should look for a change of coordinates on the manifold $M$, the Jacobian of which could cancel the $\det{}^{2}(N)$ factor coming under the determinant from the right-hand side of (\ref{openclosedF}).
The resulting diffeomorphism will be called a Seiberg-Witten map in analogy to the string $p=1$ case. We use a direct generalization of the semi-classical construction used first in \cite{Jurco:2001my}. The most intriguing part will be to define carefully a substitute for a determinant of a Nambu-Poisson $(p+1)$-vector.

In the following, let $\Pi$ be a Nambu-Poisson $(p+1)$-vector
(see appendix \ref{sec_NPstructures}) on $M$. We can examine the
$F$-gauged tensor $\Pi' = (1 - \Pi F^{T})^{-1} \Pi$.\footnote{We assume that $1 - \Pi F^{T}$ is invertible. In a more formal approach we also could treat $\Pi'$ as a formal power series in $\Pi$.\label{invertible}}
We will now show that for $p>1$ this tensor is always a Nambu-Poisson $(p+1)$-vector, whereas for
$p=1$ it is a Poisson bivector if $F$ is closed.

First, for $p > 1$, one can see that
\begin{equation}  \label{eq_thetaprimetrace}
\Pi' = (1 - \frac{1}{p+1}\< \Pi, F\>)^{-1} \Pi,
\end{equation}
where $\<\Pi,F\> = \Pi^{iJ} F_{iJ} \equiv Tr(\Pi F^{T})$.
For this, one has to prove that
\begin{equation} \label{TwoTheta_prime_expressions}\Pi = (1 - \Pi F^{T}) (1 - \frac{1}{p+1}\<\Pi,F\>)^{-1} \Pi.
\end{equation}
This can easily be checked in coordinates $(x^{1}, \dots, x^{n})$ in which (\ref{eq_Pidecomposition}) holds, and hence, for $\Pi$ with components $\Pi^{iJ} = \epsilon^{iJ}$. Now, using (\ref{eq_thetaprimetrace}) and lemma
\ref{lem_nptensormultiple}, we see that $\Pi'$ is again a
Nambu-Poisson tensor.

To include the $p=1$ case:
For $p \geq 1$, and $F$ closed, we can use the fact that
$G_{\Pi'} = e_{-F} G_{\Pi}$, where $G_{\Pi}$ and
$G_{\Pi'}$ are graphs of the maps $\Pi^{\sharp}$ and
$\Pi'^{\sharp}$, respectively (see lemma \ref{lem_nplemma}). This is easily
verified using (\ref{eq_primed2}). It can be seen that the Dorfman
bracket (\ref{def_dorfman}) satisfies
$[e_{-F}(V+\xi),e_{-F}(W+\eta)]_{D} = e_{-F}[V+\xi,W+\eta]_{D}$,
whenever $F$ is closed. But this implies that $G_{\Pi'}$ is
closed under the Dorfman bracket, which is according to
\ref{lem_nplemma} equivalent to the Nambu-Poisson fundamental
identity. On the other hand, note that for $p>1$, $F'$ is not necessarily a
$(p+1)$-form.

Next, see that the scalar function in front of $\Pi$ in (\ref{eq_thetaprimetrace}) is related to the determinant of the vector bundle isomorphism $1 - \Pi F^{T}$. For $p>1$, any Nambu-Poisson tensor and any $(p+1)$-form $F$, its holds

\begin{equation} \label{detFprimeasfunction}
\det{(1 - \Pi F^{T})} = (1 - \frac{1}{p+1} \<\Pi,F\>)^{p+1}.
\end{equation}
To prove this identity, note  that both sides are scalar functions. We may therefore use any local coordinates on $M$. Again, use those in which $(\ref{eq_Pidecomposition})$ holds. The rest of the proof is straightforward.

Further on, assume that $F$ is closed, that is at least locally $F =
dA$ for a $p$-form $A$. Define a $1$-parametric family of tensors
$\Pi'_{t} := (1 - t\Pi F^{T})^{-1} \Pi$, cf. Footnote \ref{invertible}. This is obviously
chosen so that $\Pi'_{0} = \Pi$ and $\Pi'_{1} = \Pi'$.
Differentiation of $\Pi'_{t}$ with respect to $t$ gives:
\begin{equation}
\partial_{t} \Pi'_{t} = \Pi'_{t} F^{T} \Pi'_{t}.
\end{equation}

This equation can be rewritten as
\begin{equation} \label{eq_difffortheta}
\partial_{t} \Pi'_{t} = -\mathcal{L}_{A^{\sharp}_{t}} \Pi'_{t},
\end{equation}
where the time-dependent vector field $A^{\sharp}_{t}$ is defined as
$A^{\sharp}_{t} = {\Pi'}^{\sharp}_{t}(A)$. To see this, note that
$\Pi'_{t}$ is, using similar arguments as above, a Nambu-Poisson
tensor. Then recall the property (\ref{npfieq1}), and choose $\xi =
A$ and $\eta = dy^{J}$. Contracting the resulting vector field
equality with $dy^{i}$ gives exactly $\mathcal{L}_{A^{\sharp}_{t}}
\Pi'_{t} = - \Pi'_{t} F^{T} \Pi_{t}'$. Equation
(\ref{eq_difffortheta}) states precisely that the flow $\phi_{t}$
corresponding to $A_{t}^{\sharp}$, together with condition
$\Pi'_{0} = \Pi$, maps $\Pi_{t}$ to $\Pi$, that is,
\begin{equation}
\phi_{t}^{\ast}(\Pi'_{t}) = \Pi.
\end{equation}
We have thus found the map $\rho_{A} \equiv \phi_{1}$, which gives
$\rho_{A}^{\ast}(\Pi') = \Pi$. This is the $p \geq 1$ analogue
of the well known semiclassical Seiberg-Witten map. Obviously, it preserves the singular foliation defined by $\Pi$.  We emphasize the
dependence of this map on the $p$-form $A$ by an explicit addition
of the subscript $A$.

Denote ${J^{i}}_{k} = \frac{\partial \widehat X^{i}}{\partial x^{k}}$, with $\widehat X^{i}:=\rho_A^{\ast}(x^i)$ being {\it covariant} coordinates. We
have
\begin{equation} \label{JJJ} \rho_A^{\ast}( {\Pi'}^{j_1,\ldots j_{p+1}} ) = J^{j_1}_{i_1}\ldots J^{j_{p+1}}_{i_{p+1}} \Pi^{i_1\ldots i_{p+1}}.
\end{equation}

Further, denote by $|J|$ the determinant of $J^{i}_{k}$ in some (arbitrarily) chosen local coordinates $(x^{1}, \dots, x^{n})$ on $M$. One can choose, for instance, the special coordinates $(\~x^{i}, \dots \~x^{n})$ on $M$ in which (\ref{eq_Pidecomposition}) holds. We will use the notation $|\~J|$ for the determinant of the matrix $\~J^{i}_{k} = \frac{\partial \~x^{i}(\rho_{A}(x))}{\partial \~x^{k}}$. From now, for any function $\varphi$ (e.g., a matrix component, determinant, etc.), the symbol $\widehat{\varphi}$ will always denote the function defined as $\widehat{\varphi}(x) \equiv \rho_{A}^{\ast}(\varphi)(x) = \varphi(\rho_{A}(x))$. Recall now the definition (\ref{def_pidensity}) of the density $|\Pi(x)|$\footnote{For $p=1$, one can (around every regular point of the characteristic distribution) define $|\Pi(x)|$ to be the Jacobian of the transformation to the Darboux-Weinstein coordinates. This gives a good definition even if $\Pi$ is degenerate.}. By definition of $|J|$, we then have
\begin{equation} \label{eq_JastildJ}
 |J| = |\~J| \frac{|\widehat{\Pi}(x)|^{\frac{1}{p+1}}}{|\Pi(x)|^{\frac{1}{p+1}}}
\end{equation}
The Jacobian $|\~J|$ can easily  be calculated using (\ref{eq_thetaprimetrace}) and (\ref{JJJ}). Indeed, the equation (\ref{JJJ}) can be, in $(\~x)$ coordinates, rewritten as
\[ (1 - \frac{1}{p+1} \<\widehat{\Pi},\widehat{F}\>)^{-1} \epsilon^{j_{1} \dots j_{p+1}} = \epsilon^{j_{1} \dots j_{p+1}} \~J_{i_{1}}^{1} \dots \~J_{i_{p+1}}^{p+1} \epsilon^{i_{1} \dots i_{p+1}}. \]
To justify this, note that Seiberg-Witten map acts nontrivially only in the directions of the first $(p+1)$-coordinates. The Jacobi matrix $\~J$ of $\rho_{A}$ in $(\~x)$ coordinates is thus a block upper triangular with identity matrix in the bottom right block. Moreover, the determinant of $\~J$ is then equal to the determinant of the top left block. We can divide both sides with $\epsilon^{j_{1} \dots j_{p+1}}$. We thus remain with the equation
\[ (1 - \frac{1}{p+1} \<\widehat{\Pi},\widehat{F}\>)^{-1} = \~J_{i_{1}}^{1} \dots \~J_{i_{p+1}}^{p+1} \epsilon^{i_{1} \dots i_{p+1}}= |\~J|. \]
Putting this back into (\ref{eq_JastildJ}), we obtain the useful relation
\begin{equation}
|J|^{p+1} = (1 - \frac{1}{p+1} \< \widehat{\Pi}, \widehat{F} \>)^{-(p+1)} \frac{|\widehat{\Pi}(x)|}{|\Pi(x)|},
\end{equation}
or using (\ref{detFprimeasfunction})\footnote{For $p=1$, one can derive this relation by calculating $|\~J|$ in Darboux-Weinstein coordinates directly from (\ref{JJJ}) and the definition of $\Pi'$, and then use (\ref{eq_JastildJ}).}
\begin{equation} \label{Jacobian}
|J|^{p+1} = \det{(1 - \widehat{\Pi} \widehat{F}^{T})}^{-1} \frac{|\widehat{\Pi}(x)|}{|\Pi(x)|}.
\end{equation}

Note that this expression does not depend on the choice of the Darboux coordinates in which the densities $|\Pi(x)|$ are calculated. We discuss this subtlety in the appendix \ref{sec_NPstructures} under (\ref{eq_Piintwocoordinates}). We see that $|\Pi(x)|$  itself transforms as in (\ref{eq_Pidensitytransform}). Fortunately, the determinant of the block $M$ in (\ref{eq_Piintwocoordinates}) does not depend on the coordinates $(\~x^{1}, \dots \~x^{p+1})$. Since these are the only coordinates changed by the Seiberg-Witten map, we get $(\det{M})(x) = (\det{M})(\rho_{A}(x))$. In other words, these determinants cancel out in the fraction $|\widehat{\Pi}(x)| / |\Pi(x)|$, as expected.

The following observation is in order: The Nambu-Poisson tensor $\Pi_t$ does not depend on the
choice of the gauge $p$-potential $A$. As already mentioned, the Nambu-Poisson map $\rho_{A}$ does: An infinitesimal gauge transformation $\delta A = d \lambda$ -- with a
$(p-1)$-form gauge transformation parameter $\lambda$ -- induces a change in the flow, which is generated by the vector field $X_{[\lambda, A]} = \Pi^{iJ} d\Lambda_J\partial_i$, where
\begin{equation}
\Lambda= \sum_{k=0}^{\infty} \frac{(\mathcal{L}_ {A_{t}^\sharp} +
\partial_t)^k(\lambda)}{(k+1)!}\Big| _{t=0}  \,,
\end{equation}
is the semiclassically noncommutative $(p-1)$-form gauge parameter.
This is the $p$-brane analog of the exact Seiberg-Witten map for the gauge transformation parameter. It is straightforwardly obtained by application of the BCH formula to $\rho^\ast_{A+d\lambda}(\rho^\ast_A)^{-1}$.
Finally, in analogy with the $p=1$ case, we define the (components of the) semiclassically noncommutative field strength to be
\begin{equation}\label{NCF}
\widehat F'_{i_1,\ldots, i_{p+1}} =\rho^{\ast}_A F'_{i_1,\ldots, i_{p+1}},
\end{equation}
i.e., the components of $F'$ evaluated in the covariant coordinates. Infinitesimally, components of $\hat F$ transform as
\begin{equation}\label{NCF2}
\delta\widehat F' =\Pi^{iJ} d\Lambda_J\partial_i \hat F',
\end{equation}
which justifies the adjectives ``semiclassically noncommutative".

\section{Nambu gauge theory; Equivalence of commutative and semiclassically
noncommutative DBI action} \label{DBI}

Here we consider a system of multiple open M2 branes ending on an M5 brane. We would like to describe this system by an
effective action that is exact, for slowly varying fields,  to all orders in the coupling constant. Since we focus only on the bosonic part of this action, we do not need to restrict ourselves to the values $p=2$ and $p'=5$ and our construction is valid for arbitrary values of $p$ and $p'$ such that $p\leq p'$.
Our goal is thus the construction of an effective action for a $p'$-brane with open $p$-branes ending on it while being submerged in a  $C_{p+1}$-background.
The construction is based on two guiding principles: Firstly, this effective action should have dual descriptions similar to the commutative and non-commutative ones of the D-brane and open strings\footnote{Actually, our exposition so far closely followed our previous work \cite{Jurco:2012gc}, where the role of generalized geometry was emphasized.} and secondly, it should feature expressions that also appear in the $p$-brane action (\ref{def_actiontotmatrixF}).

Denote the $p'$-brane submanifold as $N$. We shall now clarify the geometry underlying the following discussion. Originally, $g,\~g,C$ were assumed to be the closed membrane backgrounds in the ambient background manifold $M$. Hereafter, we denote by the same characters their pullbacks to the $p'$-brane $N$. This makes sense since all of them are covariant tensor fields on $M$. Little subtlety comes with the Nambu-Poisson tensor $\Pi$. We have basically two options. First, we would like to restrict some Nambu-Poisson tensor in $M$ to the $p'$-brane. This in fact requires $N$ to be a Nambu-Poisson submanifold of $M$. The latter option is to \emph{choose} the Nambu-Poisson tensor $\Pi$ on $N$ after we restrict the other backgrounds to $N$. The open membrane variables $G,\~G,\Phi$, calculated using the membrane open-closed relations (\ref{eq_occorrespondence1} - \ref{eq_occorrespondence4}), are assumed to be calculated entirely on $N$, using the pullbacks of closed variables. Finally, the field $F$ is assumed to be a $(p+1)$-form defined and having components only in $N$. All the discussion related to Seiberg-Witten map in the previous section is assumed to take place on the submanifold $N$.

The open-closed membrane relations (\ref{openclosedF}) immediately imply
\begin{multline} \label{miracle}
\det[g + (C+F) \tilde g^{-1} (C+F)^T] =  \det{}^2[1-F \Pi^{T}]\cdot \det[G+(\Phi + F') \tilde G^{-1} (\Phi + F')^T] \,,
\end{multline}
where $F' = (I - F \Pi^T)^{-1} F$. Obviously, in order get a sensible action we have to form an integral density, which can be integrated over the world volume of the larger $p'$-brane. And, in order to obtain a noncommutative action from the right hand side of \eqref{miracle}, we have to apply  the Seiberg-Witten map $\rho^{\ast}_A$ to it. It would be tempting to take the square root of the identity \eqref{miracle} to construct the action. But, recall \eqref{Jacobian} and notice the factor $\det{}^{-(p+1)}[1-F\Pi^T]$ appearing in it upon the application of the Seiberg-Witten map. Hence, not the square root but the $2(p+1)$-th root of \eqref{miracle} is the most natural choice to enter the effective action that we look for. As we already said, the Lagrangian density  must be an integral density, and therefore we need to multiply that piece of the action by a proper power of the determinant of the pullback of the target space metric. These considerations fix the action essentially uniquely and we postulate \vspace{-1ex}
\begin{equation} \label{pDBI}
S_\text{$p$-DBI} = -\int d^{p'+1} x \, \frac{1}{g_m} \det{}^\frac{p}{2(p+1)}(g) \cdot \det{}^{\frac{1}{2(p+1)}}\big[g + (C+F) \tilde g^{-1} (C+F)^T \big] \,,
\end{equation}
where $g_m$ is a ``closed membrane'' coupling constant. The integration is over the $p'$-brane and the fields $g$, $\tilde g$, and $C$ in this expression are the pull-backs of the corresponding background target space fields to this $p'$-brane.
Asking for
\begin{multline} \label{MI}
\frac{1}{g_m} \det{}^\frac{p}{2(p+1)}g \cdot \det{}^{\frac{1}{2(p+1)}}\big[g + (C+F) \tilde g^{-1} (C+F)^T \big] \\
= \frac{1}{G_m} \det{}^\frac{p}{2(p+1)}(G)\det{}^{\frac{1}{(p+1)}}[1-\Pi F^T] \cdot \det{}^{\frac{1}{2(p+1)}}\big[G + (\Phi+F') \tilde G^{-1} (\Phi+F')^T \big],
\end{multline}
it follows from \eqref{miracle} that the closed and open coupling constants $g_m$ and $G_m$ must be related as
\begin{equation}
G_m = g_m \left(\det G/\det g\right)^{\frac{p}{2(p+1)}} \,.
\end{equation}

As desired, the action (\ref{pDBI}) is exactly equal to its ``noncommutative'' dual
\vspace{-1ex}
\begin{equation}\label{pNCDBI}
S_\text{$p$-NCDBI} =  -\int d^{p'+1} x \, \frac{1}{\widehat G_m} \,\frac{\widehat{|\Pi|}^\frac{1}{p+1}}{|\Pi|^\frac{1}{p+1}}  \det{}^{\frac{p}{2(p+1)}} \widehat G
 \cdot \det{}^{\frac{1}{2(p+1)}}\big[ \widehat G + (\widehat \Phi + \widehat F') \widehat{\tilde G}{}^{-1} (\widehat \Phi + \widehat F')^T  \big] \,,
\end{equation}
where as before $\,\widehat{\;}\,$ denotes objects evaluated at covariant coordinates\footnote{Let us emphasize that this is not a coordinate transformation of a tensor. We just evaluate the component functions in different coordinates.} and $\widehat F'$ is the Nambu (NC) field strength \eqref{NCF}.
This follows from integrating of \eqref{MI} followed by the change of integration variables on its right hand side according to the Seiberg-Witten map.

The factor involving the quotient of $\widehat{|\Pi|}$ and $|\Pi|$ vanishes for constant $|\Pi|$, but it is essential for the gauge invariance of (\ref{pNCDBI}) in all other cases.

Let us give two alternative, but equivalent, expressions for the action \eqref{pDBI}, which might turn out to be useful when looking for supersymmetric generalizations. The first one is obvious:
\begin{equation} \label{pDBI1}
S_\text{$p$-DBI} = -\int d^{p'+1} x \, \frac{1}{g_m} \det{}^\frac{1}{2}(g) \cdot \det{}^{\frac{1}{2(p+1)}}\big[1 + g^{-1}(C+F) \tilde g^{-1} (C+F)^T \big] \,.
\end{equation}
A very similar expression can be found using (\ref{eq_detidentity})
\begin{equation} \label{pDBI1b}
S_\text{$p$-DBI} = -\int d^{p'+1} x \, \frac{1}{g_m} \det{}^\frac{1}{2}(g) \cdot \det{}^{\frac{1}{2(p+1)}}\big[1 + \~g^{-1}(C+F)^{T} g^{-1} (C+F) \big] \,.
\end{equation}

For the second one, let us note that $\det \tilde g= \det^{p'\choose p-1}g$, in the case of factorizable $\~g$.
Hence, in this case:
\begin{equation} \label{pDBI2}
S_\text{$p$-DBI} = - \int d^{p'+1}x \frac{1}{g_m}  \det{}^{\frac{p-{p'\choose p-1}}{2(p+1)}}g\cdot \det{}^{\frac{1}{2(p+1)}} \bm{g}{(C+F)}{-(C+F)^{T}}{\~g}.
\end{equation}

Let us note that in the case of a D-brane, i.e., $p=1$, we get indeed the DBI D-brane action. In the other extreme case, $p=p'$, we get\footnote{The notation $S_M$ will be justified later.}

\begin{equation} \label{S_M}
S_M = - \int d^{p+1}x \frac{1}{g_m} \det{}^{\frac{1}{2(p+1)}} \bm{g}{(C+F)}{-(C+F)^{T}}{\~g}.
\end{equation}

Now we can compare our action, e.g, to the DBI part of the M5-brane action in equation (2.9) of \cite{Cederwall:1997gg}, \cite{Bao:2006ef}. Their action is, up to conventions,
\begin{equation}\label{Cederwall}
S'=-\int d^6x\, \sqrt{\det g} \sqrt{1 + \frac{1}{3} { \rm tr}k -  \frac{1}{6} { \rm tr}{k}^2  +
 \frac{1}{18} ({ \rm tr}\,{k})^2}\, ,
\end{equation}
where $k^i_j=(dA+C)^{ikl}(dA +C)_{jkl}$ is the modified field strength.
(See also \cite{Bergshoeff:1996ev}, for an early proposal with a similar index structure.)
The form of the polynomial in $k$ in the action has been determined by lengthy computation based on $\kappa$-symmetry and the requirement of
non-linear self-duality, the self-duality relations being consistently decoupled from the background. More precisely, in \cite{Cederwall:1997gg}, \cite{Bao:2006ef}, it is shown that consistency of the non-linear self-duality is restrictive enough that demanding $\kappa$-symmetry gives its explicit form, which can be obtained without a priori specifying the form of the polynomial in the action. At the same time the projector specifying the $\kappa$-symmetry and the form of the polynomial are determined.  

To our surprise, we found that this action $S'$ can be interpreted as a
low-energy (second order in $k$) approximation of our $p$-DBI action (\ref{pDBI}). Indeed,for $p=2$ and $p'=5$ we have $d^{p'+1}x = d^6 x$, $\frac{1}{2(p+1)} = \frac{1}{6}$ and
\[
det^\frac{1}{6}(1 + k) = \sqrt{1 + \frac{1}{3} { \rm tr}k -  \frac{1}{6} { \rm tr}k^2  +
 \frac{1}{18} ({ \rm tr}\,k)^2 + \ldots}\,.
\]
The fact that two very different approaches (one based on non-linear self-duality and $\kappa$-symmetry, the other on commutative/non-commutative duality) give rise to the same action in the low energy limit is very encouraging and seems to indicate that our proposal can indeed be extended to a full supersymmetric action.

Finally, let us mention that noncommutative structures in the context of the M5 brane have previously been discussed, for example, in \cite{Berman:2001rka} and \cite{Berman:2004jv}. However, the type of noncommutativity discussed in these earlier papers is the well-known deformation of the commutative point-wise multiplication along a (constant) Poisson tensor that already appeared in the $p=1$ string theory case. This is very different from the notion of noncommutativity that we argue to be pertinent for $p>1$ and in particular for the  $p=2$ case relevant for the M5 brane: For $p>1$, we do not deform the commutative product -- our ``noncommutativity'' has rather to be understood in the Nambu-Poisson sense as explained in detail above, cf. the remark at the end of the previous section.

\section{Background independent gauge} \label{sec_BIG}

For $p=1$, assuming that the pullback of the background $2$-form $C$ to the $p'$-brane $N$ is non-degenerate and closed (that is symplectic), one can choose the bivector $\Pi$ to be the inverse of $C$ (that is a Poisson bivector corresponding to the symplectic structure $C$).
Solving the open-closed relations then gives
\begin{equation} \label{eq_classicBIG} G = -Cg^{-1}C , \  \Phi = -C. \end{equation}
This is known as the background independent gauge \cite{Seiberg:1999vs}. Our aim is to generalize this construction for $p \geq 1$, even giving milder assumptions on $C$ for $p=1$.

Let us start on the level of linear algebra first. Assume that $V$ is a finite-dimensional vector space. Let $g$ be an inner product on $V$, and $C \in \Lambda^{2}V^{\ast}$ a $2$-form. Let $P: V \rightarrow V$ denote a projector orthogonal with respect to $g$, such that
\[ \ker(C) = \ker(P), \]
where $C$ is viewed as a map $C: V \rightarrow V^{\ast}$. Then there exists a unique bivector $\Pi \in \Lambda^{2}V$, satisfying
\begin{equation} \label{def_thetapinv}
 \Pi C = P \ , \ P \Pi = \Pi.
\end{equation}
The reader can find the proof of this statement in proposition \ref{tvrz_2formpseudoinverse} of appendix \ref{sec_BIGsupplement}.

Recall that open-closed relations for $p=1$ have the form
\begin{equation}
\frac{1}{g+C} = \frac{1}{G + \Phi} + \Pi.
\end{equation}
This equality can be rewritten as
\begin{equation}
G + \Phi = (1 - (g+C)\Pi)^{-1}(g+C).
\end{equation}
Using (\ref{def_thetapinv}), one gets
\[ G + \Phi = P'^{T}gP' - Cg^{-1}C - C, \]
where $P' = 1-P$. From this we can read of the symmetric and skew-symmetric part to get
\begin{equation} \label{eq_BIGocsolution} G = P'^{T}gP' - Cg^{-1}C \ , \ \Phi = -C. \end{equation}
We can view this as a generalization of (\ref{eq_classicBIG}), not assuming a non-degenerate $C$. See that $G$ is again a positive definite metric, and $G + \Phi$ is thus invertible. Note that we are now on the level of a single vector space $V$, not discussing any global properties of $\Pi$ yet.

We would like to generalize this procedure to $p \geq 1$ case. Our goal is to find a suitable choice for $\Pi$, such that $\Phi = -C$. Assume that $C: \Lambda^{p}V \rightarrow V^{\ast}$ is a linear map, $g$ is an inner product on $V$, and $\~g$ is an inner product on $\Lambda^{p}V$. The key is to keep in mind the open-closed relations (\ref{eq_ocalaSW}). We see that by defining
\[ \G = \bm{g}{0}{0}{\~g} \ , \ \B = \bm{0}{C}{-C^{T}}{0}, \]
we get an inner product $\G$ on $W \equiv V \oplus \Lambda^{p}V$, and a bilinear skew-symmetric form $\B \in \Lambda^{2} W^{\ast}$.

The situation is thus analogous to the previous one, if we replace $V$ by $W$, the metric $g$ by $\G$, and the $2$-form $C$ by $\B$.  If we define $\mathcal{P}$ to be an orthogonal projector with respect to $\G$ with $\ker(\mathcal{P}) = \ker(\B)$, we may again apply  proposition \ref{tvrz_2formpseudoinverse} to see that there exists a unique $\Theta \in \Lambda^{2}W$, such that
\begin{equation} \label{eq_Pg1BIGPinv}
 \Theta \B = \P \ , \P \Theta = \Theta.
\end{equation}
Now we can solve the open-closed relations (\ref{eq_ocalaSW}) for this choice of $\Theta$, using the same calculation as we did in order to obtain (\ref{eq_BIGocsolution}). One gets
\begin{equation} \label{eq_Pg1BIGsolutions}
\H = \P'^{T} \G \P' - \B \G^{-1} \B \ , \ \Xi = -\B,
\end{equation}
where $\P' = 1 - \P$. Exploring what $\B$ and $\Xi$ are, leads to $\Phi = -C$, as intended. However, we do not know whether $\H$ and $\Theta$ obtained by this procedure are of the suitable form, that is whether $\H$ is block-diagonal and $\Theta$ block-off-diagonal. This can be easily proved by examining the projector $\P$. Clearly, one has
\[ \ker{\B} = \ker{C^{T}} \oplus \ker{C} \subseteq V \oplus \Lambda^{p}V. \]
Therefore we have that $\Img(\P) = \ker{\B}^{\perp} = (\ker{C}^{T})^{\perp(g)} \oplus (\ker{C})^{\perp(\~g)}$. This proves that in a block form, we have
\[ \mathcal{P} = \bm{P}{0}{0}{\~P}, \]
where $P: V \rightarrow V$ is an orthogonal projector with respect to $g$, and $\~P: \Lambda^{p}V \rightarrow \Lambda^{p}V$ is an orthogonal projector with respect to $\~g$. This and the relation (\ref{eq_Pg1BIGsolutions}) imply that $\H$ is block-diagonal.
The second equality in (\ref{eq_Pg1BIGPinv}) then proves that $\Theta$ is block-off-diagonal, that is
\[ \Theta = \bm{0}{\Pi}{-\Pi^{T}}{0}, \]
where $\Pi: \Lambda^{p}V^{\ast} \rightarrow V$.
We can now simply extract all the relations from (\ref{eq_Pg1BIGPinv}). The equality $\Theta \B = \P$ gives
\[ \bm{0}{\Pi}{-\Pi^{T}}{0} \bm{0}{C}{-C^{T}}{0} = \bm{P}{0}{0}{\~P}, \]
which translates into
\begin{equation} \label{eq_Cpinv}
\Pi C^{T} = -P \ , \ \Pi^{T}C = -\~P.
\end{equation}
Rewriting the equation $\B \P = \B$, we get
\[ \bm{0}{C}{-C^{T}}{0} \bm{P}{0}{0}{\~P} = \bm{0}{C}{-C^{T}}{0}, \]
which translates into
\begin{equation}
C \~P = C \ , \ C^{T}P = C^{T}.
\end{equation}
Also see that $\ker(\~P) = \ker(C)$, and $\ker(P) = \ker(C^{T})$. The equality $\P \Theta = \Theta$ gives
\[  \bm{P}{0}{0}{\~P} \bm{0}{\Pi}{-\Pi^{T}}{0} = \bm{0}{\Pi}{-\Pi^{T}}{0}, \]
and thus
\begin{equation} \label{eq_ThPcoop}
 P \Pi = \Pi \ , \ \~P \Pi^{T} = \Pi^{T}.
\end{equation}
Finally, we may examine (\ref{eq_Pg1BIGsolutions}) to find
\begin{equation} \label{eq_OCsoltionsBIG}
 G = P'^{T}gP' + C \~g^{-1} C^{T} \ , \ \~G = \~P'^{T}\~g\~P' + C^{T}g^{-1}C \ , \ \Phi = -C.
\end{equation}
We have thus shown that, corresponding to the orthogonal projectors $P$ and $\~P$ and the linear map $C: \Lambda^{p}V \rightarrow V^{\ast}$, there exists a unique linear map $\Pi: \Lambda^{p}V^{\ast} \rightarrow V$, such that (\ref{eq_Cpinv}) and (\ref{eq_ThPcoop}) hold. Plugging this $\Pi$ into open-closed relations (\ref{eq_ocalaSW}) gives (\ref{eq_OCsoltionsBIG}).

To use this for our purposes, we have to impose conditions on $C$ to ensure that $\Pi$ is a Nambu-Poisson tensor.

For $p>1$, first observe that the linear map $\Pi: \Lambda^{p}V^{\ast} \rightarrow V$ induced (at a chosen point on $M$) by a Nambu-Poisson tensor  has rank either $0$ or $p+1$. Since $\Pi$  always has the same rank as $C$, we get the first assumption on the linear map $C$.

There will always arise problems with the smoothness of $\Pi$ at points $x \in N$, where $C(x) = 0$. If this set has measure zero, we can change the area of integration in DBI action from $N$ to an open submanifold $N'$, where $C(x) \neq 0$. If not, we cannot go to the background-independent gauge. Let us hereafter assume that $C(x) \neq 0$ for all $x \in N$, and therefore that $\mbox{rank}(C) = p+1$.

Now assume that the linear map $C$ is induced by a $(p+1)$-form $C \in \Lambda^{p+1}V^{\ast}$. Note that in this case, we always have the estimate $\mbox{rank}(C) \geq p+1$.

Let $D \subseteq V$ denote the non-degenerate subspace of $C^{T}$ orthogonal (with respect to $g$) to its kernel, that is $D = \ker(C^{T})^{\perp}$. Assumption on the rank of $C$ thus means that $\dim(D) = p+1$. From the skew-symmetry of $C$, we have that $C \in \Lambda^{p+1}D^{\ast}$. It is thus a top-level form on $D$. Choose now an orthonormal basis $(e_{1}, \dots, e_{p+1})$ of $D$. We see that
\begin{equation} \label{eq_CinONframe}
 C = \lambda \cdot e^{1} \^ \dots \^ e^{p+1},
\end{equation}
where $\lambda \neq 0$. Now, choosing an arbitrary complementary basis $(f_{1}, \dots, f_{p'-p})$ of $\ker(C^{T}) \equiv D^{\perp}$, one can find counterexamples to the assumption that, for a general $\~g$, the map $\Pi$ is a $(p+1)$-vector (although it has a correct rank). We thus have to add the second assumption: $\~g$ has to be of the special skew-symmetrized tensor product form (\ref{def_tensorofg}).

In this case we find that $\Lambda^{p}D$ is spanned by orthonormal basis of the form $e_{1} \^ \dots \^ \hat{e}_{r} \^ \dots \^ e_{p+1}$. This allows us to write $\Pi$ explicitly as
\begin{equation} \label{eq_Thetacalculated}
\Pi = - \frac{1}{\lambda} \cdot e_{1} \^ \dots \^ e_{p+1}.
\end{equation}
It is easy to show that such a $\Pi$ indeed satisfies (\ref{eq_Cpinv}) and (\ref{eq_ThPcoop}), and since such a $\Pi$ is unique, this is the one. We can thus conclude that for $\mbox{rank}(C) = p+1$, and $\~g$ in the form (\ref{def_tensorofg}), $\Pi$ is a $(p+1)$-vector, more precisely $\Pi \in \Lambda^{p+1} D$.

We now turn our attention to global properties. If we assume that $C(x) \neq 0$ on the $p'$-brane, we can define the subspace $D$ at every point, defining a smooth subbundle (it is an orthogonal complement to the kernel of constant rank vector bundle morphism $C^{T}$). Around any point, we can choose a local orthonormal frame $(e_{1}, \dots, e_{p+1})$, forming a local basis for the sections of $D$. The expression (\ref{eq_Thetacalculated}) proves that $\Pi$ is a smooth $(p+1)$-vector on the $p'$-brane, since $\frac{1}{\lambda}$ is a smooth function.

Finally, we have to decide under which conditions $\Pi$ forms a Nambu-Poisson tensor. In the view of lemma \ref{lem_nptoplevel}, we see that the sufficient and necessary condition is that the subbundle $D$ defines an integrable distribution in $N$. This distribution has to be regular, and thus, this condition is  equivalent to the involutivity of $D$ under vector field commutator: $[D,D] \subseteq D$.

One can find a simple equivalent criterion for $C$ to define an integrable distribution $D$.  In order to do so, assume now that $(e_{1}, \dots, e_{p+1}, f_{1}, \dots f_{p'-p})$ is a positively oriented orthonormal local frame for $N$, such that $(e_{1}, \dots, e_{p+1})$ is a local orthonormal frame for $D$. The metric volume form $\Omega_{g}$ is then by definition
\[ \Omega_{g} = e^{1} \^ \dots \^ e^{p+1} \^ f^{1} \^ \dots \^ f^{p'-p}. \]
Having a volume form, one can form the Hodge dual of $C$. Using (\ref{eq_CinONframe}) we get
\[ \ast C = \lambda \cdot f^{1} \^ \dots \^ f^{p'-p}. \]
We see that $D = \ker(\ast C)^{T}$, $(\ast C)^{T}: TN \rightarrow \Lambda^{p'-p-1} T^{\ast}N$. But forms with integrable kernel distribution have their own name, they are called integrable forms, see Appendix \ref{sec_BIGsupplement} for the definition and basic properties. We can conclude that $\Pi$ is a Nambu-Poisson $(p+1)$-vector if and only if $\ast C$ is an integrable everywhere non-vanishing $(p' - p)$-form on $N$. Note that the Hodge star is defined with respect to the induced metric on $N$.

There exists a nice sufficient integrability condition: If $C$ is a $(p+1)$-form of rank $p+1$, such that $\delta C = 0$, then $\ast C$ is integrable. By $\delta$ we denote the codifferential defined using the Hodge duality. Note that $\delta C = 0$ are the non-homogeneous charge free Maxwell equations for the field strength $C$. Also, note that in the whole discussion, we do not need the integrability of the distribution $D^{\perp}$. Since $C$ is already a non-vanishing $(p+1)$-form of rank $p+1$, the sufficient condition for integrability of $D^{\perp}$ is $dC = 0$. Interestingly, both $D$ and $D^{\perp}$ are integrable regular distributions if $C$ is a $(p+1)$-form of rank $p+1$, satisfying the Maxwell equations $dC = 0$, $\delta C = 0$.

For $p=1$, the discussion is very similar, except that the rank of $C$ can be any nonzero even integer not exceeding $n$. This adds another condition on $dC$. In particular, the necessary and sufficient condition on $C$ to define a Poisson tensor $\Pi$ is the integrability of the  regular smooth distribution $D$, and a condition $dC|_{\Gamma(D)} = 0$.
\section{Non-commutative directions, double scaling limit} \label{sec_NCdirections}
By the construction of the preceding section, we have the decompositions
\[ TM = D \oplus D^{\perp}, \ \TM{p} = \~D \oplus \~D^{\perp}, \]
where $\~D = \Lambda^{p}D$. We say that tangent vectors contained in $D$ point in ``non-commutative" directions. Because $D$ is integrable, around each point there are coordinates  such that $D$ is spanned by coordinate tangent vectors corresponding to first $p+1$ of these coordinates. These local coordinates are accordingly called ``non-commutative" coordinates. This terminology comes from the fact that for $p=1$, we have $\{x^{i},x^{j}\} = \Pi^{ij}$. The right-hand side is non-vanishing when both $x^{i}$ and $x^{j}$ correspond to $D$. This gives non-vanishing quantum-mechanical commutator of these coordinates.

We can thus write all involved quantities in the block matrix form corresponding to this decomposition. From the orthogonality of respective subspaces, the matrices of $g$ and $\~g$ will be block diagonal:
\[ g = \bm{g_{\nc}}{0}{0}{g_{\c}}, \ \~g = \bm{\~\g_{\nc}}{0}{0}{\~g_{\c}}, \]
where $g_{\nc}$ is a positive definite fibrewise metric on $D$, $\g_{\c}$ is a positive definite fibrewise metric on $D^{\perp}$ and $\~g_{\nc}$ and $\~g_{\c}$ are positive definite fibrewise metrics on $\~D$ and $\~D^{\perp}$, respectively. In the same fashion we obtain
\[ C = \bm{C_{\nc}}{0}{0}{0}, \ \Pi = \bm{\Pi_{\nc}}{0}{0}{0} \ , \  F = \bm{F_{\nc}}{F_{\Fi}}{F_{\Fii}}{F_{\c}}. \]
Examine how the $F$-gauged tensor $\Pi'$ looks like in this block form.
We have
\[ 1 - F^{T} \Pi = \bm{1 - F_{\nc}^{T} \Pi_{\nc}}{0}{-F_{\Fi}^{T} \Pi_{\nc}}{1}. \]
Hence
\[ \Pi' \equiv \Pi (1 - F^{T}\Pi)^{-1} = \bm{ \Pi_{\nc} (1 - F_{\nc}^{T} \Pi_{\nc})^{-1}}{0}{0}{0}. \]
Denote $\Pi_{\nc}' = \Pi_{\nc}(1 - F_{\nc}^{T} \Pi_{\nc})^{-1}$. We also have $\Pi_{\nc}' = (1 - \Pi_{\nc} F_{\nc}^{T})^{-1} \Pi_{\nc}$. Also, note that in this formalism $P$ and $\~P$ are simply given as
\[ P = \bm{1}{0}{0}{0} \ , \ \~P = \bm{1}{0}{0}{0}. \]
Hence, the defining equations of $\Pi$ can be written as
\begin{equation} \label{eq_tildCinv}
 \Pi_{\nc} C_{\nc}^{T} = -1 \ , \ \Pi_{\nc}^{T} C_{\nc} = -1.
\end{equation}

Having this in hand, recall that for $p=1$, the background independent gauge could be obtained in a completely different way. It was obtained by Seiberg and Witten in \cite{Seiberg:1999vs} as a following limit of the relation (\ref{eq_p1correspondence}). Reintroducing the Regge slope $\alpha'$ into description, the relation between closed variables $g$, $C$ and Nambu fields $G_{\NB}$, $\Pi_{\NB}$ is explicitly
\[ G_{\NB} = g - (2\pi \alpha')^{2} Cg^{-1}C^{T}, \ \frac{1}{2 \pi \alpha'} \Pi_{\NB} = -(2 \pi \alpha')g^{-1}C \big(g - (2\pi \alpha')^{2}Cg^{-1}C\big)^{-1}. \]
Now one would like to do the zero slope limit $\alpha' \rightarrow 0$ in a way such that $G_{\NB}$ and $\Pi_{\NB}$ remain finite. This clearly requires the simultaneous scaling of the metric $g$. Scaling the $g$ as a whole will not work, since the resulting $G_{\NB}$ will not be a metric. The correct answer is given by scaling the non-commutative part $g_{\nc}$ and commutative part $g_{\c}$ of the metric $g$ differently. The resulting maps $G_{\NB}$ and $\Pi_{\NB}$ also split accordingly as
\[ G_{\NB \nc} = g_{\nc} - (2\pi \alpha')^{2} C_{\nc} g_{\nc}^{-1} C_{\nc}^{T}, \ G_{\NB \c} = g_{\c}, \]
\[ \frac{1}{2 \pi \alpha'}\Pi_{N \nc} =  -(2\pi \alpha') g_{\nc}^{-1} C_{\nc} (g_{\nc} - (2 \pi \alpha')^{2}C_{\nc} g_{\nc}^{-1}C_{\nc})^{-1}. \]
Now, scaling $g_{\nc} \propto \epsilon$, $g_{\c} \propto 1$, $\alpha' \propto \epsilon^{\frac{1}{2}}$ as $\epsilon \mapsto 0$ gives in this limit
\[ G_{N \nc} = - C_{\nc} g_{\nc}^{-1} C_{\nc}^{T}, \ G_{N \c} = g_{\c}, \]
\[ \Pi_{N \nc} = C_{\nc}^{-1}. \]
Replacing $\Pi_{N}$ by $\Pi$ and $G_{N}$ by $G$ is exactly the background independent gauge. This double scaling limit was then used to determine which terms should be kept in the expansion of the DBI action. We would like to find an analogue of this in our $p>1$ case.\footnote{See \cite{Chen:2010br} for a previous discussion of the double scaling limit in the context of the M2/M5 system that came to different conclusions regarding the appropriate powers of $\epsilon$.} We immediately see that first naive answer would be wrong. One of the relations is
\[ G_{\NB \nc} = g_{\nc} + C_{\nc} \~g_{\nc}^{-1} C_{\nc}^{T}. \]
Note that $\~g_{\nc}$ is again a skew-symmetrized $p$-fold tensor product of $g_{\nc}$. This suggests that if $g_{\nc} \propto \epsilon$, then $\~g_{\nc} \propto \epsilon^{p}$. This would imply that $C_{\nc} \propto \epsilon^{\frac{p}{2}}$ in order to keep $G_{N \nc}$ finite (we have included $\epsilon$ into $C$). But the second relation is
\[ \~G_{N \bullet} = \~g_{\nc} + C_{\bullet}^{T} g_{\bullet}^{-1} C. \]
This shows that $\~G_{N} \rightarrow 0$ as $\epsilon \rightarrow 0$. This is clearly not very plausible. However, this can still be fixed by using the remaining gauge fixing freedom of the Polyakov action (\ref{def_polyakov}) by scaling also the ratio between $g$ and $\~g$. The biggest issue comes with the fact that $\~g_{\c}$ is not a tensor product of $g_{\c}$'s only. In fact, every component $(\~g_{\c})_{IJ}$ contains as many $g_{\nc}$'s as the number of ``commutative" indices in $I$ (or $J$) is. This means that every component of $\~g_{\c}$ should scale differently. We must thus abandon the idea of scaling just $g$, we have to scale $\~g$ independently! The correct answer is given by the geometry of the vector bundle $W= TM \oplus \Lambda^p TM$ again. We immediately see that scaling $\G_{\nc} \propto \epsilon$, $\G_{\c} \propto 1$ and $\B \propto \epsilon^{\frac{1}{2}}$ gives in limit $\epsilon \rightarrow 0$ the background independent gauge. This corresponds to
\begin{equation} \label{def_doublescaling}
 g_{\nc} \propto \epsilon, \ \~g_{\nc} \propto \epsilon, \ g_{\c} \propto 1, \ \~g_{\c} \propto 1, \ C_{\nc} \propto \epsilon^{\frac{1}{2}}.
\end{equation}
Let us note that in the case of an M5 brane a scaling treating directions differently was described in \cite{Bergshoeff:2000ai} and \cite{Bergshoeff:2000jn}. It would be interesting to compare the scaling in these papers with the one introduced here. 
\section{Matrix model} \label{sec_matrix}

Now we will apply the previous generalization of the background independent gauge.
We will use the double scaling limit to cut off the power series expansion of the DBI action. It turns out that we find an action describing a natural $p>1$ (semi-classical) analogue of a matrix model with higher brackets and an interacting with the gauge field $F$. It will be of order $2(p+1)$ in the matrix variables $\widehat{X}^a$,  and at most quadratic in  $F$. The term of order $2(p+1)$ in $\widehat{X}^a$'s and constant in $F$ gives a possible $p>1$ analogue of the semiclassical pure matrix model.

Assume that $C$ satisfies all the conditions required for $\Pi$ to be a Nambu-Poisson tensor on $N$.
From (\ref{pDBI1}), we have that Lagrangian of the commutative $p$-DBI action has the form
\[ \mathcal{L}_{p-\text{DBI}} = - \frac{1}{g_{m}} \det{}^{\frac{1}{2}}(g) \cdot \det{}^{\frac{1}{2(p+1)}}[ 1 + g^{-1}(C+F)\~g^{-1}(C+F)^{T}]. \]
Note that the second determinant is the determinant of the vector bundle endomorphism $X: TM \rightarrow TM$, where $X = 1 + g^{-1}(C+F)\~g^{-1}(C+F)^{T}$.
In the block form $X: D \oplus D^{\perp} \rightarrow D \oplus D^{\perp}$, we have
\[ X = \bm{1 + g_{\nc}^{-1}(C_{\nc}+F_{\nc})\~g_{\nc}^{-1}(C_{\nc}+F_{\nc})^{T} + g_{\nc}^{-1}F_{\Fi}\~g_{\c}^{-1}F_{\Fi}^{T}}{g_{\nc}^{-1}(C_{\nc}+F_{\nc})\~g_{\nc}^{-1}F_{\Fii}^{T} + g_{\nc}^{-1}F_{\Fi}\~g_{\c}^{-1}F_{\c}^{T}}{g_{\c}^{-1}F_{\Fii}\~g_{\nc}^{-1}(C_{\nc}+F_{\nc})^{T} + g_{\c}^{-1}F_{\c}\~g_{\c}^{-1}F_{\Fi}^{T}}{1 + g_{\c}^{-1}F_{\Fii}\~g_{\nc}^{-1}F_{\Fii}^{T} + g_{\c}^{-1} F_{\c} \~g_{\c}^{-1} F_{\c}^{T}}. \]
Here we have used the following notations for the blocks of $F$
\[ F = \bm{F_{\nc}}{F_{\Fi}}{F_{\Fii}}{F_{\c}}. \]
This can be decomposed as a product
\[ X = \bm{g_{\nc}^{-1}(C_{\nc} + F_{\nc})}{0}{0}{1} Y \bm{\~g_{\nc}^{-1}(C_{\nc} + F_{\nc})^{T}}{0}{0}{1}, \]
where the vector bundle endomorphism $Y: \~D \oplus D^{\perp} \rightarrow \~D \oplus D^{\perp}$ is
\[ Y = \bm{1 + \Pi_{\nc}'^{T}(g_{\nc} + F_{\Fi} \~g_{\c}^{-1} F_{\Fi}^{T}) \Pi_{\nc}' \~g_{\nc}}{\~g_{\nc}^{-1}(F_{\Fii}^{T} - \~g_{\nc} \Pi_{\nc}'^{T}F_{\Fi} `\~g_{\c}^{-1} F_{\c}^{T})}{g_{\c}^{-1}(F_{\Fii} - F_{\c} \~g_{\c}^{-1}F_{\Fi}^{T} \Pi_{\nc}' \~g_{\nc})}{1 + g_{\c}^{-1}F_{\Fii}\~g_{\nc}^{-1}F_{\Fii}^{T} + g_{\c}^{-1}F_{\c}\~g_{\c}^{-1}F_{\c}^{T}}. \]
Writing $Y$ in block form as
\[ Y = \bm{Y_{\nc}}{Y_{\Fi}}{Y_{\Fii}}{Y_{\c}}, \]
note that $Y_{\nc}$ is an invertible matrix. This is true because it is a top left block of the matrix $Y$ coming from positive definite matrix $g + (C+F)\~g^{-1}(C+F)$ by multiplying it by invertible block-diagonal matrices. Hence, we can write
\begin{equation} \det{(Y)} = \det{(Y_{\nc})} \det{ (Y_{\c} - Y_{\Fi} Y_{\nc}^{-1} Y_{\Fii})}. \label{detY1}\end{equation}
The second matrix has the form
\[
\begin{split}
Y_{\c} - Y_{\Fi} Y_{\nc}^{-1} Y_{\Fii} &= 1 + g_{\c}^{-1} F_{\Fii} (1 - Y_{\nc}^{-1})\~g_{\nc}^{-1} F_{\Fii}^{T} + g_{\c}^{-1} F_{\c} \~g_{0}^{-1} F_{\c}^{T} + g_{\c}^{-1} F_{\Fii} Y_{\nc}^{-1} \Pi_{\nc}'^{T} F_{\Fi} \~g_{\c}^{-1} F_{\c}^{T} \\
& + g_{\c}^{-1} F_{\c} \~g_{\c}^{-1} F_{\Fi}^{T} \Pi_{\nc}' \~g_{\nc} Y_{\nc}^{-1} \~g_{\nc}^{-1} F_{\Fii}^{T} - g_{\c}^{-1} F_{\c} \~g_{\c}^{-1} F_{\Fi}^{T} \Pi_{\nc}' \~g_{\nc} Y_{\nc}^{-1} \Pi_{\nc}'^{T} F_{\Fi} \~g_{\c}^{-1} F_{\c}^{T}.
\end{split}
\]
At this point, we will employ the double scaling limit introduced above. Namely, in the $\det^{\frac{1}{2(p+1)}}(Y)$, we wish to keep only the terms scaling at most as $\epsilon^{1}$. Note that $(Y_{\nc} - 1) \propto \epsilon$. Also, $Y_{\nc}^{-1} = 1 - (Y_{\nc} - 1) + o(\epsilon^{2})$. Using this, we can write
\[
\begin{split}
Y_{\c} - Y_{\Fi} Y_{\nc}^{-1} Y_{\Fii} = 1 + g_{\c}^{-1} \Big( F_{\Fii} \Pi_{\nc}'^{T} g_{\nc} \Pi_{\nc}' F_{\Fii}^{T} +  \big( F_{\Fii} \Pi_{\nc}'^{T} F_{I} + F_{\c} \big) \~g_{\c}^{-1} \big( F_{\Fii} \Pi_{\nc}'^{T} F_{I} + F_{\c} \big)^{T} \Big) + o(\epsilon^{2}).
\end{split}
\]
The whole term in parentheses after $g_{0}^{-1}$ is of order $\epsilon^{1}$. Therefore, we have
\[
\begin{split}
\det{}^{\frac{1}{2(p+1)}}(Y_{\c} - Y_{\Fi} Y_{\nc}^{-1} Y_{\Fii}) & = 1 + \frac{1}{2(p+1)} \tr ( g_{\c}^{-1} F_{\Fii} \Pi_{\nc}'^{T} g_{\nc} \Pi_{\nc}' F_{\Fii}^{T} ) \\
& +  \frac{1}{2(p+1)} \tr \Big( g_{\c}^{-1} \big( F_{\Fii} \Pi_{\nc}'^{T} F_{I} + F_{\c} \big) \~g_{\c}^{-1} \big( F_{\Fii} \Pi_{\nc}'^{T} F_{I} + F_{\c} \big)^{T} \big) \Big) + o(\epsilon^{2}).
\end{split}
\]
For the first factor in (\ref{detY1}), we have
\[ \det{}^{\frac{1}{2(p+1)}}(Y_{\nc}) = 1 + \frac{1}{2(p+1)} \tr\big( \Pi_{\nc}'^{T} (g_{\nc} + F_{\Fi} \~g_{\c}^{-1} F_{\Fi}^{T}) \Pi_{\nc}' \~g_{\nc} \big) + o(\epsilon^{2}). \]
Putting all together, we obtain
\begin{equation} \label{eq_detYexpansion}
\begin{split}
\det{}^{\frac{1}{2(p+1)}}(Y) & = 1 + \frac{1}{2(p+1)} \tr\big( \Pi_{\nc}'^{T} (g_{\nc} + F_{\Fi} \~g_{\c}^{-1} F_{\Fi}^{T}) \Pi_{\nc}' \~g_{\nc} \big) + \frac{1}{2(p+1)} \tr ( g_{\c}^{-1} F_{\Fii} \Pi_{\nc}'^{T} g_{\nc} \Pi_{\nc}' F_{\Fii}^{T} ) \\
& +  \frac{1}{2(p+1)} \tr \Big( g_{\c}^{-1} \big( F_{\Fii} \Pi_{\nc}'^{T} F_{I} + F_{\c} \big) \~g_{\c}^{-1} \big( F_{\Fii} \Pi_{\nc}'^{T} F_{I} + F_{\c} \big)^{T} \Big) + o(\epsilon^{2}).
\end{split}
\end{equation}
Now, comparing the definitions of scalar densities corresponding to $\Pi$ and  $\Pi'$, it is clear that
\[ \det(C_{\nc} + F_{\nc}) = \pm \det(1 - \Pi F^{T}) \cdot |\Pi(x)|^{-(p+1)}. \]
Here we assume that one chooses the basis of $\Lambda^{p}D$ induced by the basis of $D$. The sign $\pm$ depends on the ordering of that basis. Next, see that $\det(\~g_{\nc}) = \det{}^{\binom{p}{p-1}}(g_{\nc}) = \det{}^{p}(g_{\nc})$. This shows that
\[ S_\text{$p$-DBI} = \mp \int d^{p'+1}x \frac{1}{g_{m}} \frac{\det{}^{\frac{1}{p+1}}(1 - \Pi F^{T})}{|\Pi(x)|^{\frac{1}{p+1}} \det{}^{\frac{1}{2}}(g_{\nc})} \det{}^{\frac{1}{2}}(g) \det{}^{\frac{1}{2(p+1)}}(Y). \]
Changing the coordinates according to Seiberg-Witten map, we get the noncommutative DBI action in the form:
\[ S_\text{$p$-NCDBI} = \mp \int d^{p'+1}x \frac{1}{\widehat{g}_{m}} \frac{\det{}^{\frac{1}{2}}(\widehat{g})}{|\Pi(x)|^{\frac{1}{p+1}} \det{}^{\frac{1}{2}}(\widehat{g}_{\nc})}  \det{}^{\frac{1}{2(p+1)}}(\widehat{Y}). \]

In the last part of the discussion assume that the distribution $D^{\perp}$ is also integrable, so we can use the set of local coordinates $(x^{1}, \dots, x^{p+1}, x^{p+2}, \dots, x^{p'+1})$ on $N$, such that $(\ppx{}{1}, \dots \ppx{}{p+1})$ span $D$, and $(\ppx{}{p+2}, \dots, \ppx{}{p'+1})$ span $D^{\perp}$. All quantities with indices in $D^{\perp}$ are now assumed to be in this coordinate basis. Under this assumptions, the integral density in the action can be written as
\[ \det{}^{\frac{1}{2}}(g) = \det{}^{\frac{1}{2}}(g_{\nc}) \cdot \det{}^{\frac{1}{2}}(g_{\c}). \]

Finally, to distinguish the noncommutative and commutative coordinates, we reserve the letters $(a,b,c)$ for labeling the coordinates $(x^{1}, \dots, x^{p+1})$, $(i,j,k)$ for labeling the coordinates $(x^{p+2},
\dots, x^{p'+1})$, $(A,B,C)$ for $p$-indices containing only noncommutative indices (thus $p$-indices labeling $\~D$) and $(I,J,K)$ for $p$-indices containing at least one commutative index (thus $p$-indices labeling $\~D^{\perp})$.  Also, note that from the definition of $\rho_{A}$, we have
\[ \widehat{\Pi}^{'aB} = \{ \widehat{X}^{a}, \widehat{X}^{b_{1}}, \dots, \widehat{X}^{b_{p}} \}, \]
where $\{\cdot,\dots,\cdot\}$ is the Nambu-Poisson bracket corresponding to $\Pi$, $\widehat{X}^{a} = \rho_{A}^{\ast}(x^{a})$, and $B = (b_{1}, \dots, b_{p})$. To simplify the expressions, we shall also use the shorthand notation $\{\cdot,\widehat{X}^{A}\} \equiv \{\cdot, \widehat{X}^{a_{1}}, \dots, \widehat{X}^{a_{p}} \}$. Finally, we also introduce usual index raising/lowering conventions, for example, $\Fud{k}{A} = \sum_{n=1}^{p'+1} \widehat{g}^{kn} \widehat{F}_{nA} = \widehat{g}^{kl} \widehat{F}_{lA}$, or  $\Fdu{k}{A} = \widehat{\~g}^{AB} \widehat{F}_{kB}$ for multiindices. Note that since both $g$ and $\~g$ are block diagonal, no confusion concerning range of summation appears. Implementing this notation, we can write
\[
\begin{split}
S_\text{$p$-NCDBI} = \mp \int d^{p'+1}x \frac{1}{\widehat{g}_{m}} \frac{\det{}^{\frac{1}{2}}(\widehat{g}_{\c})}{|\Pi(x)|^{\frac{1}{p+1}}} \Big( 1 + \frac{1}{2(p+1)}  \{ \widehat{X}^{a}, \widehat{X}^{A}\} \{ \widehat{X}_{a},\widehat{X}_{A} \} \\
+ \frac{1}{2(p+1)} \{\widehat{X}^{a}, \widehat{X}^{A} \} \Fdu{a}{I} \widehat{F}_{bI} \{\widehat{X}^{b}, \widehat{X}_{A} \} + \frac{1}{2(p+1)} \{\widehat{X}^{a}, \widehat{X}^{A}\} \widehat{F}_{kA} \Fud{k}{B} \{ \widehat{X}_{a}, \widehat{X}^{B} \} \\
+ \frac{1}{2(p+1)}(\widehat{F}_{kA} \{ \widehat{X}^{a}, \widehat{X}^{A} \} \widehat{F}_{aJ} + \widehat{F}_{kJ}) (\Fud{k}{B} \{ \widehat{X}^{b}, \widehat{X}^{B} \} \Fdu{b}{J} + \widehat{F}^{kJ}) \Big) + \cdots.
\end{split}
\]
Note that the first non-cosmological term $\{\widehat{X}^{a},\widehat{X}^{A}\} \{ \widehat{X}_{a}, \widehat{X}_{A}\}$ can be rewritten as
\begin{equation}
 \{ \widehat{X}^{a}, \widehat{X}^{A}\} \{ \widehat{X}_{a},\widehat{X}_{A} \} = \frac{1}{p!} \widehat{g}_{a_{1}b_{1}} \dots \widehat{g}_{a_{p+1} b_{p+1}} \{ \widehat{X}^{a_{1}}, \dots, \widehat{X}^{a_{p+1}} \} \{ \widehat{X}^{b_{1}}, \dots, \widehat{X}^{b_{p+1}} \},
\end{equation}
where summation now goes over all (not strictly ordered) $(p+1)$-indices $(a_{1}, \dots, a_{p+1})$ and $(b_{1}, \dots, b_{p+1})$. Here, we have used the fact that $\~g_{\nc}$ is a skew-symmetrized $p$-fold tensor product of $g_{\nc}$. We can even drop the restriction of the summations to noncommutative directions, since the Nambu-Poisson bracket takes care of this automatically. This term corresponds to a $p>1$ generalization of the matrix model.
Note that using the double scaling limit for the expansion of (\ref{eq_detYexpansion}) leads to a series in positive integer powers of $\epsilon$, automatically truncating higher-order powers in $F$. This gives an independent justification of the independent scaling of $\~g_{\nc}$ and $\~g_{\c}$ in (\ref{def_doublescaling}).

\section{Conclusions and Discussion}
In this paper we have extended, clarified and further developed the construction outlined in \cite{Jurco:2012yv}. We discussed in detail the bosonic part of an all-order effective action for a system of multiple $p$-branes ending on a $p'$-brane. The leading principle was to have an action allowing, similarly to the DBI action, for two mutually equivalent descriptions: a commutative and a ``noncommutative'' one. As explained in the main body of the paper, the noncommutativity means a semicalssical one, in which the Poisson tensor is replaced by a Nambu-Poisson one.\footnote{Let us notice, that in our approach to noncommutativity of fivebrane, the ordinary point-wise product remains undeformed}  It turned out that this requirement determines the bosonic part of the effective action essentially uniquely. 

In our derivation of the action, generalized geometry played an essential role. All key ingredients, have their origin in the generalized geometry. It already has been  appreciated in the literature that the presence of a (p+1)-form leads to a generalized tangent space $TM\oplus \Lambda^p T^*M$. Although, this observation perfectly applies also in our situation, we found it very useful to double it, i.e., to consider the the extended/doubled generalized tangent space $W\oplus W^*$, with $W=TM\oplus \Lambda^p TM$. 

Let us comment on this more: In the string case, $p$=1, the sum of the background fields $g+B$ plays a prominent role. It enters naturally the Polyakov action, the DBI action, Buscher's rules, etc. In generalized geometry, one way define a generalized metric, is to give a subbundle of the generalized tangent bundle $TM\oplus T^*M$ of maximal rank, on which the natural (+) pairing on generalized tangent bundle is positive definite. Such a subbundle can be characterized as a graph of the map from $TM\to T^*M$ defined by the sum $g+B$. Therefore, it is quite natural to look for a formalism which would allow for a natural ``sum" of a metric and a higher rank $(p+1)$-form. What this sum should be is indicated by the Polyakov type membrane action in its matrix form (\ref{def_actiontotmatrix}). From here it is just a small step to recognize the doubled generalized tangent bundle as a right framework for a meaningful interpretation of the ``sum" of the metric and a higher rank $(p+1)$-form. This observation is further supported by the form of the open closed relations in the doubled form (\ref{eq_occorrespondece0}) and the matrix form of the Nambu sigma model (\ref{def_actionnsm}). Finally, the corresponding Hamiltonian (\ref{def_polyakovham}), cf. also (\ref {eq_nsmham}), tells us what the relation to the generalized metric on $TM\oplus\Lambda^p T^\ast M$ is. Hence, at the end, we do not really use the full doubled generalized tangent bundle, we use it only for a nice embedding of the generalized tangent bundle $TM\oplus\Lambda^p T^\ast M$.\footnote{The doubled generalized geometry formalism can also be introduced for the $p$=1 string case and allows an elegant formulation of the theory. For any $p$, the appearance of $T M$ and $\Lambda^p T M$ (and similarly of $T^\ast M$ and $\Lambda^p T^\ast M$) is related to the split into one temporal and $p$~spatial world-sheet directions.}

Nevertheless, we found the doubled generalized geometry quite intriguing. Extending on the above comments: Since on the doubled generalized tangent bundle there is a natural function-valued non-degenerated  pairing $\langle.,.\rangle$, we can mimic the standard constructions with $TM\oplus T^*M$. For instance, one can speak of the orthogonal group, define the generalized metric using an involutive endomorphism $\mathcal{T}$ on $W\oplus W^*$, such that $\langle \mathcal{T},.\rangle$ defines a fibre-wise metric on the doubled generalized tangent bundle, etc. 

However, we are still facing a problem; We lack a canonical Courant algebroid structure. The reason lies basically in very limited choices for the anchor map $\rho: W\oplus W^* \rightarrow TM$, which leave us only with a projection onto the tangent bundle $TM$. The map $\rho$ is therefore ``too simple" to control the symmetric part of any bracket. However, we can still consider Leibniz algebroid structures on $W \oplus W^{\ast}$. 
There are several possibilities to do this. To choose the one suitable for $p$-brane backgrounds, one can consider the action of the map $e^{\B}: W \oplus W^{\ast} \rightarrow W \oplus W^{\ast}$, where $\B$ is a general section of $\Lambda^{2}W$, viewed as a map from $W$ to $W^{\ast}$, and extended to $\mbox{End}(W \oplus W^{\ast})$ by zeros. The map $e^{\B}$ is thus an analogue of the usual $B$-field transform of generalized geometry $TM \oplus T^{\ast}M$. It turns our that there is a Leibniz algebroid, such that the condition for $e^{\B}$ to be an isomorphism of the bracket forces $\B$ to take the block off-diagonal form (\ref{eq_bigGbigB}), with $C \in \Omega^{p+1}_{closed}(M)$. This bracket coincides with the one defined by Hagiwara in \cite{hagiwara} to study Nambu-Dirac manifolds. Moreover, Nambu-Poisson manifolds appear naturally as its Nambu-Dirac structures. Interestingly, its full group of orthogonal automorphisms can be calculated, giving (for $p>1$) a semi-direct product $\Diff(M) \ltimes ( \Omega^{p+1}_{closed} \rtimes G)$, where $G$ is the group of locally constant non-zero functions on $M$. Notably, this coincides with the group of all automorphisms of higher Dorfman bracket, see e.g. \cite{2011ScChA..54..437B}. 

Relating our approach, based on the generalized geometry on the vector bundle $W \oplus W^{\ast}$, with the usual generalized geometries in $M$-theory and supergravity \cite{Hull:2007zu, Berman:2010is, Coimbra:2011nw, Coimbra:2012af}, we notice the following. A choice of a generalized geometry is subject to the field content one wants to describe. In principle, one can double each of of them and use the advantages of having a natural function-valued pairing as we did for our case of interest in this paper.  However, the field content coming with such a doubled generalized geometry is much bigger then we started with and we have to reduce it accordingly. 

Finally, let us again notice the striking similarity with the result of \cite{Cederwall:1997gg}, \cite{Bao:2006ef} -- based on a very different approach -- and  discussed after equation (\ref{Cederwall}). We find worth to pursue a deeper understanding of this similarity in the future.

\section*{Acknowledgement}

It is a pleasure to thank Tsuguhiko Asakawa, Peter Bouwknegt, Chong-Sun Chu, Pei-Ming Ho, Petr Ho\v rava, Dalibor Kar\'asek, Noriaki Ikeda, Matsuo Sato, Libor \v Snobl, and Satoshi Watamura for helpful discussions. B.J. and P.S. appreciate the hospitality of the Center for Theoretical Sciences, Taipei, Taiwan, R.O.C. B.J. thanks CERN for hospitality.
We gratefully acknowledge financial support by the grant GA\v CR P201/12/G028 (B.J.), by the Grant Agency of the Czech Technical University in Prague, grant No. SGS13/217/OHK4/3T/14 (J.V.), and by the DFG within the Research Training Group 1620 ``Models of Gravity'' (J.V., P.S.).
We thank the DAAD (PPP) and ASCR \& MEYS (Mobility) for supporting our collaboration. We also thank the referee for his comments which helped to improve the manuscript.

\appendix
\section{Nambu-Poisson structures} \label{sec_NPstructures}
Here we recall some fundamental properties of Nambu-Poisson
structures \cite{Takhtajan:1993vr} as needed in this paper. For
details see, e.g., \cite{hagiwara} or \cite{2011ScChA..54..437B}.

For any $(p+1)$-vector field $A$ on $M$ we define the induced map
$A^{\sharp}: \Omega^{p}(M) \rightarrow \mathfrak{X}(M)$ as
$A^{\sharp}(\xi) = (-1)^{p} \io_{\xi}A = \xi_{K} A^{iK}
\partial_{i}$.

Also, for an alternative formulation of the fundamental identity, we
need to recall the Dorfman bracket, i.e., the $\R$-bilinear bracket
on the sections of $TM \oplus \cTM{p}$, defined as
\begin{equation} \label{def_dorfman}
[V+\xi,W+\eta]_{D} = [V,W] + \mathcal{L}_{V} \eta - \io_{W} d\xi,
\end{equation}
for all $V,W \in \vf{}$ and $\xi,\eta \in \df{p}$.

Let $\Pi$ be a $(p+1)$-vector field on $M$. We call $\Pi$ a
Nambu-Poisson structure if
\begin{equation} \label{def_npfi} \mathcal{L}_{\Pi^{\sharp}(df_1 \^ \dots
\^ df_p)}(\Pi) = 0\,, \end{equation}
for all $f_1, \dots, f_p \in C^{\infty}(M)$.

\begin{lemma} \label{lem_nplemma}
For an arbitrary $p \geq 1$, the condition (\ref{def_npfi}) can be
stated in the following equivalent ways:
\begin{enumerate}
\item \label{npfieq} The graph $G_{\Pi} = \{ \Pi^{\sharp}(\xi) + \xi \ | \
\xi \in \Omega^{p}(M) \}$ is closed under the Dorfman bracket
    (\ref{def_dorfman});
\item for any $\xi,\eta \in \Omega^{p}(M)$ it holds that
\begin{equation} \label{npfieq1} (\mathcal{L}_{\Pi^{\sharp}(\xi)}
(\Pi))^{\sharp}(\eta) = - \Pi^{\sharp}(\io_{\Pi^{\sharp}(\eta)}(d\xi))\,;
\end{equation}
\item let $[\cdot,\cdot]_{\pi}: \Omega^{p}(M) \times \Omega^{p}(M)
\rightarrow \Omega^{p}(M)$ be defined as
\begin{equation} \label{def_leibnizpi} [\xi,\eta]_{\pi} :=
\mathcal{L}_{\Pi^{\sharp}(\xi)}(\eta) - \io_{\Pi^{\sharp}(\eta)}(d\xi)\,,
\end{equation}
for all $\xi,\eta \in \Omega^{p}(M)$. Then it holds that
\begin{equation} \label{npfieq2} [\Pi^{\sharp}(\xi),\Pi^{\sharp}(\eta)] =
\Pi^{\sharp}([\xi,\eta]_{\pi})\,, \end{equation} for all $\xi, \eta
\in \Omega^{p}(M)$; \item for any $\xi \in \Omega^{p}(M)$ it holds
that
\begin{equation} \label{npfieq3} \mathcal{L}_{\Pi^{\sharp}(\xi)}(\Pi) =
- \big( \io_{d\xi}(\Pi) \Pi - \frac{1}{p+1} \io_{d\xi}(\Pi \^ \Pi)
\big)\,. \end{equation}
\end{enumerate}
\end{lemma}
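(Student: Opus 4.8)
The plan is to route everything through a single differential-geometric identity that makes the equivalence of the graph condition \ref{npfieq}, identity (\ref{npfieq1}) and identity (\ref{npfieq2}) nearly automatic, and then to attach this cluster to the defining identity (\ref{def_npfi}) and to (\ref{npfieq3}) by two essentially algebraic computations. The key identity I would establish first is
\[ [\Pi^{\sharp}(\xi), \Pi^{\sharp}(\eta)] = \Pi^{\sharp}(\mathcal{L}_{\Pi^{\sharp}(\xi)}\eta) + (\mathcal{L}_{\Pi^{\sharp}(\xi)}\Pi)^{\sharp}(\eta), \]
valid for \emph{any} $(p+1)$-vector $\Pi$ and all $\xi,\eta\in\df{p}$. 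It follows directly from the Leibniz rule $\mathcal{L}_{V}(\iota_{\eta}\Pi) = \iota_{\mathcal{L}_{V}\eta}\Pi + \iota_{\eta}\mathcal{L}_{V}\Pi$, from the identification $\mathcal{L}_{V}W = [V,W]$ for vector fields, and from the definition $\Pi^{\sharp} = (-1)^{p}\iota_{(\cdot)}\Pi$, specialized to $V = \Pi^{\sharp}(\xi)$.

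With this master identity in hand the inner equivalences are immediate. For (\ref{npfieq2}) $\Leftrightarrow$ (\ref{npfieq1}) I would substitute it into the left-hand side of (\ref{npfieq2}) and cancel the common term $\Pi^{\sharp}(\mathcal{L}_{\Pi^{\sharp}(\xi)}\eta)$ against the same term appearing in $\Pi^{\sharp}([\xi,\eta]_{\pi})$ via (\ref{def_leibnizpi}), leaving exactly $(\mathcal{L}_{\Pi^{\sharp}(\xi)}\Pi)^{\sharp}(\eta) = -\Pi^{\sharp}(\iota_{\Pi^{\sharp}(\eta)}d\xi)$. For \ref{npfieq} $\Leftrightarrow$ (\ref{npfieq2}) I would expand the Dorfman bracket (\ref{def_dorfman}) of two graph elements, $[\Pi^{\sharp}(\xi)+\xi,\Pi^{\sharp}(\eta)+\eta]_{D}$; its $\df{p}$-part is precisely $[\xi,\eta]_{\pi}$, so the result lies in $G_{\Pi}$ iff its vector part equals $\Pi^{\sharp}([\xi,\eta]_{\pi})$, which is (\ref{npfieq2}).

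It then remains to hook this cluster onto (\ref{def_npfi}) and (\ref{npfieq3}). For (\ref{def_npfi}) $\Leftrightarrow$ (\ref{npfieq1}), one direction is mere specialization: taking $\xi = df_{1}\wedge\cdots\wedge df_{p}$ kills $d\xi$, so (\ref{npfieq1}) collapses to $(\mathcal{L}_{\Pi^{\sharp}(\xi)}\Pi)^{\sharp}(\eta) = 0$ for all $\eta$, i.e. $\mathcal{L}_{\Pi^{\sharp}(\xi)}\Pi = 0$. For the converse I would show that the defect $\Delta(\xi,\eta) := (\mathcal{L}_{\Pi^{\sharp}(\xi)}\Pi)^{\sharp}(\eta) + \Pi^{\sharp}(\iota_{\Pi^{\sharp}(\eta)}d\xi)$ is $C^{\infty}(M)$-linear in $\xi$; since exact decomposable forms locally span $\df{p}$ over $C^{\infty}(M)$ and $\Delta$ vanishes on them by (\ref{def_npfi}), tensoriality forces $\Delta\equiv 0$. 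Finally, for (\ref{npfieq3}) $\Leftrightarrow$ (\ref{npfieq1}) I would apply $(\cdot)^{\sharp}(\eta)$ to both sides of (\ref{npfieq3}) and reduce the claim to the pointwise contraction identity $\Pi^{\sharp}(\iota_{\Pi^{\sharp}(\eta)}d\xi) = \iota_{d\xi}(\Pi)\,\Pi^{\sharp}(\eta) - \frac{1}{p+1}(\iota_{d\xi}(\Pi\wedge\Pi))^{\sharp}(\eta)$, which holds for any $(p+1)$-vector; as a $(p+1)$-vector is recovered from its $\sharp$-image on all $p$-forms, this yields the equivalence.

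The main obstacle I expect lies in the two genuinely computational steps, and chiefly in the tensoriality of $\Delta$: since $\mathcal{L}_{hV}\Pi = h\mathcal{L}_{V}\Pi - V\wedge\iota_{dh}\Pi$ is \emph{not} $C^{\infty}$-linear in $V$, replacing $\xi$ by $h\xi$ produces anomalous contributions both from $\mathcal{L}_{h\Pi^{\sharp}(\xi)}\Pi$ and from $d(h\xi) = dh\wedge\xi + h\,d\xi$, and one must check that the combination $-(\Pi^{\sharp}(\xi)\wedge\iota_{dh}\Pi)^{\sharp}(\eta) + (\Pi^{\sharp}(\eta)\, h)\,\Pi^{\sharp}(\xi) - \Pi^{\sharp}(dh\wedge\iota_{\Pi^{\sharp}(\eta)}\xi)$ cancels identically. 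This cancellation, together with the $\Pi\wedge\Pi$ contraction identity above, is a multilinear-algebra computation that I would organize in the adapted coordinates where $\Pi$ attains its canonical decomposable form, so that both reduce to checks on $\Pi^{iJ}=\epsilon^{iJ}$.
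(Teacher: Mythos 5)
The paper itself does not prove this lemma --- it is recalled in appendix \ref{sec_NPstructures} with pointers to \cite{hagiwara,2011ScChA..54..437B} --- so your proposal has to stand on its own. The core of it does: the master identity $[\Pi^{\sharp}(\xi),\Pi^{\sharp}(\eta)]=\Pi^{\sharp}(\mathcal{L}_{\Pi^{\sharp}(\xi)}\eta)+(\mathcal{L}_{\Pi^{\sharp}(\xi)}\Pi)^{\sharp}(\eta)$ is a valid consequence of the Leibniz rule for $\mathcal{L}_{V}$ against contractions, and your derivations of \ref{npfieq} $\Leftrightarrow$ (\ref{npfieq2}) $\Leftrightarrow$ (\ref{npfieq1}) from it are correct and clean. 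The one direction (\ref{npfieq1}) $\Rightarrow$ (\ref{def_npfi}) by specializing to $\xi=df_{1}\wedge\dots\wedge df_{p}$ is also fine.

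The genuine gap is the converse (\ref{def_npfi}) $\Rightarrow$ (\ref{npfieq1}), i.e. the tensoriality of your defect $\Delta$. Using the master identity, $\Delta(\xi,\eta)=[\Pi^{\sharp}(\xi),\Pi^{\sharp}(\eta)]-\Pi^{\sharp}([\xi,\eta]_{\pi})$, and a direct computation gives
\[ \Delta(h\xi,\eta)-h\,\Delta(\xi,\eta)=-\,\Pi^{\sharp}\bigl(dh\wedge(\io_{\Pi^{\sharp}(\xi)}\eta+\io_{\Pi^{\sharp}(\eta)}\xi)\bigr), \]
so the anomaly is controlled by the \emph{symmetric} pairing of the two graph sections. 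For $p=1$ this vanishes identically by skew-symmetry of $\Pi$, but for $p>1$ it does \emph{not} vanish for a general $(p+1)$-vector: take $p=2$, $n=6$, $\Pi=\partial_{1}\wedge\partial_{2}\wedge\partial_{3}+\partial_{4}\wedge\partial_{5}\wedge\partial_{6}$, $\xi=\eta=dy^{1}\wedge dy^{2}+dy^{3}\wedge dy^{4}$, $h=y^{5}$; then $\Pi^{\sharp}(\xi)=\partial_{3}$, $\io_{\Pi^{\sharp}(\xi)}\xi=dy^{4}$, and $\Pi^{\sharp}(dy^{5}\wedge dy^{4})=-\partial_{6}\neq 0$. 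So the cancellation you assert to ``hold identically'' is false as a universal multilinear statement, and your proposed verification ``in the adapted coordinates where $\Pi^{iJ}=\epsilon^{iJ}$'' proves it only for \emph{decomposable} $\Pi$ --- but decomposability of $\Pi$ is itself a deep consequence of (\ref{def_npfi}) (the theorem of \cite{decomposability,Gautheron} quoted separately in the paper), not a free algebraic fact you may invoke at this point. This is precisely where the nontrivial content of the lemma sits: passing from the fundamental identity, which only constrains exact decomposable arguments, to arbitrary $\xi$ requires first extracting the \emph{algebraic} part of (\ref{def_npfi}) (e.g.\ by polarizing $f_{1}\mapsto gf_{1}$, which yields $X_{f_{1}\dots f_{p}}\wedge\io_{dg}\Pi+X_{g f_{2}\dots f_{p}}\wedge\io_{df_{1}}\Pi=0$) and showing that it kills the anomaly above --- or invoking decomposability outright and treating the singular set of $\Pi$ separately. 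The same caveat applies, less severely, to the $\Pi\wedge\Pi$ contraction identity you use for (\ref{npfieq3}): if it is to be used for arbitrary $(p+1)$-vectors it must be proved as such, not by reduction to the canonical form (\ref{eq_Pidecomposition}), which presupposes the Nambu--Poisson property.
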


For $p>1$, around any point $x \in M$, where $\Pi(x) \neq 0$, there
exist local coordinates $(x^{1},\dots,x^{n})$, such that
\begin{equation} \label{eq_Pidecomposition}
\Pi(x) = \ppx{}{1} \^ \cdots \^ \ppx{}{p+1}.
\end{equation}
In this coordinates $\Pi^{iJ} = \delta^{iJ}_{1 \dots p+1} =
\epsilon^{iJ}$.

For $p>1$, a Nambu-Poisson tensor can be multiplied
by any smooth function, and one gets again a Nambu-Poisson tensor:

\begin{lemma} \label{lem_nptensormultiple}
Let $\Pi$ be a Nambu-Poisson tensor, and $p>1$. Let $f \in
C^{\infty}(M)$ be a smooth function on $M$. Then $f \Pi$ is again  a
Nambu-Poisson tensor. For $p=1$ this is not true in general.
\end{lemma}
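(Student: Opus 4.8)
The plan is to verify the Nambu--Poisson fundamental identity for $f\Pi$ by means of the reformulation (\ref{npfieq3}) in Lemma~\ref{lem_nplemma}, exploiting the fact that for $p>1$ every Nambu--Poisson tensor is pointwise decomposable. First I would record the consequence of the local normal form (\ref{eq_Pidecomposition}): wherever $\Pi\neq 0$ the tensor is decomposable, so $\Pi\wedge\Pi=0$ there, while this is trivial where $\Pi=0$; hence $\Pi\wedge\Pi\equiv 0$, and likewise $(f\Pi)\wedge(f\Pi)=f^{2}\,\Pi\wedge\Pi=0$. Consequently the characterization (\ref{npfieq3}) collapses, for both $\Pi$ and $f\Pi$, to the reduced identity
\[
\mathcal{L}_{\Pi^{\sharp}(\xi)}\Pi = -\,\io_{d\xi}(\Pi)\,\Pi .
\]
Since Lemma~\ref{lem_nplemma} is an equivalence applied to any fixed $(p+1)$-vector, it then suffices to establish this reduced identity for $f\Pi$, given that it already holds for $\Pi$.

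Next I would reduce the claim to one algebraic cancellation. Writing $Y=\Pi^{\sharp}(\xi)$ and using $(f\Pi)^{\sharp}(\xi)=f\,Y$, I would expand $\mathcal{L}_{fY}(f\Pi)$ by the Leibniz rule together with the Schouten-bracket formula $\mathcal{L}_{fY}\Pi=f\,\mathcal{L}_{Y}\Pi\pm(\io_{df}\Pi)\wedge Y$. Substituting the reduced identity for $\Pi$ into the $f\,\mathcal{L}_{Y}\Pi$ term reproduces exactly the desired right-hand side $-f^{2}\,\io_{d\xi}(\Pi)\,\Pi$, so that everything reduces to the single identity
\[
(Yf)\,\Pi = (-1)^{p}\,(\io_{df}\Pi)\wedge Y .
\]
To prove this I would use decomposability once more: write $\Pi=V_{0}\wedge\cdots\wedge V_{p}$, note that $Y=\Pi^{\sharp}(\xi)$ lies in the span $D=\langle V_{0},\dots,V_{p}\rangle$, expand $\io_{df}\Pi=\sum_{i}(-1)^{i}(V_{i}f)\,V_{0}\wedge\cdots\widehat{V_{i}}\cdots\wedge V_{p}$, and wedge with $Y$; only the component of $Y$ along each $V_{i}$ survives, collapsing the sum to $(-1)^{p}(Yf)\,\Pi$.

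The main obstacle is not conceptual but a matter of keeping the sign conventions of the interior product and of the Schouten bracket $[f,\Pi]$ mutually consistent, so that the two occurrences of $\pm$ above match and the spurious terms genuinely cancel; I would carry out this bookkeeping in the normal-form coordinates of (\ref{eq_Pidecomposition}), where $\Pi^{iJ}=\epsilon^{iJ}$, to remove any ambiguity. It is worth emphasizing that both inputs — the vanishing $\Pi\wedge\Pi=0$ and the containment $\Pi^{\sharp}(\xi)\in D$ — rest on decomposability, which is exactly what distinguishes $p>1$ from $p=1$.

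Finally, for the last assertion I would exhibit why the argument breaks for bivectors: a generic Poisson bivector is not decomposable, so neither $\Pi\wedge\Pi=0$ nor the collapse of the $\wedge$-sum is available. A concrete counterexample is the constant bivector $\Pi=\partial_{1}\wedge\partial_{2}+\partial_{3}\wedge\partial_{4}$ on $\R^{4}$, which is Poisson since $[\Pi,\Pi]=0$; a short computation gives $[f\Pi,f\Pi]=\pm 2f\,\Pi\wedge\io_{df}\Pi$, and $\Pi\wedge\io_{df}\Pi\neq 0$ for a suitable choice such as $f=x^{1}$, so $f\Pi$ violates the Jacobi (equivalently, fundamental) identity.
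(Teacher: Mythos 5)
The paper does not actually prove this lemma: it is quoted as a known property of Nambu--Poisson structures, with the reader referred to the literature (Hagiwara; Bi--Sheng) for details, so there is no in-paper argument to compare against. Your proof is correct and self-contained, and it is a reasonable way to fill the gap. The two pillars are sound: (i) for $p>1$ the local normal form (\ref{eq_Pidecomposition}) gives decomposability wherever $\Pi\neq 0$, hence $\Pi\wedge\Pi\equiv 0$ and likewise for $f\Pi$, so (\ref{npfieq3}) collapses to $\mathcal{L}_{\Pi^{\sharp}(\xi)}\Pi=-\io_{d\xi}(\Pi)\,\Pi$ for both tensors; (ii) with the standard convention $\mathcal{L}_{fY}\Pi=f\,\mathcal{L}_{Y}\Pi-Y\wedge\io_{df}\Pi$ the whole verification reduces to $(Yf)\,\Pi=Y\wedge\io_{df}\Pi$ for $Y=\Pi^{\sharp}(\xi)$, which holds precisely because $Y$ lies in the span of the decomposing vectors --- I checked that the signs close up, so the bookkeeping you defer to normal-form coordinates does work out. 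Two small points you should make explicit: the decomposability argument only applies on the open set $\{\Pi\neq 0\}$, so you need the (one-line) remark that both sides of $(Yf)\,\Pi=Y\wedge\io_{df}\Pi$ and of the reduced identity vanish wherever $\Pi=0$ (since $Y=\Pi^{\sharp}(\xi)$ vanishes there), which extends the identities to all of $M$; and your appeal to Lemma \ref{lem_nplemma} is legitimate because that lemma reformulates the condition (\ref{def_npfi}) for an arbitrary $(p+1)$-vector field, so verifying (\ref{npfieq3}) for $f\Pi$ does establish the fundamental identity. The $p=1$ counterexample is also correct: for $\Pi=\partial_{1}\wedge\partial_{2}+\partial_{3}\wedge\partial_{4}$ and $f=x^{1}$ the Jacobi identity fails already on the triple $(x^{2},x^{3},x^{4})$, whose cyclic sum equals $-x^{1}$. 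An alternative route, closer in spirit to how the paper argues its last appendix lemma, would be to work directly in the coordinates of (\ref{eq_Pidecomposition}): there $f\Pi$ is a top-degree multivector along the leaves of the characteristic foliation and the fundamental identity involves only leafwise derivatives, so Lemma \ref{lem_nptoplevel} applies leaf by leaf; your argument avoids the foliation language at the cost of the Schouten-bracket computation.
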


This lemma has a simple useful consequence
\begin{lemma} \label{lem_nptoplevel}
Let $n = p+1$. Then any $\Pi \in \Gamma(\Lambda^{p+1}TM)$ is a Nambu-Poisson tensor.
\end{lemma}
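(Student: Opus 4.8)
The plan is to verify the fundamental identity (\ref{def_npfi}) directly and locally, exploiting the fact that in top degree $n=p+1$ every Lie derivative of $\Pi$ along a vector field is again a multiple of $\Pi$, so that the whole content of the identity collapses to a single divergence computation.

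Since being Nambu--Poisson is a local condition — equation (\ref{def_npfi}) is imposed pointwise for all functions — I would work in a chart $(x^1,\dots,x^n)$ with $n=p+1$ and write $\Pi=h\,\Pi_0$, where $\Pi_0=\partial_1\^\dots\^\partial_n$ and $h\in C^\infty(M)$. Let $\mu=dx^1\^\dots\^dx^n$ be the dual volume form, so that $\mathcal{L}_Y\mu=(\mathrm{div}_\mu Y)\mu$ and, dually, $\mathcal{L}_Y\Pi_0=-(\mathrm{div}_\mu Y)\Pi_0$ for every vector field $Y$. Fixing $f_1,\dots,f_p\in C^\infty(M)$ and setting $\xi=df_1\^\dots\^df_p$, the relevant vector field is $X=\Pi^\sharp(\xi)=h\,V$ with $V=\Pi_0^\sharp(\xi)$, because $\Pi^\sharp$ is linear in $\Pi$.

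First I would record the elementary identity $\mathcal{L}_X\Pi=\big(X(h)-h\,\mathrm{div}_\mu X\big)\Pi_0$, obtained from the Leibniz rule together with $\mathcal{L}_X\Pi_0=-(\mathrm{div}_\mu X)\Pi_0$. Writing $X=hV$ and using the Leibniz rule for the divergence, $\mathrm{div}_\mu(hV)=V(h)+h\,\mathrm{div}_\mu V$, the coefficient becomes $X(h)-h\,\mathrm{div}_\mu X = hV(h)-hV(h)-h^2\,\mathrm{div}_\mu V = -h^2\,\mathrm{div}_\mu V$. Hence the fundamental identity reduces to the single claim that $V=\Pi_0^\sharp(\xi)$ is divergence-free with respect to $\mu$.

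The heart of the matter — the one genuinely geometric step — is this divergence computation, which I would settle through the pointwise duality $\io_V\mu=\pm\,df_1\^\dots\^df_p$ between the top multivector $\Pi_0$ and the volume form $\mu$ (for $p=1$ this reads $\io_V\mu=df_1$, and for $n=3$, $p=2$ it is the classical statement that $V=\nabla f_1\times\nabla f_2$). Granting this, Cartan's formula with $d\mu=0$ gives $\mathcal{L}_V\mu=d\,\io_V\mu=\pm\,d(df_1\^\dots\^df_p)=0$, since each $df_a$ is closed, so $\mathrm{div}_\mu V=0$ and therefore $\mathcal{L}_X\Pi=0$; as $f_1,\dots,f_p$ were arbitrary this is exactly (\ref{def_npfi}). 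The coefficient $-h^2\,\mathrm{div}_\mu V$ is manifestly regular even where $h$ vanishes, so no separate treatment of the zero locus of $\Pi$ is needed and the argument is uniform in $p\geq1$. The only slightly delicate bookkeeping is the duality $\io_V\mu=\pm df_1\^\dots\^df_p$ itself, a fixed-sign pointwise linear-algebra fact stating that contracting $\mu$ with the $\Pi_0$-sharp of a $p$-form returns that $p$-form up to sign; since only the closedness of $\io_V\mu$ is used, the precise sign is immaterial.
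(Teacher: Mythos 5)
Your proof is correct, but it takes a genuinely different route from the paper. The paper offers no computation at all: it presents Lemma \ref{lem_nptoplevel} as an immediate consequence of Lemma \ref{lem_nptensormultiple}, the point being that locally any top-degree multivector is $h\cdot\partial_1\^\dots\^\partial_n$, the coordinate multivector is trivially Nambu--Poisson, and multiplication by a smooth function preserves the Nambu--Poisson property. You instead verify the fundamental identity (\ref{def_npfi}) directly: writing $\Pi=h\,\Pi_0$ and $X=\Pi^\sharp(df_1\^\dots\^df_p)=hV$, you reduce $\mathcal{L}_X\Pi=-h^2(\mathrm{div}_\mu V)\,\Pi_0$ and kill the divergence via the duality $\io_V\mu=\pm\,df_1\^\dots\^df_p$ together with Cartan's formula. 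Both the algebra ($X(h)-h\,\mathrm{div}_\mu X=-h^2\,\mathrm{div}_\mu V$) and the duality claim check out against the paper's conventions ($\Pi_0^{iJ}=\epsilon^{iJ}$ gives $\io_{\Pi_0^\sharp(\xi)}\mu=\xi$ on basis elements). What your approach buys: it is self-contained and elementary, it does not lean on the function-multiplication lemma — which the paper states only for $p>1$ and whose proof ultimately rests on the local decomposition theorem — and it covers the $p=1$, $n=2$ case (any bivector on a surface is Poisson) uniformly with the same computation, whereas the paper's route would strictly need a separate remark there. What the paper's route buys is brevity, given that Lemma \ref{lem_nptensormultiple} is already in hand. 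One cosmetic point: in your opening sentence, ``a multiple of $\Pi$'' should read ``a multiple of $\Pi_0$'' (equivalently, a section of the line bundle $\Lambda^{n}TM$), which is what your computation actually produces and uses.
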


There is an interesting little  technical detail. One of the equivalent reformulations of fundamental identity was the closedness of the graph $G_{\Pi}$ under the Dorfman bracket. But see that the both, the definition of $G_{\Pi}$ and the  involutivity condition have a good meaning also for \emph{any} vector bundle morphism $\Pi^{\sharp}: \cTM{p} \rightarrow TM$. We may ask whether there exists $\Pi^{\sharp}$, which is not induced by $(p+1)$-vector on $M$. The answer is given by the following lemma:

\begin{lemma} \label{lem_npskewsym}
Let $\Pi^{\sharp}: \cTM{p} \rightarrow TM$ be a vector bundle morphism, such that its graph
\[ G_{\Pi} = \{ \Pi^{\sharp}(\xi) + \xi \ | \ \xi \in \Omega^{p}(M) \}, \]
is closed under higher Dorfman bracket (\ref{def_dorfman}). Let $\Pi$ be a contravariant $(p+1)$-tensor defined by
\[ \Pi(\alpha,\xi) = \<\alpha, \Pi^{\sharp}(\xi) \>, \]
for all $\alpha \in \df{1}$ and $\xi \in \df{p}$. Then $\Pi$ is a $(p+1)$-vector, and hence a Nambu-Poisson tensor.
\end{lemma}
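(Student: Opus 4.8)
The plan is to first turn the hypothesis into a usable differential identity, then strip it down to a pointwise algebraic statement, and finally run a short fibrewise linear-algebra argument forcing total antisymmetry of $\Pi$. First I would unravel closedness: writing a section of the graph as $\Pi^{\sharp}(\xi)+\xi$ and computing the Dorfman bracket (\ref{def_dorfman}),
\[ [\Pi^{\sharp}(\xi)+\xi,\ \Pi^{\sharp}(\eta)+\eta]_{D} = [\Pi^{\sharp}(\xi),\Pi^{\sharp}(\eta)] + \big(\mathcal{L}_{\Pi^{\sharp}(\xi)}\eta - \io_{\Pi^{\sharp}(\eta)}d\xi\big), \]
membership in $G_{\Pi}$ says precisely that the vector part equals $\Pi^{\sharp}$ of the form part, i.e.
\[ [\Pi^{\sharp}(\xi),\Pi^{\sharp}(\eta)] = \Pi^{\sharp}\big(\mathcal{L}_{\Pi^{\sharp}(\xi)}\eta - \io_{\Pi^{\sharp}(\eta)}d\xi\big) \]
for all $\xi,\eta\in\df{p}$; call this (FI). This relation is neither $C^{\infty}(M)$-linear nor manifestly tensorial, whereas the conclusion is purely pointwise, so the first real task is to isolate its algebraic content.

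To that end I would probe (FI) with $\xi\mapsto f\xi$ for $f\in C^{\infty}(M)$ and subtract $f$ times (FI). Using $\mathcal{L}_{fX}=f\mathcal{L}_{X}+df\^\io_{X}$, the expansion $d(f\xi)=df\^\xi+f\,d\xi$, and the Leibniz rule $[fX,Y]=f[X,Y]-(Yf)X$, every genuinely differential term and the $(\Pi^{\sharp}(\eta)f)\Pi^{\sharp}(\xi)$ term cancel, leaving the tensorial identity $\Pi^{\sharp}\big(df\^(\io_{\Pi^{\sharp}(\xi)}\eta+\io_{\Pi^{\sharp}(\eta)}\xi)\big)=0$. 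Since at a point $df$ may be any covector, this becomes the pointwise statement
\[ \Pi^{\sharp}\big(\alpha\^(\io_{\Pi^{\sharp}(\xi)}\eta+\io_{\Pi^{\sharp}(\eta)}\xi)\big)=0 \qquad\text{for all }\alpha\in T^{\ast}_{x}M,\ \xi,\eta\in\Lambda^{p}T^{\ast}_{x}M, \]
which I will refer to as $(\ast)$.

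Finally I would run the linear algebra in a fixed fibre. Put $\eta=\xi$ with $\xi=\theta^{1}\^\dots\^\theta^{p}$ decomposable, write $X=\Pi^{\sharp}(\xi)$, and suppose $\io_{X}\xi\neq0$; then $\<\theta^{i},X\>\neq0$ for some $i$. Choosing $\alpha=\theta^{i}$, which divides $\xi$ so that $\theta^{i}\^\xi=0$, the identity $\alpha\^\io_{X}\xi=\<\alpha,X\>\xi-\io_{X}(\alpha\^\xi)$ collapses the argument of $\Pi^{\sharp}$ in $(\ast)$ to $\<\theta^{i},X\>\,\xi$, so $(\ast)$ forces $\<\theta^{i},X\>\,\Pi^{\sharp}(\xi)=\<\theta^{i},X\>\,X=0$, hence $X=0$, contradicting $\<\theta^{i},X\>\neq0$. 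Thus $\io_{\Pi^{\sharp}(\xi)}\xi=0$ for every decomposable $\xi$, which in components reads $\Pi(\theta^{i},\theta^{1}\^\dots\^\theta^{p})=\<\theta^{i},\Pi^{\sharp}(\xi)\>=0$ for each $i$. Polarizing the vector slot against the form slots ($\theta^{0}\mapsto\theta^{0}+\theta^{1}$) then gives $\Pi(\theta^{0},\theta^{1}\^\omega)=-\Pi(\theta^{1},\theta^{0}\^\omega)$, i.e. antisymmetry under exchange of the vector index with a form index; together with the antisymmetry already present among the $p$ form indices this yields $\Pi\in\Gamma(\TM{p+1})$. Once $\Pi$ is a genuine $(p+1)$-vector, its graph is closed under (\ref{def_dorfman}) by hypothesis, so by the equivalence in Lemma~\ref{lem_nplemma}(\ref{npfieq}) the fundamental identity holds and $\Pi$ is Nambu-Poisson.

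The delicate point is the passage from $(\ast)$, which is quadratic in $\Pi$, to the linear antisymmetry relation. The maneuver that makes it work is feeding in a covector $\theta^{i}$ dividing the test form, so that $\alpha\^\io_{\Pi^{\sharp}(\xi)}\xi$ becomes a scalar multiple of $\xi$ itself and the nonzero factor $\<\theta^{i},\Pi^{\sharp}(\xi)\>$ can be divided out; this is exactly the $p$-fold analogue of the trivial $p=1$ computation (where $(\ast)$ reads $(\Pi(\xi,\eta)+\Pi(\eta,\xi))\Pi^{\sharp}(\alpha)=0$) and is what collapses the quadratic constraint onto the desired linear one. A minor routine check is the nondegeneracy bookkeeping when the $\theta^{i}$ are linearly dependent (then $\xi=0$ and the relation is vacuous), so that the multilinear identity extends to all covectors and, varying $x$, to a smooth $(p+1)$-vector field on $M$.
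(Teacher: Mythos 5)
Your proof is correct, and it arrives at the same algebraic obstruction as the paper's proof, but by a noticeably different, coordinate-free route. The paper rewrites closedness of $G_{\Pi}$ as (\ref{npfieq1}), expands both sides in local coordinates, and separates the terms proportional to $\xi_{K}$ (the differential part) from those proportional to $\xi_{K,m}$ (the algebraic part); your substitution $\xi\mapsto f\xi$ followed by subtraction of $f$ times the identity isolates exactly that same algebraic part, in the invariant form $\Pi^{\sharp}\big(df\^(\io_{\Pi^{\sharp}(\xi)}\eta+\io_{\Pi^{\sharp}(\eta)}\xi)\big)=0$. Your diagonal specialization ($\eta=\xi$ decomposable, $\alpha$ a covector dividing $\xi$), which collapses the quadratic constraint to $2\<\theta^{i},X\>^{2}=0$ with $X=\Pi^{\sharp}(\xi)$, is the invariant counterpart of the paper's index choice $m=i=k$, $K=J=M$ with $k$ contained in the multi-index $M$, which yields $2(\Pi^{kM})^{2}=0$. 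One place where your write-up is actually more complete: the paper stops at ``$\Pi^{kM}=0$ whenever $k\in M$'' and immediately declares $\Pi$ a $(p+1)$-vector, a conclusion that genuinely needs either the arbitrariness of the coordinate system or the explicit polarization $\theta^{0}\mapsto\theta^{0}+\theta^{1}$ that you carry out (already for $p=1$, a $2$-tensor with vanishing diagonal in a single basis need not be antisymmetric). Both proofs then invoke part \ref{npfieq} of Lemma \ref{lem_nplemma} for the final ``hence Nambu-Poisson'' step. What your approach buys is transparency and basis-independence; what the paper's buys is that the component computation displays the full differential identity alongside the algebraic one, at the cost of heavier index bookkeeping.
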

\begin{proof}
The closedness of $G_{\Pi}$ under the Dorfman bracket can immediately be rewritten as (\ref{npfieq1}), where $\Pi$ is now not necessarily a $(p+1)$-vector. This relation is tensorial in $\eta$, so choose $\eta = dy^{J}$, and look at the $i$-th component of the identity. The left-hand side is
\[
\begin{split}
(\mathcal{L}_{\Pi^{\sharp}(\xi)}\Pi)^{iJ} & = \xi_{K} \Big( \Pi^{mK} {\Pi^{iJ}}_{,m} - {\Pi^{iK}}_{,m} \Pi^{mJ} - \sum_{r=1}^{p} {\Pi^{j_{r}K}}_{,m} \Pi^{ij_{1} \dots m \dots j_{p}} \Big) \\
& - \xi_{K,m} \Big( \Pi^{iK} \Pi^{mJ} + \sum_{r=1}^{p} \Pi^{j_{r}K} \Pi^{i j_{1} \dots m \dots j_{p}} \big).
\end{split}
\]	
The right-hand side of (\ref{npfieq1}) is
\[ - \Pi^{\sharp}( \io_{\Pi^{\sharp}(dy^{J})} d\xi)^{i} = \Pi^{iM} \Pi^{lJ} (d\xi)_{lM} = - \xi_{K,m} \big( \Pi^{iM} \Pi^{lJ} \delta_{lM}^{mK} \big). \]
The terms proportional to $\xi_{K}$ form the differential part of the identity, whereas the terms proportional to $\xi_{K,m}$ form the algebraic part:
\[ \Pi^{iK} \Pi^{mJ} + \sum_{r=1}^{p} \Pi^{j_{r}K} \Pi^{ij_{1} \dots m \dots j_{p}} = \Pi^{iM} \Pi^{lJ} \delta_{lM}^{mK}. \]
We will use this algebraic identity to show that $\Pi^{kM} = 0$, whenever $k \in M$. This will prove that $\Pi$ is a $(p+1)$-vector. To do this, choose $m=i=k$, and $K = J = M$ in the above identity. Assume that $m_{q} = k$, where $M = (m_{1} \dots m_{p})$. Then, the only non-trivial term in the sum is the one for $r=q$. Right-hand side vanishes due to skew-symmetry of the symbol $\delta$. Hence, we obtain
\[ 2(\Pi^{kM})^{2} = 0. \]
This proves that $\Pi^{kM} = 0$, and $\Pi$ is thus a $(p+1)$-vector.
\end{proof}

\subsection{Scalar density}
Interestingly, the coordinates $(x^1, \dots x^n)$, in which $\Pi$ has the form (\ref{eq_Pidecomposition}), allow us to define a well-behaved scalar density $|\Pi(x)|$ of weight $-(p+1)$. Let $(y^{1},\dots,y^{n})$ be arbitrary local coordinates. Define the function $|\Pi(x)|$ as
\begin{equation} \label{def_pidensity}
|\Pi(x)| = \det{ \left( \frac{\partial y^{i}}{\partial x^{j}} \right) }^{p+1},
\end{equation}
that is, the Jacobian of the coordinate transformation $y^{i} = y^{i}(x^{k})$. This is indeed a scalar density (with respect to a change $y \mapsto \~y$) of weight $-(p+1)$, as can easily be seen using the chain rule.

For $p=1$, let $\Pi^{ij}$ be the matrix of $\Pi$ in $(y)$ coordinates. We can ask, whether $|\Pi(x)| = \det{\Pi^{ij}}$ whenever $\Pi$ is decomposable. The answer is clearly negative for $n > 2$, where $\det{\Pi^{ij}} = 0$. The case $p=1$, $n=2$ is a special case contained in the next question. Let $p \geq 1$ and $n = p+1$. Let $\Pi^{iJ}$ be the matrix of the vector bundle map $\Pi^{\sharp}$. For $n = p+1$, this is a square $n \times n$ matrix. We can thus ask whether $|\Pi(x)| = \det{ \Pi^{iJ} }$. It is of course modulo the sign, depending on the ordering of the basis of $\Omega^{p}(M)$. Now, see that
\[ \Pi(x) = \ppx{}{1} \^ \cdots \^ \ppx{}{p+1} = |\Pi(x)|^{\frac{1}{n}} \ppy{}{1} \^ \cdots \^ \ppy{}{p+1}. \]
This means that $|\Pi(x)|^{\frac{1}{n}} = \Pi^{1 \dots n}(x)$. The determinant of $\Pi^{iJ}$ is up to sign the $n$-th power of $\Pi^{1 \dots n}$, and thus $\det{ \Pi^{iJ} } = \pm |\Pi(x)|$.

Further, we have to be careful with the dependence of $|\Pi(x)|$ on the choice of the special local coordinates $(x^{1},\dots,x^{n})$. Let $(x'^{1},\dots,x'^{n})$ is another set of such coordinates, that is
\begin{equation} \label{eq_Piintwocoordinates}
 \Pi(x) = \ppx{}{1} \^ \cdots \^ \ppx{}{p+1} = \frac{\partial}{\partial x'^{1}} \^ \cdots \^ \frac{\partial}{\partial x'^{p+1}}.
\end{equation}
Denote by $J$ the Jacobi matrix of the transformation $\~x^{i} = \~x^{i}(x^{k})$. We can split it as
\[ J = \bm{J_{\epsilon}}{K}{L}{M}, \]
where the top-left block $J_{\epsilon}$ is a $(p+1)\times(p+1)$ submatrix corresponding to the first $p+1$ of both sets of coordinates. The condition in (\ref{eq_Piintwocoordinates}) forces $\det{(J_{\epsilon})} = 1$ and $L = 0$. We thus get the important observation that
\[ \det{J} = \det{M}, \]
and moreover $\det{M} = \det{M}(x^{j>p+1})$. This implies that $|\Pi(x)|$ transforms, with respect to the change the special coordinates $(x)$, as
\begin{equation} \label{eq_Pidensitytransform}
|\Pi(x)| = \det{(M)}^{p+1} |\Pi(x)|',
\end{equation}
where $|\Pi(x)|'$ is calculated with respect to $(x')$ coordinates on $M$.
\section{Background independent gauge} \label{sec_BIGsupplement}
\subsection{Pseudoinverse of a $2$-form}
\begin{tvrz} \label{tvrz_2formpseudoinverse}
Let $V$ be a finite-dimensional vector space. Let $g$ be an inner product on $V$, and $C \in \Lambda^{2}V^{\ast}$ a $2$-form on $V$. Let $P: V \rightarrow V$ an orthogonal projector, such that $\ker(P) = \ker(C)$. Then there exists a unique $2$-vector $\Pi$, such that
\[ \Pi C = P \ , \ P \Pi = \Pi. \]
\end{tvrz}
\begin{proof}
Let $\mathbf{C}$, $\mathbf{g}$ and $\mathbf{P}$ be the matrices of $C$, $g$, $P$, respectively, in an arbitrary fixed basis of $V$. First construct the map $\~C \equiv g^{-1}C: V \rightarrow V$. This map is skew-symmetric with respect to $g$. Indeed, we have
\[ \mathbf{g}^{-1} ( \mathbf{g}^{-1} \mathbf{C} )^{T} \mathbf{g} = -\mathbf{g}^{-1} \mathbf{C}. \]
Denote $\mathbf{\~C} = \mathbf{g}^{-1} \mathbf{C}$. Let $\mathbf{A}$ be the matrix diagonalizing $\mathbf{g}$, that is $\mathbf{A}^{T} \mathbf{g} \mathbf{A} = \mathbf{1}$. Finally, define the matrix $\mathbf{\~C'} = \mathbf{A}^{-1} \mathbf{\~C} \mathbf{A}$. This matrix is skew-symmetric (in the ordinary sense). Standard linear algebra says that there exists a standard block-diagonal form of the matrix $\mathbf{\~C'}$. In more detail, one can find an orthogonal matrix $\mathbf{O}$ and a matrix $\mathbf{\Sigma}$, such that $\mathbf{\~C'} = \mathbf{O} \mathbf{\Sigma} \mathbf{O}^{T}$, where $\mathbf{\Sigma}$ has the form
\[ \mathbf{\Sigma} = \diag \Big( \bm{0}{\lambda_{1}}{-\lambda_{1}}{0}, \dots, \bm{0}{\lambda_{k}}{-\lambda_{k}}{0}, 0, \dots, 0 \Big).
\]
where $k = \frac{1}{2} \mbox{rank}(\mathbf{\~C'})$, and $\lambda_{1}, \dots, \lambda_{k} > 0$. Note that the matrix $\mathbf{O}$ is not unique, and the matrix $\mathbf{\Sigma}$ is unique up to the reordering of the $2 \times 2$ blocks.

This shows that we can write $\mathbf{C} = \mathbf{g} \mathbf{A} \mathbf{O} \mathbf{\Sigma} \mathbf{O}^{T} \mathbf{A}^{-1}$. Define $\mathbf{\Delta}_{2k} = \diag(1, \dots, 1, 0, \dots, 0)$, where the number of $1$'s is $2k$. The (unique) matrix $\mathbf{P}$ can be now written as $\mathbf{P} = \mathbf{A} \mathbf{O} \mathbf{\Delta}_{2k} \mathbf{O}^{T} \mathbf{A}^{-1}$.
Let $\mathbf{\Pi}$ be the matrix of a bivector we are looking for. The equation $\Pi C = P$ translates into
\[ \mathbf{\Pi g A O \Sigma} \mathbf{O}^{T} \mathbf{A}^{-1} = \mathbf{A} \mathbf{O} \mathbf{\Delta}_{2k} \mathbf{O}^{T} \mathbf{A}^{-1}. \]
We thus get that $ (\mathbf{O}^{T} \mathbf{A}^{-1} \mathbf{\Pi} \mathbf{g} \mathbf{A} \mathbf{O} ) \mathbf{\Sigma} = \mathbf{\Delta}_{2k}$. This means that
\[ \mathbf{\Pi} = \mathbf{A} \mathbf{O} \mathbf{\Sigma}^{+} \mathbf{O}^{T} \mathbf{A}^{-1} \mathbf{g}^{-1}, \]
where $\mathbf{\Sigma}^{+} \mathbf{\Sigma} = \mathbf{\Delta}_{2k}$. Now it is easy to see that $\Pi$ is a bivector, if and only if $\mathbf{\Sigma}^{+}$ is, and that $P \Pi = \Pi$ holds if and only if $\mathbf{\Delta_{2k}} \mathbf{\Sigma}^{+} = \mathbf{\Sigma}^{+}$. This fixes $\mathbf{\Sigma}^{+}$ and thus $\mathbf{\Pi}$ uniquely. It coincides with the Moore-Penrose pseudoinverse of the matrix $\mathbf{\Sigma}$, and it is given, in the block form, as
\[ \mathbf{\Sigma}^{+} = \bm{\mathbf{\Sigma_{0}}^{-1}}{0}{0}{0}, \]
where $\mathbf{\Sigma_{0}}$ is the invertible top left $2k \times 2k$ block of $\mathbf{\Sigma}$.
\end{proof}

\subsection{Integrable forms}
Let $M$ be a smooth manifold, and let $C$ be a $(p+1)$-form on $M$. The form $C$ is called an integrable form if it holds
\begin{equation} \label{def_iforms1} C(\mathbf{P}) \^ C = 0, \end{equation}
\begin{equation} \label{def_iforms2} C(\mathbf{P}) \^ dC = 0, \end{equation}
for all $\mathbf{P} \in \vf{p}$, where on the left-hand side $C(\mathbf{P})$ denotes the value of the induced  vector bundle morphism $C: \Lambda^{p} TM \rightarrow T^{\ast}M$ when evaluated on $(\mathbf{P})$. The condition (\ref{def_iforms1}) is in fact a very restrictive one. Also, it is very similar to the algebraic part of Nambu-Poisson fundamental identity:
\begin{lemma}
Let $C$ be a $(p+1)$-form. Then $C$ satisfies (\ref{def_iforms1}) if and only if it is decomposable around every point $x \in M$, such that $C(x) \neq 0$. That means that there exists a $(p+1)$-tuple $(\alpha_{1}, \dots, \alpha_{p+1})$ of linearly independent $1$-forms , such that locally
\[ C = \alpha_{1} \^ \dots \^ \alpha_{p+1}. \]
\end{lemma}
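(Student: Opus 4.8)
The plan is to prove the two implications separately, the substantive content being that (\ref{def_iforms1}) forces local decomposability.

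For the easy direction, suppose $C = \alpha_1 \wedge \cdots \wedge \alpha_{p+1}$ with $\alpha_1,\dots,\alpha_{p+1}$ linearly independent. For any $\mathbf{P} \in \vf{p}$ the contraction $C(\mathbf{P}) = \iota_{\mathbf{P}} C$ is a $1$-form, and since contracting $p$ of the $p+1$ factors of a decomposable form leaves a linear combination of the surviving factors, $C(\mathbf{P})$ lies in $\mathrm{span}(\alpha_1,\dots,\alpha_{p+1})$. Wedging such a $1$-form back with $C = \alpha_1 \wedge \cdots \wedge \alpha_{p+1}$ repeats one factor, so $C(\mathbf{P}) \wedge C = 0$. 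This is (\ref{def_iforms1}).

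For the converse I would fix a point $x$ with $C(x) \neq 0$ and work first in the fibre $V = T_x M$. The central object is the \emph{associated subspace}
\[ W^\ast_x := \{\, \iota_{\mathbf{P}} C(x) \ | \ \mathbf{P} \in \Lambda^p T_x M \,\} \subseteq T^\ast_x M, \]
the image of the bundle map $\Lambda^p TM \to T^\ast M$ induced by $C$. Hypothesis (\ref{def_iforms1}) says exactly that every $\xi \in W^\ast_x$ satisfies $\xi \wedge C(x) = 0$. I would then invoke the elementary division property: for a nonzero $1$-form $\xi$, the condition $\xi \wedge C = 0$ is equivalent to $\xi$ dividing $C$, i.e. $C = \xi \wedge \beta$ (proved by completing $\xi$ to a coframe and separating the basis monomials that contain $\xi$). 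Applying this repeatedly to a basis $\xi_1,\dots,\xi_r$ of $W^\ast_x$, where $r = \dim W^\ast_x$, and peeling off one factor at a time (using linear independence to keep the remainder in a complementary coframe), I obtain a factorization $C(x) = \xi_1 \wedge \cdots \wedge \xi_r \wedge \omega$ with $\deg \omega = p+1-r$; in particular $r \leq p+1$.

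The crux, and the step I expect to be the main obstacle, is to show $\omega$ has degree zero, i.e. $r = p+1$. Suppose instead $\deg \omega \geq 1$. Picking a nonzero monomial of $\omega$ in the complementary coframe and contracting $C(x)$ with the $p$-vector that annihilates $\xi_1,\dots,\xi_r$ together with all but one factor of that monomial, I would produce a $1$-form $\iota_{\mathbf{P}} C(x)$ lying outside $\mathrm{span}(\xi_1,\dots,\xi_r)$ — contradicting $\iota_{\mathbf{P}} C(x) \in W^\ast_x = \mathrm{span}(\xi_1,\dots,\xi_r)$. Hence $r = p+1$ and $C(x) = \lambda\,\xi_1 \wedge \cdots \wedge \xi_{p+1}$ with $\lambda \neq 0$, which is decomposable after absorbing $\lambda$ into one factor. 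To upgrade this pointwise result to the required smooth local statement, note that $C(x) \neq 0$ forces $C \neq 0$ on a neighbourhood, where the fibrewise argument gives $\dim W^\ast = p+1$; thus $W^\ast$ is a smooth rank-$(p+1)$ subbundle of $T^\ast M$ near $x$. Choosing a smooth local coframe $\alpha_1,\dots,\alpha_{p+1}$ for $W^\ast$, the section $C$ is a smooth section of the line bundle $\Lambda^{p+1} W^\ast$, so $C = f\,\alpha_1 \wedge \cdots \wedge \alpha_{p+1}$ for a smooth function $f$; absorbing $f$ into $\alpha_1$ gives the desired local decomposition into linearly independent $1$-forms.
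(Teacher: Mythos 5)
Your proof is correct, but it takes a genuinely different route from the paper's. The paper argues by induction on $p$: contracting the hypothesis $C(\mathbf{P}) \wedge C = 0$ with a vector field $V$ gives the identity $\io_{V}(C(\mathbf{P}))\, C = C(\mathbf{P}) \wedge \io_{V}(C)$, so choosing $V$ and $\mathbf{P}$ with $\lambda \equiv \io_{V}(C(\mathbf{P})) \neq 0$ near $x$ exhibits $C = \lambda^{-1}\, C(\mathbf{P}) \wedge \io_{V}(C)$, where the $p$-form $\io_{V}(C)$ is checked to satisfy (\ref{def_iforms1}) again and is therefore decomposable by the induction hypothesis; the factors are thus produced directly as smooth forms built from $C$, so no separate smoothness step is needed. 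You instead work pointwise in a single fibre with the image $W^{\ast}_{x}$ of the contraction map, use the division property ($\xi \wedge C = 0$ iff $\xi$ divides $C$) to peel off a basis of $W^{\ast}_{x}$, and exclude a residual factor of positive degree by exhibiting a contraction that would land outside $W^{\ast}_{x}$ — in effect the classical characterization of decomposability by the divisor space attaining its maximal dimension $p+1$. The cost of your approach is the extra step (which you handle correctly) of upgrading the pointwise statement to a smooth local one via the constant-rank subbundle $W^{\ast}$ and a local frame for it; what it buys is a self-contained linear-algebra argument that makes explicit the identification of the span of the factors with the image of the contraction map, which the paper's induction leaves implicit. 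You also verify the easy implication (decomposable implies (\ref{def_iforms1})), which the paper omits as obvious.
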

\begin{proof}
Let us proceed by induction on $p$.
The $p=0$ case is a trivial statement, any $1$ form is decomposable. Now choose $p>0$.
Assume that statement holds for all $p$-forms, and let $C$ be a $(p+1)$-form satisfying (\ref{def_iforms1}). We have to show that it is decomposable.

Let $x \in M$, such that $C(x) \neq 0$.
First, see that for any $V \in \vf{}$, such that $(\io_{V}(C))(x) \neq 0$, the $p$-form $\io_{V}(C)$ satisfies (\ref{def_iforms1}), and thus, by induction hypothesis, is decomposable. Let us take any $\mathbf{Q} \in \vf{p-1}$. We have to show that
\[ (\io_{V}C)(\mathbf{Q}) \^ (\io_{V}C) = 0. \]
But this can be rewritten as
\[ \io_{V}\big( C(V \^ \mathbf{Q}) \^ C \big) = 0, \]
which follows from the assumptions on $C$, taking $\mathbf{P} = V \^ \mathbf{Q}$. Second, take the original condition (\ref{def_iforms1}) and apply $\io_{V}$ to both sides with an arbitrary $V \in \vf{}$. One gets
\[ \io_{V}(C(\mathbf{P})) \cdot C - C(\mathbf{P}) \^ \io_{V}(C) = 0. \]
But $\io_{V}(C(\mathbf{P}))$ is a scalar function, and since $C$ is a nonzero $(p+1)$-form at $x$, there have to exist $V \in \vf{}$ and $\mathbf{P} \in \vf{p}$, such that $\lambda \equiv \io_{V}(C(\mathbf{P})) \neq 0$, at least at some neighborhood of $x$.  Thus, locally we can write
\[ C = \frac{1}{\lambda} C(\mathbf{P}) \^ \io_{V}(C). \]
Since $\lambda(x) \neq 0$, also $(\io_{V}(C))(x) \neq 0$.  We can now apply the induction hypothesis to this $p$-form to get $p$ linearly independent $1$-forms $(\alpha_{1}, \dots, \alpha_{p})$, such that
\[ \io_{V}C = \alpha_{1} \^ \dots \^ \alpha_{p}. \]
This finishes the proof, because taking $\alpha_{p+1} = \frac{(-1)^{p}}{\lambda} C(\mathbf{P})$ leads to the desired decomposition.
\end{proof}

Let us now clarify where integrable forms got their name from:

\begin{definice}
Let $C$ is a $(p+1)$-form. Denote by $M'$ the open submanifold of $M$, where $C \neq 0$. The kernel distribution $K$ of $C$ is a distribution on $M'$, defined at every $x \in M'$ as
\[ K_{x} = \{ V \in T_{x}M \ | \ \io_{V}(C(x)) = 0 \}. \]
Note that this distribution is not necessarily a smooth one.
\end{definice}

We can now relate integrability of distributions to the integrability od forms.

\begin{lemma}
Let $C$ be a $(p+1)$-form.
Then $C$  integrable if and only if $K$ is an integrable $(n-(p+1))$-dimensional regular smooth distribution on $M'$.
\end{lemma}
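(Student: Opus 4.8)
The plan is to prove the two directions of the equivalence separately, using the decomposability result for integrable forms (the preceding lemma) as the main structural input. First I would fix notation: on the open submanifold $M'$ where $C\neq 0$, the condition (\ref{def_iforms1}) is already known to be equivalent to local decomposability, so around each point of $M'$ we may write $C = \alpha_1 \^ \dots \^ \alpha_{p+1}$ with $(\alpha_1,\dots,\alpha_{p+1})$ pointwise linearly independent $1$-forms. The kernel distribution $K$ is then the simultaneous kernel of these $1$-forms, so $K_x = \bigcap_{r=1}^{p+1} \ker(\alpha_r(x))$, which immediately shows that $K$ is a smooth regular distribution of rank $n-(p+1)$ on $M'$. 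This handles the regularity and smoothness claims on both sides at once, and reduces the remaining content to relating integrability of $K$ (in the Frobenius sense) to condition (\ref{def_iforms2}).

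For the implication that integrability of $C$ forces $K$ to be involutive, I would take two local sections $V,W \in \Gamma(K)$, so that $\io_V C = \io_W C = 0$, and show $\io_{[V,W]}C = 0$. The natural tool is Cartan's magic formula: $\mathcal{L}_V C = \io_V dC + d\io_V C = \io_V dC$, and then $\io_{[V,W]}C = \mathcal{L}_V \io_W C - \io_W \mathcal{L}_V C = -\io_W \io_V dC$. So the task is to show $\io_W \io_V dC = 0$ for all $V,W$ annihilating $C$. Here condition (\ref{def_iforms2}), namely $C(\mathbf{P}) \^ dC = 0$ for all $\mathbf{P}\in\vf{p}$, is exactly what I need: writing $C = \alpha_1\^\dots\^\alpha_{p+1}$ locally, the forms $\alpha_r$ are (up to the scalar and the factor $C(\mathbf{P})$ appearing in the decomposition proof) of the shape $C(\mathbf{P})$, so (\ref{def_iforms2}) says $\alpha_r \^ dC = 0$ for each $r$. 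This means $dC$ lies in the ideal generated by the $\alpha_r$, i.e. $dC = \sum_r \alpha_r \^ \beta_r$ for some $p$-forms $\beta_r$; contracting with $V,W\in\Gamma(K)$ (which kill every $\alpha_r$) then kills every term, giving $\io_W\io_V dC = 0$ and hence $[V,W]\in\Gamma(K)$.

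For the converse, I would assume $K$ is an integrable regular smooth distribution of rank $n-(p+1)$ and recover (\ref{def_iforms2}); note that (\ref{def_iforms1}) holds automatically because $K$ integrable (or even just regular of the right rank) forces $C$ to be locally decomposable, which by the lemma is equivalent to (\ref{def_iforms1}). By Frobenius, integrability of $K$ means that around each point there are coordinates in which $K$ is spanned by the last $n-(p+1)$ coordinate vector fields, so locally $C = \lambda\, dx^1\^\dots\^dx^{p+1}$ for a nowhere-zero function $\lambda$. Then $dC = d\lambda \^ dx^1 \^ \dots \^ dx^{p+1}$, and since $C(\mathbf{P})$ is, for every $\mathbf{P}$, a linear combination of $dx^1,\dots,dx^{p+1}$, the wedge $C(\mathbf{P})\^dC$ contains a repeated factor among $dx^1,\dots,dx^{p+1}$ and therefore vanishes; this is exactly (\ref{def_iforms2}).

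The main obstacle I anticipate is the bookkeeping connecting the abstract contraction identities to the explicit ``$C(\mathbf{P})$'' form of the two integrability conditions, together with the need to keep everything local on $M'$ and to verify that the rank of $K$ is genuinely constant (equal to $n-(p+1)$) rather than merely bounded. The cleanest route is to do all computations in the Frobenius coordinates or the decomposed frame $C=\alpha_1\^\dots\^\alpha_{p+1}$, where both conditions become transparent; the only care needed is to phrase the argument for arbitrary $\mathbf{P}\in\vf{p}$, using that the $1$-forms $\io_{\mathbf{Q}}C$ for $\mathbf{Q}\in\vf{p}$ span precisely the annihilator of $K$, so that (\ref{def_iforms2}) is equivalent to the single statement $dC \in \langle \alpha_1,\dots,\alpha_{p+1}\rangle$, which is in turn the Frobenius involutivity of $K$.
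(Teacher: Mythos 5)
Your overall route is the same as the paper's --- decompose $C=\alpha_1\^\dots\^\alpha_{p+1}$ on $M'$ using the preceding lemma, identify $K$ as the common kernel of the $\alpha_r$ (which gives smoothness and constant rank $n-(p+1)$), and run Frobenius in both directions --- and your converse direction (adapted coordinates, $C=\lambda\, dx^{1}\^\dots\^dx^{p+1}$, repeated factor in $C(\mathbf{P})\^dC$) is correct. The forward direction, however, contains a genuine gap at its central step. From (\ref{def_iforms2}) you extract only that $dC$ lies in the \emph{ideal} generated by $\alpha_1,\dots,\alpha_{p+1}$, writing $dC=\sum_r\alpha_r\^\beta_r$ (note the $\beta_r$ must have degree $p+1$, not $p$), and then claim that contracting with $V,W\in\Gamma(K)$ ``kills every term''. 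It does not: since $\io_V\alpha_r=\io_W\alpha_r=0$, one computes $\io_W\io_V(\alpha_r\^\beta_r)=\alpha_r\^\io_W\io_V\beta_r$, which need not vanish. For instance, the $3$-form $dx\^dz\^dw$ on $\R^{4}$ lies in the ideal generated by $dx$ and $dy$, yet $\io_{\partial_w}\io_{\partial_z}(dx\^dz\^dw)=dx\neq 0$ although $\partial_z,\partial_w$ annihilate both $dx$ and $dy$. So ideal membership of $dC$ is strictly weaker than what your argument requires.

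The repair uses the full strength of (\ref{def_iforms2}). As you note, the $1$-forms $C(\mathbf{P})$ span all of $\operatorname{span}\{\alpha_1,\dots,\alpha_{p+1}\}$, so (\ref{def_iforms2}) says $\alpha_r\^dC=0$ for \emph{every} $r$; divisibility by $p+1$ pointwise linearly independent $1$-forms forces $dC=\eta\^\alpha_1\^\dots\^\alpha_{p+1}=\eta\^C$ for some $1$-form $\eta$. Then $\io_V dC=\eta(V)\,C-\eta\^\io_V C=\eta(V)\,C$ and $\io_W\io_V dC=\eta(V)\,\io_W C=0$, which together with your (correct) identity $\io_{[V,W]}C=-\io_W\io_V dC$ gives $[V,W]\in\Gamma(K)$. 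Equivalently --- and this is the paper's route --- one rewrites (\ref{def_iforms2}) as $d\alpha_j\^\alpha_1\^\dots\^\alpha_{p+1}=0$ for each $j$, i.e.\ each $d\alpha_j$ (not $dC$) lies in the ideal generated by the $\alpha_i$, which is the standard Frobenius criterion for the annihilator codistribution of $K$. Your closing claim that (\ref{def_iforms2}) is ``equivalent to $dC\in\langle\alpha_1,\dots,\alpha_{p+1}\rangle$'' conflates these two ideal conditions; with that step corrected the proof goes through.
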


\begin{proof}
First assume that $C$ is an integrable $(p+1)$-form. Then by the previous lemma, around every point of $x \in M'$, there exists a $(p+1)$-tuple of linearly independent $1$-forms, such that locally
\begin{equation} \label{eq_Cdecomposed}
 C = \alpha_{1} \^ \dots \^ \alpha_{p+1}.
\end{equation}
The subspace $K_{x}$ can be determined easily as
\[ K_{x} = \{ V \in T_{x}M \ | \ \io_{V}(\alpha_{i}(x)) = 0, \forall i \in \{1, \dots, p+1\} \}. \]
This is a set of $k$ linearly independent linear equations for the components of $V$. The dimension of $K_{x}$ is thus $n - (p+1)$. To see that this is a smooth regular distribution, note that $K$ is the kernel of a smooth vector bundle morphism of a constant rank, and hence a subbundle of $TM'$. Hence, a smooth distribution in $M'$.

To see that it is also integrable, plug the expression (\ref{eq_Cdecomposed}) into the second defining equation (\ref{def_iforms2}). It turns out that it is equivalent to
\begin{equation} \label{eq_dalphaterms}
d\alpha_{j} \^ \alpha_{1} \^ \dots \^ \alpha_{p+1} = 0,
\end{equation}
for all $j \in \{1, \dots, p+1\}$. Now take any $V \in \Gamma(K)$, and plug it into (\ref{eq_dalphaterms}). It gives $\io_{V}(d\alpha_{j}) = 0$ for all $j \in \{1, \dots, p+1\}$. But this is, using the Cartan formula for $d\alpha_{j}$, equivalent to involutivity of the subbundle $K$ under the commutator of vector fields, which is in turn, using the Frobenius integrability theorem, equivalent to the integrability of $K$.

Conversely, assume that $K$ is integrable $((n-(p+1))$-dimensional regular smooth distribution. At every $x \in M'$, there is a neighborhood $U_{x} \ni x$, and a set of local coordinates $(x^{1}, \dots, x^{(n-(p+1))}, y^{1}, \dots, y^{p+1})$, such that sections of the subbundle $K$ are on $U_{x}$ spanned by $(\ppx{}{1}, \dots, \ppx{}{(n-(p+1))})$. Then $C$ has to be annihilated by all vectors of $K$, so it has to have the local form
\begin{equation} \label{eq_Clocaliny}
 C = \lambda \cdot dy^{1} \^ \dots \^ dy^{p+1}.
\end{equation}
We see that this $C$ clearly satisfies (\ref{def_iforms1}). Since we are on $M'$, we have $\lambda \neq 0$. We set $\alpha_{1} = \lambda dy^{1}$, and  $\alpha_i=dy^{i}$ for $i=2,\dots, p+1$. The second condition for integrable $(p+1)$-forms translates as (\ref{eq_dalphaterms}). Obviously, this holds for the above defined $\alpha_{j}$'s.

At $x \in M \setminus M'$ the integrability conditions (\ref{def_iforms1}, \ref{def_iforms2}) hold trivially and we can conclude that $C$ is an integrable $(p+1)$-form.
\end{proof}

\begin{rem}
One can extend the distribution $K$ to the whole manifold $M$. For each $x \in M \setminus M'$, define $K_{x} = \{0\}$. By this extension one gets a smooth singular distribution on $M$. However, even for integrable $(p+1)$-forms, $K$ is not integrable in general. For details see \cite{dufour2005poisson}.
\end{rem}

Let us conclude this section by relating the concepts of integrable $(p+1)$-forms to Nambu-Poisson structures. This is given by the following lemma.

\begin{lemma}
Let $M$ be an orientable smooth manifold. Let $\Omega$ be the corresponding volume form.
Let $C$ be a $(p+1)$-form on $M$. Define a $(p+1)$-vector $\Pi$ by equation
\[ \io_{\Pi} \Omega = C. \]
Then $\Pi$ is a Nambu-Poisson $(n-(p+1))$-vector if and only if $C$ is an integrable $(p+1)$-form.
\end{lemma}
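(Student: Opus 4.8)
The plan is to localise and reduce both conditions to the same normal form. Both ``$\Pi$ is Nambu-Poisson'' and ``$C$ is integrable'' are local statements that hold trivially where the objects vanish: since $\io_{\Pi}\Omega = C$ and $\Omega$ is nondegenerate, $C(x)=0$ exactly when $\Pi(x)=0$, and on this locus \eqref{def_iforms1}, \eqref{def_iforms2} and \eqref{def_npfi} are vacuous. Hence I would restrict to the open set $M' = \{x : C(x)\neq 0\} = \{x : \Pi(x)\neq 0\}$. The guiding observation is that contraction with the volume form is a Hodge-type duality between the $(n-(p+1))$-vector $\Pi$ and the $(p+1)$-form $C$, under which $\Pi$ is decomposable iff $C$ is, and the characteristic distribution $\Img(\Pi^{\sharp})$ coincides pointwise with the kernel distribution $K$ of $C$; matching the integrability of $K$ with that of $\Img(\Pi^{\sharp})$ is what drives the equivalence.

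For the forward direction, assume $\Pi$ is Nambu-Poisson. Around any point of $M'$ the normal form \eqref{eq_Pidecomposition} supplies coordinates with $\Pi = \partial_1 \wedge \dots \wedge \partial_{n-(p+1)}$. Writing $\Omega = \rho\, dx^1 \wedge \dots \wedge dx^n$ with $\rho$ nowhere zero, a direct contraction gives $C = \io_{\Pi}\Omega = \pm\,\rho\, dx^{n-p}\wedge\dots\wedge dx^n$. This is manifestly decomposable, so \eqref{def_iforms1} holds; moreover $dC = \pm\, d\rho \wedge dx^{n-p}\wedge\dots\wedge dx^n$, while $C(\mathbf{P})$ lies in $\mathrm{span}(dx^{n-p},\dots,dx^n)$ for every $\mathbf{P}\in\vf{p}$, so $C(\mathbf{P})\wedge dC = 0$, which is \eqref{def_iforms2}. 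Thus $C$ is integrable.

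For the converse, assume $C$ is integrable. By the lemma relating integrability of $C$ to that of its kernel distribution $K$, the distribution $K$ is regular and integrable of rank $n-(p+1)$ on $M'$, so Frobenius gives coordinates with $K = \mathrm{span}(\partial_1,\dots,\partial_{n-(p+1)})$; being annihilated by $K$, the form is $C = \lambda\, dx^{n-p}\wedge\dots\wedge dx^n$ with $\lambda$ nowhere zero, and solving $\io_{\Pi}\Omega = C$ gives $\Pi = g\,\partial_1\wedge\dots\wedge\partial_{n-(p+1)}$ with $g=\pm\lambda/\rho$ smooth and nowhere vanishing. It remains to verify \eqref{def_npfi}. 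The coordinate multivector $\partial_1\wedge\dots\wedge\partial_{n-(p+1)}$ is the Nambu-Poisson local model \eqref{eq_Pidecomposition}, and for order $n-(p+1)>2$ its product with $g$ is again Nambu-Poisson by Lemma \ref{lem_nptensormultiple}; the low-order cases are immediate, a nowhere-vanishing multiple of $\partial_1\wedge\partial_2$ being Poisson precisely because its characteristic distribution is the involutive coordinate distribution. Hence $\Pi$ is Nambu-Poisson.

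The step I expect to be most delicate is this last verification in the converse. Frobenius yields decomposability of $\Pi$ and involutivity of its characteristic distribution, but the fundamental identity \eqref{def_npfi} is genuinely stronger than these two facts in general; that the nowhere-vanishing scalar $g$ may be carried along is exactly the content of Lemma \ref{lem_nptensormultiple} for order at least three, and the warning there (``for $p=1$ this is not true in general'') shows the Poisson case must be treated on its own, using crucially that $K$ is realised as a coordinate distribution. A secondary but necessary bookkeeping point is tracking the sign and the nowhere-vanishing density relating $\Omega$ to the coordinate volume, which guarantees that $g$ is a smooth, nonvanishing function and hence that $\Pi$ is a genuine $(n-(p+1))$-vector on all of $M$.
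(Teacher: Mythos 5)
Your proof is correct and follows essentially the same route as the paper: localize to the set where $C\neq 0$, use the Nambu-Poisson normal form (\ref{eq_Pidecomposition}) for the forward direction, and use Frobenius coordinates for the kernel distribution plus the function-multiple property (Lemma \ref{lem_nptensormultiple}, equivalently Lemma \ref{lem_nptoplevel} applied leafwise) for the converse. Your explicit separate treatment of the order-two (Poisson bivector) case, where Lemma \ref{lem_nptensormultiple} does not apply, is a point the paper passes over more quickly, and is a welcome bit of extra care rather than a deviation in method.
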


\begin{proof}
Clearly, $\Pi(x) = 0$ if and only if $C(x) = 0$. Let $\Pi$ be a Nambu-Poisson tensor. By previous comment, at singular points of $\Pi$, $C$ vanishes. The conditions on integrability are, at these points, satisfied trivially. Assume that $\Pi(x) \neq 0$. Then there exist local coordinates $(x^{1}, \dots, x^{n})$ around $x$, such that
\[ \Pi = \ppx{}{1} \^ \dots \^ \ppx{}{n-(p+1)}. \]
In these coordinates, the volume form $\Omega$ is
\[ \Omega = \omega \cdot dx^{1} \^ \dots \^ dx^{n}, \]
where $\omega \neq 0$. We thus see that $C$ has the explicit form
\[ C = \omega \cdot dx^{n-(p+1)+1} \^ \dots \^ dx^{n}. \]
It is easy to check that it satisfies both integrability conditions (\ref{def_iforms1},\ref{def_iforms2}).

The converse statement follows basically from the proof of the previous lemma. There, we have shown that $C$ can be, for an integrable $(p+1)$-form, written (around any point where $C(x) \neq 0$) in the local form (\ref{eq_Clocaliny}). Writing the volume form in these local coordinates as
\[ \Omega = g \cdot dx^{1} \^ \dots dx^{(n-(p+1))} \^ dy^{1} \^ \dots dy^{p+1}, \]
one finds the local expression for $\Pi$ as
\[ \Pi = \frac{\lambda}{g} \cdot \ppx{}{1} \^ \dots \^ \ppx{}{n-(p+1)}. \]
Note that this is a top-level multivector field on the submanifold $N'$. In the view of lemma \ref{lem_nptoplevel}, one would expect that this is enough. Inspection of the fundamental identity shows that all partial derivatives are contracted with the components of $\Pi$, so in the fundamental identity there are no partial derivatives in transversal directions. We can now apply (the proof of) lemma \ref{lem_nptoplevel} to conclude that $\Pi$ is a Nambu-Poisson tensor on $M$.
\end{proof}
\bibliography{nambu_sigma}
\end{document}